\title{Applications of incidence bounds in point covering problems
\footnote{Research funded by MADALGO, Center for Massive Data Algorithmics, a Center of the Danish National Research Foundation, grant DNRF84}
}
\author{Peyman Afshani,
    Edvin Berglin,
    Ingo van Duijn,
    Jesper Sindahl Nielsen\\
    \texttt{\{peyman,berglin,ivd,jasn\}@cs.au.dk}}
\newcommand{\rthm}[1]{Theorem~\ref{thm:#1}}
\newcommand{\rlem}[1]{Lemma~\ref{lem:#1}}
\newcommand{\robs}[1]{Observation~\ref{obs:#1}}
\newcommand{\rsec}[1]{Section~\ref{sec:#1}}
\newcommand{\ostar}[1]{\mathcal{O}^*\!\left(#1\right)}
\newcommand{\BigO}[2][\mathcal{O}]{#1\!\left(#2\right)}
\newcommand{\uni}{\mathcal{U}}
\newcommand{\tuple}[1]{\left\langle#1\right\rangle}
\newcommand{\set}[1]{\left\{#1\right\} }
\newcommand{\ST}{Szemer\'edi-Trotter}
\newcommand{\reals}{\mathbb{R}}
\newcommand{\algo}[1]{\textsc{#1}}
\newcommand{\prob}[1]{\textsl{#1}}
\newcommand{\C}{\mathscr{C}}
\newcommand{\LL}{\mathcal{L}}
\newcommand{\ie}{i.e.\ }
\newcommand{\eg}{e.g.\ }
\newcommand{\para}[1]{\vspace{2mm} \noindent\textbf{#1}}
\newtheorem{theorem}{Theorem}
\newtheorem{observation}[theorem]{Observation}
\newtheorem{definition}[theorem]{Definition}
\newtheorem{lemma}[theorem]{Lemma}
\newcommand{\thmheadfont}{\textcolor{darkgray}{$\blacktriangleright$}\nobreakspace\sffamily\bfseries}
\newenvironment{repeatenv}[2]%
  {\medskip \noindent {\thmheadfont #1~\ref{#2}.}\ \slshape}
  {\normalfont}
\newenvironment{repeatlemma}[1]{\begin{repeatenv}{Lemma}{#1}}      {\end{repeatenv}}
\renewcommand{\epsilon}{\varepsilon}
\begin{document}
  
  \maketitle

\begin{abstract}
  In the \prob{Line Cover} problem a set of $n$ points is given and the task is to cover the points using either the minimum number of lines or at most $k$ lines.
  In \prob{Curve Cover}, a generalization of \prob{Line Cover}, 
  the task is to cover the points using curves with $d$ degrees of freedom.
  Another generalization is the \prob{Hyperplane Cover} problem where points in $d$-dimensional space are to be covered by hyperplanes.
  All these problems have kernels of polynomial size, where the parameter is the minimum number of lines, curves, or hyperplanes needed.

  First we give a non-parameterized algorithm for both problems in $\ostar{2^n}$ (where the $\ostar{\cdot}$ notation hides polynomial factors of $n$) time and polynomial space, beating a previous exponential-space result.
  Combining this with incidence bounds similar to the famous Szemer\'edi-Trotter bound, 
  we present a \prob{Curve Cover} algorithm with running time $\ostar{(Ck/\log k)^{(d-1)k}}$, where $C$ is some constant.
  Our result improves the previous best times $\ostar{(k/1.35)^k}$ for \prob{Line Cover} (where $d=2$), $\ostar{k^{dk}}$ for general \prob{Curve Cover}, as well as a few other
  bounds for covering points by parabolas or conics. 
  We also present an algorithm for \prob{Hyperplane Cover} in $\reals^3$ with running time $\ostar{(Ck^2/\log^{1/5}\!k)^{k}}$, 
  improving on the previous time of $\ostar{(k^2/1.3)^{k}}$.
\end{abstract}

\newpage
\section{Introduction}
In the \prob{Line Cover} problem a set of points in $\reals^2$ is given and the task is to cover them using either the minimum number of lines, or at most $k$ lines where $k$ is given as a parameter in the input.
It is related to \prob{Minimum Bend Euclidean TSP} and has been studied in connection with facility location problems~\cite{estivill2011fpt,megiddo1982complexity}.
The \prob{Line Cover} problem is one of the few low-dimensional geometric problems that are known to be NP-complete~\cite{megiddo1982complexity}.
Furthermore \prob{Line Cover} is APX-hard, i.e.,\ it is NP-hard to approximate within factor $(1+\varepsilon)$ for arbitrarily small $\varepsilon$~\cite{kumar2000hardness}.
Although NP-hard, \prob{Line Cover} is fixed-parameter tractable when parameterized by its solution size $k$ so any solution that is ``not too large'' can be found quickly.

One generalization of the \prob{Line Cover} problem is the \prob{Hyperplane Cover} problem, where the task is to use the minimum number of hyperplanes to cover points in  $d$-dimensional space.
Another generalization is to cover points with algebraic curves, e.g.\ circles, ellipses, parabolas, or bounded degree polynomials.
These can be categorized as covering points in an arbitrary dimension space using algebraic curves with $d$ degrees of freedom and at most $s$ pairwise intersections.
We call this problem \prob{Curve Cover}.
The first parameterized algorithm that was presented for \prob{Line Cover} runs in time $\ostar{k^{2k}}$~\cite{langerman2005covering}.\footnote{Throughout the paper we use the 
$\ostar{\cdot}$
 notation to hide polynomial factors of a superpolynomial function.}
This algorithm generalizes to generic settings, such as \prob{Curve Cover} and \prob{Hyperplane Cover}, obtaining the
running time $\ostar{k^{dk}}$ where $d$ is the degree of the freedom of the curves or the dimension of the space for hyperplane cover.

The first improvement to the aforementioned generic algorithm reduced the running time to  $\ostar{(k/2.2)^{dk}}$ for the \prob{Line Cover} problem \cite{grantson2006covering}.
The best algorithm for the \prob{Hyperplane Cover} problem, including \prob{Line Cover}, runs in $\ostar{k^{(d-1)k}/1.3^k}$ time \cite{wang2010parameterized}.
A non-parameterized solution to \prob{Line Cover} using dynamic programming has been proposed with both time and space $\ostar{2^n}$ \cite{cao2012study}, which is time efficient when the number of points is $\BigO{k\log k}$.
Algorithms for parabola cover and conic cover appear in \cite{tiwari2012covering}, running in time $\ostar{(k/1.38)^{(d-1)k}}$ and $\ostar{(k/1.15)^{(d-1)k}}$ respectively.

\para{Incidence Bounds.}
Given an arrangement of $n$ points and $m$ lines, an \emph{incidence} is a point-line pair where the point lies on the line. Szemer\'edi and Trotter gave an asymptotic (tight) upper bound of $\BigO{(nm)^{2/3}+n+m}$ on the number of incidences in their seminal paper \cite{szemeredi1983extremal}.
This has inspired a long list of similar upper bounds for incidences between points and several types of varieties in different spaces, \eg ~\cite{elekes2005incidences,fox2014semi,pach1998number,solymosi2012incidence}.

\para{Our Results.} We give a non-parameterized algorithm solving the decision versions of both \prob{Curve Cover} and \prob{Hyperplane Cover} in $\ostar{2^n}$ time and polynomial space.
Furthermore we present parameterized algorithms for \prob{Curve Cover} and \prob{Plane Cover} (\prob{Hyperplane Cover} in $\reals^3$).
These solve \prob{Curve Cover} in time $\ostar{(Ck/\log k)^{(d-1)k}}$
and \prob{Plane Cover} in time $\ostar{(Ck^2/\log^{1/5}\!k)^{k}}$, both using polynomial space.
The main idea is to use Szemer\'edi-Trotter-type incidence bounds and using the aforementioned $\ostar{2^n}$ algorithm as a base case. We make heavy use of (specialized) incidence bounds and our running time is very sensitive to the maximum number of 
possible incidences between points and curves or hyperplanes.
In general, utilization of incidence bounds for constructing algorithms is rare (see \eg \cite{guibas1991exact,guibas1996exact}) and to our knowledge we are the first to do so for this type of covering problem.
It is generally believed that point sets that create large number of incidences must have some ``algebraic sub-structure'' (see e.g.\ \cite{GreenT13})
but curiously, the situation is not fully understood even in two dimensions.
So, it might be possible to get better specialized incidence bounds for us in the context of covering points.
Thus, we hope that this work can give further motivation to study specialized incidence bounds.

\section{Preliminaries}

\subsection{Definitions}
We begin by briefly explaining the concept of fixed-parameter tractability before formally stating the \prob{Curve Cover} and \prob{Hyperplane Cover} problems.

\begin{definition}A problem is said to be \emph{fixed-parameter tractable} if there is a parameter $k$ to an instance $I$, such that $I$ can be decided by an algorithm in time $\BigO{f(k)poly(|I|)}$ for some computable function $f$.
\end{definition}

The function $f$ is allowed to be any computable function, but for NP-complete problems can be expected to be at least single exponential. The name refers to the fact that these algorithms run in polynomial time when $k$ is (bounded by) a constant. Within the scope of this paper $I$ will typically be a set of points and $k$ is always a solution budget: the maximum allowed size of any solution of covering objects, but not necessarily the size of the optimal such solution.

Let $P$ be a set of $n$ points in any dimension, and $d,s$ be non-negative integers.
\begin{definition}
A set of algebraic curves $\C$ are called $(d,s)$-curves if
(i) any pair of curves from $\C$ intersect in at most $s$ points and (ii) for any $d$ points there are at most $s$ curves in $\C$ through them.
The parameter $d$ is the \emph{degrees of freedom} and $s$ is the \emph{multiplicity-type}.

\end{definition}
The set $\C$ could be an infinite set corresponding to a family of curves, and it is often defined implicitly.
We assume two geometric predicates: 
First, we assume that given two curves $c_1, c_2 \in \C$, we can find their intersecting points in polynomial time.
Second, we assume that given any set of up to $s+1$ points, in polynomial time, 
we can find a curve that passes through the points or decide that no such curve exists.
These two predicates are satisfied in the real RAM model of computation for many families of algebraic curves
and can be approximated reasonably well in practice.

We say that a curve \emph{covers} a point, or that a point is \emph{covered} by a curve, if the point lies on the curve. 
A set of curves $H \subset \C$ \emph{covers} a set of points $P$ if every point in $P$ is covered by a curve in $H$,
furthermore, $H$ is a $k$-cover if $|H|\leq k$.

\begin{definition}[Curve Cover Problem]
        Given a family of $(d,s)$-curves $\C$, a set of points $P$, and an integer $k$, does there exist a subset of $\C$ that is a $k$-cover of $P$?
\end{definition}

Now let $P$ be a set of points in $\reals^d$. A hyperplane covers a point if the point lies on the hyperplane. A set $H$ of hyperplanes covers a set of points if every point is covered by some hyperplane; $H$ is a $k$-cover if $|H|\leq k$.
In $\reals^d$, a $j$-flat is a $j$-dimensional affine subset of the space, e.g., $0$-flats are points, $1$-flats are lines and $(d-1)$-flats
are called hyperplanes.

\begin{definition}[Hyperplane Cover Problem]
Given an integer $k$ and a set $P$ of points in $\reals^d$, does there exist a set of hyperplanes that is a $k$-cover of $P$?
\end{definition}

For $d=3$ we call the problem \prob{Plane Cover}.
To make our parameterized \prob{Plane Cover} algorithm work, we need to introduce a third generalization: a version of \prob{Hyperplane Cover} where the input contains any type of flats. A hyperplane covers a $j$-flat for $j\leq d-2$ if the flat lies on the hyperplane; further notation follows naturally from the above.
\begin{definition}[Any-flat Hyperplane Cover Problem]
For $k \in \mathbb{N}$ and a tuple $P=\tuple{P_0,\dots,P_{d-2}}$, where $P_i$ is a set of $i$-flats in $\reals^d$, does there exist a set of hyperplanes that is a $k$-cover of $P$?
\end{definition}
We stress that our non-parameterized algorithm in \rsec{IE} solves \prob{Any-flat Hyperplane Cover} while the parameterized algorithm in \rsec{planecover} solves \prob{Plane Cover}.
\prob{Line Cover} is a special case of both  \prob{Curve Cover} and \prob{Hyperplane Cover}.
Since \prob{Line Cover} is known to be both NP-hard~\cite{megiddo1982complexity} and APX-hard~\cite{kumar2000hardness}, the same applies to its three generalizations as well. 

\subsection{Kernels}
Central to parameterized complexity theory is the concept of \emph{polynomial kernels}. A parameterized problem has a polynomial kernel if an instance $\tuple{P,k}$ in polynomial time can be reduced to an instance $\tuple{P',k'}$ where $|P'|$ and $k'$ are bounded by polynomial functions of $k$ and $\tuple{P,k}$ is a yes-instance if and only if $\tuple{P',k'}$ is a yes-instance. Problems with polynomial kernels are immediately fixed-parameter tractable; simply run a brute force algorithm on the reduced instance.

\begin{lemma}
\label{lem:cckernel}
 For a family $\C$ of $(d,s)$-curves, \prob{Curve Cover} has a size $sk^2$ kernel where no curve in $\C$ covers more than $sk$ points.
\end{lemma}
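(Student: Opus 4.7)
My plan is to obtain the kernel via a single reduction rule: commit to the cover any curve of $\C$ that is incident to too many points of $P$.

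\emph{Safeness of the rule.} Suppose some $c\in\C$ covers more than $sk$ points of $P$, and let $H\subseteq\C$ be a hypothetical $k$-cover of $P$ with $c\notin H$. Property (i) of $(d,s)$-curves guarantees that every $c'\in H$ meets $c$ in at most $s$ points, so $H$ collectively covers at most $s|H|\le sk$ of the points of $P$ lying on $c$, contradicting the hypothesis. Therefore every $k$-cover of $P$ must contain $c$, and we may safely replace $\tuple{P,k}$ by $\tuple{P\setminus c,\,k-1}$ without affecting the answer.

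\emph{Size bound.} Apply the rule exhaustively, obtaining a reduced instance $\tuple{P',k'}$ with $k'\le k$ in which no curve of $\C$ covers more than $sk'$ points of $P'$. If $|P'|>sk'^{\,2}$, then any $k'$ curves together cover at most $k'\cdot sk' = sk'^{\,2}<|P'|$ points, so the instance is NO and can be replaced by a trivial NO-kernel. Otherwise $|P'|\le sk'^{\,2}\le sk^{2}$, which is the promised bound, and by construction no curve of $\C$ covers more than $sk'\le sk$ points of $P'$.

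\emph{Polynomial-time detection.} To execute the rule we must find a heavy curve when one exists. As long as $sk\ge d$, a curve covering more than $sk$ points of $P$ passes through some $d$-tuple from $P$, and by property (ii) each such $d$-tuple lies on at most $s$ curves of $\C$, each of which can be identified in polynomial time by the second geometric predicate. Iterating over all $\BigO{n^d}$ $d$-tuples, listing the candidate curves through each, and counting their incidences with $P$ therefore locates every heavy curve in polynomial time. The case $sk<d$ only occurs for constant-size instances and is handled trivially.

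The main subtlety is conceptual rather than technical: spotting that property (i) furnishes exactly the right linear obstruction for the safeness argument, while property (ii), combined with the two predicates, lets us search the possibly infinite family $\C$ algorithmically by pivoting on $d$-tuples of input points.
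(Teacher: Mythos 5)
Your proof is correct and follows essentially the same argument as the paper: the pairwise-intersection bound $s$ forces any curve with more than $sk$ points into every $k$-cover, the rule is applied exhaustively, and the $sk^2$ point bound then follows. The only difference is that you additionally spell out how to detect heavy curves in polynomial time via $d$-tuples and the geometric predicates, a detail the paper leaves implicit.
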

\begin{proof}
Suppose some curve $c \in \C$ covers at least $sk+1$ points in $P$.
These points cannot be covered by $k$ other curves $c_1, \dots, c_k$ as the pairwise intersection of $c_i$ with $c$ contains 
at most $s$ points.
Therefore every $k$-cover must include $c$; remove the points that it covers and decrement $k$.
We can repeat that for every curve that covers $sk+1$ points, until every curve in $\C$ covers at most $sk$ of the remaining points.
Thus, if the number of remaining points is more than $sk^2$, the instance has no $k$-cover and can be immediately rejected.
Otherwise, we are left with an instance of $sk^2$ points.
\end{proof}

For \prob{Any-flat Hyperplane Cover} a size $k^d$ kernel is presented in \cite{langerman2005covering}. It uses a \emph{grouping} operation, removing points and replacing them with higher dimension flats, which is not acceptable for a \prob{Hyperplane Cover} input. We present an alternative, slightly weaker hyperplane kernel containing only points; in $\reals^3$ it contains at most $k^3+k^2$ points.

\begin{lemma}
\label{lem:hpkernel}
Hyperplane Cover in $\reals^d, d\geq2,$ has a size $k^2(\sum_{i=0}^{d-2} k^i)=\BigO{k^d}$ kernel where for $j\leq d-2$ any $j$-flat covers at most $\sum_{i=0}^{j} k^i=\BigO{k^{j}}$ points and any hyperplane covers at most $k\sum_{i=0}^{d-2}=\BigO{k^{d-2}}$ points.
\end{lemma}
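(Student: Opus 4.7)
The plan is to extend the argument of~\rlem{cckernel} to flats of all dimensions, via induction on the flat dimension $j$. I work inside-out: first establish the invariant ``every $j$-flat of $P$ covers at most $N_j := \sum_{i=0}^{j} k^i$ points'' for $j = 0, 1, \ldots, d-2$ in order, and then handle hyperplanes ($j = d-1$) with a one-hyperplane-at-a-time rule in the same style as~\rlem{cckernel}.

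The central inductive claim I would prove is the following: if every $(j-1)$-flat of the current instance covers at most $N_{j-1}$ points, then every $j$-flat $F$ with $|P \cap F| \geq N_j$ must be contained in some hyperplane of every $k$-cover. The proof mirrors~\rlem{cckernel}: if a $k$-cover $\{h_1, \ldots, h_k\}$ contains no hyperplane through $F$, then each $F \cap h_i$ is a proper affine subspace of $F$, hence of dimension at most $j-1$, and by the inductive hypothesis covers at most $N_{j-1}$ points of $P$. Summing gives $|P \cap F| \leq k N_{j-1} = \sum_{i=1}^{j} k^i = N_j - 1$, a contradiction.

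Given this claim, the reduction rules drop out. For $j = 1, \ldots, d-2$ in order, while some $j$-flat $F$ violates the invariant, I would retain an arbitrary subset of exactly $N_j$ points of $P \cap F$ and discard the rest of $P \cap F$, \emph{without} decrementing $k$; soundness follows from re-invoking the central claim on the reduced instance, since the retained $N_j$ points still force some hyperplane through $F$ into every $k$-cover, and that hyperplane covers the discarded points as well. For $j = d-1$, if a hyperplane $h$ covers more than $k \sum_{i=0}^{d-2} k^i$ points then $h$ itself must be in every $k$-cover by the same claim, so I remove all points on $h$ and decrement $k$, exactly as in~\rlem{cckernel}. Once no rule fires, every hyperplane contains at most $k \sum_{i=0}^{d-2} k^i$ points, and since a $k$-cover uses $k$ hyperplanes, the total number of surviving points is at most $k^2 \sum_{i=0}^{d-2} k^i = \BigO{k^d}$.

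The main obstacle is the asymmetry between $j < d-1$ and $j = d-1$: for $j = d-1$ the offending object is itself a hyperplane and is uniquely named by the instance, so we may delete it and decrement $k$ just as in the curve case; but for $j < d-1$ there are infinitely many hyperplanes through the offending $F$, and the cover is free to pick any one of them. Retaining exactly $N_j$ witness points on $F$ (rather than just $j+1$ affinely independent ones, which would not suffice since a cover could split them across several hyperplanes) is the trick that preserves the hypothesis of the central claim inside the reduced instance, forcing any future $k$-cover to commit to some hyperplane through $F$ without our having to name one. This is precisely what yields a kernel of points only, in contrast to the grouping-based kernel of~\cite{langerman2005covering}, which replaces dense point clusters by higher-dimensional flats in the input.
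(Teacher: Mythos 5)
Your proposal follows essentially the same route as the paper's proof: both induct on the flat dimension, enforce the invariant that every $j$-flat carries at most $\sum_{i=0}^{j}k^i$ points via the same counting argument (a hyperplane avoiding a $j$-flat meets it in a flat of dimension at most $j-1$, so $k$ such hyperplanes cover at most $k\sum_{i=0}^{j-1}k^i=\sum_{i=0}^{j}k^i-1$ of its points), and then finish exactly as in \rlem{cckernel} using the resulting intersection bound $s=\sum_{i=0}^{d-2}k^i$ between two hyperplanes. The one genuine difference lies in how the point-deletion rule is argued safe. The paper worries that deleting points from a heavy flat $f$ may strip some other heavy flat $f_1$ of its heaviness and thereby admit spurious covers of the reduced instance, and it rectifies this by re-adding dummy points in general position on $f_1$ so that it again meets the threshold exactly. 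You instead re-invoke your central claim on the reduced instance: the processed flat retains exactly threshold-many points, the lower-dimensional invariant survives deletion, so every $k$-cover of the reduced instance still contains a hyperplane through that flat, and this hyperplane also covers the deleted points; hence the yes/no equivalence holds with deletion alone. This is sound --- the heaviness of flats other than the one being processed is never needed for the backward direction --- and it gives a slightly cleaner argument that avoids the delicate step of inserting new points in general position, at the cost of not maintaining the paper's stronger (but otherwise unused) property that all formerly heavy flats keep exactly threshold-many points.
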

\begin{proof}
This kernel boils down to creating a maximum intersection $s=\sum_{i=0}^{d-2}k^i$ between two hyperplanes. Then we have a kernel of size $sk^2$ where no hyperplane covers more than $sk$ points, in exactly the same way as \rlem{cckernel}. For a point set $P$ in $\reals^d$, call a $t$-flat \emph{heavy} if it covers at least $\sum_{i=0}^tk^t$ points in $P$. $P$ is $t$-ready if every heavy $t$-flat covers exactly $\sum_{i=0}^tk^t$ points. When $P$ is $(d-2)$-ready, we have the desired intersection bound $s$ and are done. Clearly any point set is 0-ready, so we only need to show how to modify a $t$-ready point set into one that is equivalent and $(t+1)$-ready.

Let $P$ be $t$-ready and suppose it contains a heavy $(t+1)$-flat $f$. Since by assumption two $(t+1)$-flats intersect in at most $\sum_{i=0}^{t}k_i$ points, any hyperplane cover of $P$ must have a hyperplane covering $f$. This property is maintained by removing arbitrary points $R$ on $f$ so that $f$ covers exactly $\sum_{i=0}^{t+1}k_i$ points. Consider that some of the points $R$ were on another heavy flat $f_1$, which as a consequence is no longer heavy. This could mean that $P\setminus R$ has a cover where $P$ did not, and must be rectified. We do this by re-adding enough points to $f_1$ so that it too covers exactly $\sum_{i=0}^{t+1}k_i$. Put these new points in general position on $f_1$ to avoid creating new heavy $(t+1)$-flats or increasing the number of points on any heavy $t$-flat. Once all heavy $(t+1)$-flats are reduced to $\sum_{i=0}^{t+1}k_i$ points in this manner, the new point set is $(t+1)$-ready and a yes-instance if and only if $P$ is.
\end{proof}

Our algorithms will use both properties of the kernels. Kratsch et al.~\cite{kratsch2014point} showed that these kernels are essentially tight under standard assumptions in computational complexity.

\begin{theorem}[Kratsch et al. \cite{kratsch2014point}]
\prob{Line Cover} has no kernel of size $\BigO{k^{2-\epsilon}}$ 
unless coNP $\subseteq$ NP/poly.
\end{theorem}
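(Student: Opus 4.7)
\medskip

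\noindent The plan is to invoke the cross-composition framework of Bodlaender, Jansen, and Kratsch, which is the standard tool for ruling out kernels of a specific polynomial size modulo the hypothesis $\text{coNP} \not\subseteq \text{NP}/\text{poly}$. Recall that an \emph{OR-cross-composition of degree $d$} from an NP-hard language $L$ into a parameterized problem $(Q,\kappa)$ takes $t$ instances $x_1,\dots,x_t$ of $L$ (all of the same length $n$ and from the same equivalence class of a suitable polynomial equivalence relation) and produces, in polynomial time, a single instance $y$ of $Q$ with $\kappa(y) \le t^{1/d}\cdot\mathrm{poly}(n)$ such that $y \in Q$ iff at least one $x_i\in L$. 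The meta-theorem states that if such a composition exists, then $(Q,\kappa)$ admits no kernel of bitsize $\BigO{k^{d-\epsilon}}$ unless $\text{coNP} \subseteq \text{NP}/\text{poly}$. Hence it suffices to exhibit a degree-$2$ cross-composition into \prob{Line Cover}.

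\medskip

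\noindent First I would pick a convenient NP-hard source problem $L$ whose instances are easy to embed in the plane; a natural choice is \prob{Vertex Cover} on cubic or otherwise degree-bounded graphs (equivalently, a colored \prob{Clique}-style or \prob{Set Cover}-style variant), so that a yes-instance of size $n$ corresponds to a combinatorial selection that can be witnessed by few lines. The equivalence relation puts into one class all instances with the same vertex count and target cover size, so we may assume all $t$ input instances share those parameters. I would pad $t$ to a perfect square, write $t = r^2$, and conceptually arrange the instances in an $r \times r$ grid indexed by $(a,b)$ with $a,b \in \{1,\dots,r\}$.

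\medskip

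\noindent Next comes the geometric construction, which is the main obstacle. For each grid position $(a,b)$ I would build a point gadget $G_{a,b}$ realizing instance $x_{a,b}$ on a small patch of the plane, using a known NP-hardness reduction from $L$ to \prob{Line Cover} as a subroutine, so that $G_{a,b}$ admits a cover of size $\ell = \mathrm{poly}(n)$ iff $x_{a,b}$ is a yes-instance. The subtlety is that we must \emph{share} the line budget across the $t$ gadgets: the critical trick is to place the gadgets so that a cover of any single $G_{a,b}$ reuses one ``row selector'' line indexed by $a$ and one ``column selector'' line indexed by $b$, plus $\ell$ local lines that handle the internal structure of $G_{a,b}$. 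Concretely, I would design each $G_{a,b}$ so that its points lie on (or are anchored to) two specially chosen pencils of lines shared by the entire $a$-th row and the entire $b$-th column respectively, and add forcing gadgets that penalize any cover that uses more than one selector line per axis. The total number of points is $\mathrm{poly}(n,t)$, and the budget becomes $k = 2r + \ell = \BigO{\sqrt{t}\cdot\mathrm{poly}(n)}$, as required for degree $d=2$.

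\medskip

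\noindent Finally, I would verify correctness in both directions: if some $x_{a,b}$ is a yes-instance, the $2r+\ell$ lines described above form a valid cover of all point gadgets (the row- and column-selector lines cover the ``background'' points of every $G_{a',b'}$, and the local $\ell$ lines finish $G_{a,b}$); conversely, the forcing gadgets ensure that any $k$-line cover must pick exactly one row and one column selector and then witness a solution inside the unique uncovered gadget. Plugging this composition into the meta-theorem yields the claimed $\BigO{k^{2-\epsilon}}$ lower bound. I expect the genuinely difficult step to be the design of the shared selector lines and the forcing gadgets: one needs them to be realizable with collinear points, to leave no unintended long lines, and to interact cleanly with the subroutine reduction, all while keeping the parameter at $\BigO{\sqrt{t}\cdot\mathrm{poly}(n)}$ rather than $\BigO{t\cdot\mathrm{poly}(n)}$.
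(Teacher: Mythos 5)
There is nothing to compare against in the paper itself: this theorem is imported verbatim from Kratsch et al.\ \cite{kratsch2014point} and the paper gives no proof, so your attempt has to be measured against the known proof of that result. Your choice of framework is right in spirit — a degree-$2$ weak cross-composition (equivalently, the Dell--van Melkebeek style compression lower bound) is indeed the standard route to an $\BigO{k^{2-\epsilon}}$ exclusion under coNP $\not\subseteq$ NP/poly, and the actual proof of Kratsch et al.\ lives in that world. But what you have written is a plan, not a proof, and the step you defer (``design of the shared selector lines and the forcing gadgets'') is precisely where all the difficulty of the real argument sits. In the plane you cannot prescribe an incidence structure at will: any two points span a line, so every gadget you place creates unintended candidate lines, and \prob{Line Cover} has no mechanism for ``penalizing'' a cover other than the budget $k$ — there are no forcing gadgets available for free. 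Your correctness sketch also does not close: if only the chosen gadget $G_{a,b}$ receives its $\ell$ local lines, then all points of the other $t-1$ gadgets must be covered by the $2r$ selector lines alone, which tends to make the satisfiability of those gadgets irrelevant (so the reverse direction, ``a $k$-cover forces some $x_{a,b}$ to be a yes-instance,'' is unsupported), and keeping the selector lines from accidentally covering too much, or from being replaceable by cheaper unintended lines, is exactly the realizability problem you have not addressed.

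A second, more technical gap: the theorem as stated bounds the \emph{size} of the kernel, and the result of Kratsch et al.\ is really about the number of points, which is not the same as bit size — a kernel with $\BigO{k^{2-\epsilon}}$ points is not automatically a compression to $\BigO{k^{2-\epsilon}}$ bits, because coordinates of realizing point sets can require very many bits. The actual proof has to bridge that discrepancy (via an oracle-communication / coNP-protocol argument applied to the number of points rather than naively to bit size), and your sketch, which plugs a point-count parameter directly into the bit-size meta-theorem, silently skips this issue. So the approach is the right genre, but both the geometric composition and the points-versus-bits step are genuine missing pieces rather than routine details.
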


\subsection{Incidence bounds}
Consider the \prob{Line Cover} problem.
Obviously, if the input of $n$ points are in general position, then we need $n/2$ lines to cover them. 
Thus, if $k \ll \frac{n}{2}$, we expect the points to contain ``some structure'' if they are to be covered by $k$ lines.
Such ``structures'' are very relevant to the study of incidences.
For a set $P$ of points and a set $L$ of lines, the classical \ST{} \cite{szemeredi1983extremal} theorem gives an upper bound on the number of point-line incidences, $I(L, P)$, in $\reals^2$.

\begin{theorem}[Szemer\'edi and Trotter \cite{szemeredi1983extremal}]
For a set $P$ of $n$ points and a set $L$ of $m$ lines in the plane, let $I(L,P) = |\set{(p,\ell) \mid p \in P \cap \ell, \ell \in L}|$. Then
$I(L, P) = \BigO{(nm)^{2/3} + n + m}$.
\end{theorem}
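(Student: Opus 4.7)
The plan is to follow Sz\'ekely's elegant argument based on the crossing number inequality, which turns the incidence problem into a counting problem on a geometric graph. First I would partition the lines of $L$ into those incident to at most one point (which contribute at most $m$ incidences in total, absorbed by the additive $m$ term) and those incident to at least two. For each remaining line $\ell$, order the points on $\ell$ and connect consecutive ones by an edge, producing a multigraph $G$ drawn in the plane whose vertex set is $P$. A line through $k_\ell \geq 2$ points contributes $k_\ell - 1$ edges, so the total edge count satisfies $e(G) \geq I(L,P) - 2m$.

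Next I would bound the crossing number of this drawing from above: two edges can only cross at a point that is the intersection of two distinct lines of $L$, and any pair of lines intersects in at most one point, so $\mathrm{cr}(G) \leq \binom{m}{2} \leq m^2/2$. Now I would invoke the crossing number inequality, which asserts that any simple graph drawn in the plane with $e \geq 4n$ edges satisfies $\mathrm{cr}(G) \geq c \, e^3 / n^2$ for an absolute constant $c > 0$. Combining these bounds in the regime $e(G) \geq 4n$ yields $e(G)^3 \leq (2c)^{-1} n^2 m^2$, hence $e(G) = \BigO{(nm)^{2/3}}$; in the complementary regime, $e(G) < 4n$ directly. In either case $I(L,P) \leq e(G) + 2m = \BigO{(nm)^{2/3} + n + m}$.

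The main obstacle is the crossing number inequality itself, which I would prove as a lemma. The base case is Euler's formula applied to a planar drawing: any simple planar graph satisfies $e \leq 3n - 6$, which by a standard contradiction argument gives $\mathrm{cr}(G) \geq e - 3n$ for every simple graph. The cubic improvement comes from a random sampling trick: pick a random induced subgraph by keeping each vertex independently with probability $p \in (0,1]$, apply the linear bound to this subgraph, and take expectations to obtain $p^4 \, \mathrm{cr}(G) \geq p^2 e - 3p n$. Optimizing $p = \Theta(n/e)$ (valid precisely when $e \geq 4n$) produces the desired $\mathrm{cr}(G) = \Omega(e^3/n^2)$.

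A subtle point I would take care to handle cleanly is that $G$ may have multi-edges when two distinct lines share two consecutive points along both; standard treatments either argue that this contributes at most $\BigO{m}$ extra edges and is harmless, or use a version of the crossing number inequality valid for multigraphs of bounded edge multiplicity. Either resolution preserves the final bound up to constants, so the statement $I(L,P) = \BigO{(nm)^{2/3} + n + m}$ follows.
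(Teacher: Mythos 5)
Your proposal is correct, but it is worth noting that the paper does not prove this statement at all: it is quoted as a black box from Szemer\'edi and Trotter's original paper, whose own argument proceeds by a cell-decomposition (partitioning the plane and summing incidences cell by cell). What you give instead is Sz\'ekely's crossing-number proof, and the outline is sound: the edge count $e(G)\geq I(L,P)-2m$, the crossing bound $\mathrm{cr}(G)\leq\binom{m}{2}$, the crossing number inequality with its probabilistic proof from Euler's formula, and the case split at $e\geq 4n$ all fit together exactly as you describe. One small correction: the multigraph worry at the end is vacuous here, since two distinct points lie on a unique line and two distinct lines meet in at most one point, so no pair of points can be consecutive on two different lines of $L$; the graph $G$ is simple and the crossing number inequality applies directly. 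That concern becomes real only for the generalization to $(d,s)$-curves used later in the paper (\rthm{curves}), where up to $s$ curves may pass through the same pair of points and one needs a bounded-multiplicity version of the crossing lemma. The trade-off between the two routes is the usual one: the original cell-decomposition argument is self-contained but technical, while your crossing-number route is shorter and more elementary modulo the crossing lemma, and it is precisely the approach that extends to the curve incidence bounds this paper relies on.
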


The linear terms in the theorem arise from the cases when there are very few lines compared to points (or vice versa).
In the setting of \prob{Line Cover} these cases are not interesting since they are easy to solve.
The remaining term is therefore the interesting one.
Since it it large, it implies there are many ways of placing a line such that it covers many points;
this demonstrates the importance of incidence bounds for covering problems.
We introduce specific incidence bounds for curves and hyperplanes in their relevant sections.

\section{Inclusion-exclusion algorithm}
\label{sec:IE}
This section outlines an algorithm \algo{Inclusion-Exclusion} that for both problems decides the size of the minimum cover, or the existence of a $k$-cover, of a point set $P$ in $\ostar{2^n}$ time and polynomial space. Our algorithm improves over the one from \cite{cao2012study} for \prob{Line Cover} which finds the cardinality of the smallest cover of $P$ with the same time bound but exponential space. The technique is an adaptation of the one presented in \cite{bjorklund2009set}; their paper immediately gives either $\ostar{3^n}$-time polynomial-space or $\ostar{2^n}$-time $\ostar{2^n}$-space algorithms for our problems. We give full details of the technique for completeness; to do so, we require the intersection version of the inclusion-exclusion principle.
\begin{theorem}[Folklore]
\label{thm:inclexcl}
Let $A_1, \dots, A_n$ be a number of subsets of a universe $\uni$. Using the notation that $\overline{A}=\uni\setminus A$ and $\bigcap_{i \in 
\emptyset} \overline{A_i} = \uni$, we have:
\begin{equation*}
\Bigg|\bigcap_{i \in \set{1,\dots,n}} A_i \Bigg| = \sum_{X \subseteq \set{1,\dots,n}} (-1)^{|X|} 
\Bigg|\bigcap_{i \in X} \overline{A_i}\Bigg|.
\end{equation*}
\end{theorem}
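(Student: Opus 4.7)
The plan is to prove the identity by a direct element-wise counting argument. For each element $x \in \uni$, I will show that $x$ contributes the same amount to both sides of the claimed equation, which suffices since both sides are cardinalities. On the left, $x$ contributes $1$ if $x \in A_i$ for every $i$, and $0$ otherwise.

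On the right, the key observation is to fix $x$ and let $S(x) = \set{i \in \set{1,\dots,n} : x \notin A_i}$, the set of indices whose complements contain $x$. Then $x \in \bigcap_{i \in X} \overline{A_i}$ if and only if $X \subseteq S(x)$, so the total contribution of $x$ to the right-hand side is $\sum_{X \subseteq S(x)} (-1)^{|X|}$. Grouping by size, this equals $\sum_{j=0}^{|S(x)|} \binom{|S(x)|}{j} (-1)^j = (1-1)^{|S(x)|}$ by the binomial theorem, which is $0$ when $S(x) \neq \emptyset$ and $1$ when $S(x) = \emptyset$. Noting that $S(x) = \emptyset$ is precisely the condition $x \in \bigcap A_i$, the two sides agree for every $x$.

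There is no substantive obstacle; the only point requiring care is the $X = \emptyset$ term on the right, which is exactly what the stated convention $\bigcap_{i \in \emptyset} \overline{A_i} = \uni$ is designed to handle: it provides the $1$-contribution from elements of $\bigcap A_i$ and the cancelling $|\uni|$ term needed so that the element-wise sum telescopes correctly via the binomial identity above. As an alternative one could instead apply De Morgan's law $\bigcap A_i = \uni \setminus \bigcup \overline{A_i}$ and invoke the classical union form of inclusion-exclusion, but the element-wise proof is self-contained and equally short.
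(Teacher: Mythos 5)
Your proof is correct: the element-wise argument (fix $x$, observe that $x \in \bigcap_{i \in X} \overline{A_i}$ exactly when $X \subseteq S(x)$, and collapse $\sum_{X \subseteq S(x)} (-1)^{|X|} = (1-1)^{|S(x)|}$ via the binomial theorem) is the standard and complete way to establish this identity, and the empty-set convention is handled properly. The paper itself states the theorem as folklore and supplies no proof, so there is nothing to compare against; your argument fills that gap in the expected way, with only the minor caveat that the closing remark about a ``cancelling $|\uni|$ term'' is loose phrasing for what is simply the $j=0$ term of the per-element binomial sum.
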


\subsection{Curve Cover}
\label{sec:cc-ie}
Let $P$ be the input set of points and $\C$ be the family
of $(d,s)$-curves under consideration. Although we are creating a non-parameterized algorithm, we nevertheless assume that we have access to the solution parameter $k$. This assumption will be removed later.
We say a set $Q$ is a \emph{coverable set in $P$} (or is \emph{coverable} in $P$) if $Q\subseteq P$ and $Q$ has a $1$-cover.

Let a \emph{tuple} (in $P$) be a $k$-tuple $\tuple{Q_1, \dots, Q_k}$ such that $\forall i: Q_i$ is coverable in $P$.
Note that there is no restriction on pairwise intersection between two coverable sets in a tuple. Define $\uni$ as the set of all tuples. For $p \in P$, let $A_p = \set{\tuple{Q_1,\dots,Q_k} \mid p \in \bigcup_i Q_i} \subseteq \uni$ be the set of all tuples where at least one coverable set contains $p$.

\begin{lemma}
\label{lem:numberofintersect}
$P$ has a $k$-cover if and only if $\left|\bigcap_{p\in P} A_p\right| \geq 1$.
\end{lemma}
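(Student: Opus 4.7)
The plan is to prove the two directions by directly unpacking the definition of $A_p$ and observing that membership of a tuple in $\bigcap_{p\in P} A_p$ is exactly the statement that the coverable sets in the tuple together cover $P$. This reduces the claim to the fact that a collection of coverable sets whose union equals $P$ gives rise to a $k$-cover (via the witnessing curves), and conversely.

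For the forward direction, I would start with a $k$-cover $\{c_1,\dots,c_k\}\subseteq\C$ of $P$ and define $Q_i = c_i \cap P$. Each $Q_i$ is coverable in $P$ since $c_i$ itself is a $1$-cover of $Q_i$, so $\langle Q_1,\dots,Q_k\rangle$ is a legitimate tuple in $\uni$. Because $\{c_1,\dots,c_k\}$ covers $P$, every $p\in P$ lies on some $c_i$, hence in the corresponding $Q_i\subseteq\bigcup_j Q_j$. Therefore the tuple is a member of $A_p$ for every $p\in P$, giving $\left|\bigcap_{p\in P} A_p\right|\geq 1$.

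For the backward direction, suppose $\langle Q_1,\dots,Q_k\rangle\in\bigcap_{p\in P} A_p$. By the definition of $A_p$, every $p\in P$ lies in $\bigcup_i Q_i$, so $\bigcup_i Q_i = P$. By the definition of a tuple, each $Q_i$ is coverable in $P$, so there exists a curve $c_i\in\C$ with $Q_i\subseteq c_i$. The set $\{c_1,\dots,c_k\}$ then covers every point of $P$ and has size at most $k$, so it is a $k$-cover of $P$.

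The two directions are both essentially definition-chasing, so no real obstacle arises; the only mild subtlety worth stating explicitly is that coverable sets in a tuple are allowed to overlap arbitrarily (and indeed may be empty), which is harmless for both directions. In particular, if fewer than $k$ curves suffice to cover $P$, we can pad the tuple with copies of a covered singleton (or the empty set, which is trivially coverable) to match the length requirement, so the equivalence holds without loss of generality for tuples of length exactly $k$.
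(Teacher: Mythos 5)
Your proof is correct and follows essentially the same argument as the paper: build the tuple from a $k$-cover by taking the points each curve covers, and conversely extract a witnessing curve for each coverable set of a tuple in the intersection. The padding remark (for covers of size strictly less than $k$) is a detail the paper leaves implicit, but it does not change the approach.
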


\begin{proof}
Take a tuple in $\bigcap_{p\in P} A_p$. For each coverable set $Q$ in the tuple, place a curve that covers $Q$. Since the tuple was in the intersection, every point is in some coverable set, so every point is covered by a placed curve. Hence we have a $k$-cover.

Take a $k$-cover $\mathcal{C}$ and from each curve $c\in \mathcal{C}$ construct a coverable set of the points covered by $c$. Form a tuple out of these sets and observe that the tuple is in the intersection $\bigcap_{p\in P} A_p$, hence its cardinality is at least 1.
\end{proof}
Note that several tuples may correspond to the same $k$-cover, so this technique cannot be used for the counting version of the problem. \rthm{inclexcl} and \rlem{numberofintersect} reduce the problem of deciding the existence of $k$-covers to computing a quantity $\left|\bigcap_{i\in X} \overline{A_i}\right|$.
The key observation is that $\overline{A_p}$ is the set of tuples where no coverable set contains $p$ and $\bigcap_{i\in X} \overline{A_i}$ is the set of tuples that contain no point in $X$, \ie the set of tuples in $P\setminus X$. The remainder of this section shows how to compute the size of this set in polynomial time. 
Let $c(X)=\left|\set{Q \mid Q \subseteq X, Q \textrm{ is coverable in }X}\right|$ be the number of coverable sets in a point set $X$. A tuple in $P\setminus X$ is $k$ coverable sets drawn from a size $c(P\setminus X)$ pool (with replacement), hence there are $c(P\setminus X)^k$ such tuples. To compute $c(X)$ we introduce the notion of \emph{representatives}. Let $\pi$ be an arbitrary ordering of $P$. The representative $R=\set{r_1,\dots,r_i}$ of a coverable set $Q$ is the $\min(|Q|,s+1)$ first points in $Q$ as determined by the order $\pi$. Note that for any coverable set $Q$, it holds that $R\subseteq Q$.
Let $q(X,\pi,R)$ be the number of coverable sets that have the representative $R$.

\begin{lemma}
\label{lem:countrepresentative}
$q(X,\pi,R)$ can be computed in $\BigO{|X|}$ time and $\BigO{\log|X|}$ space.
\end{lemma}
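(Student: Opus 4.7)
The plan is to split on the size of $R$. By definition $|R| = \min(|Q|, s+1)$, so either $|R| \le s$ and $Q = R$, or $|R| = s+1$ and $R$ is the $\pi$-initial segment of $Q$.

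In the first case the only candidate for $Q$ with representative $R$ is $R$ itself, so $q(X,\pi,R)$ equals $1$ or $0$ depending on whether $R$ admits a $1$-cover. I invoke the assumed predicate, which runs in polynomial time on the constant-size input $R$, to decide this.

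For the second case I first argue that the covering curve is essentially unique. Any coverable $Q$ with representative $R$ contains $R$ and lies on some $c \in \C$. If a second curve $c' \ne c$ also passed through $R$, then $c$ and $c'$ would share at least $|R| = s+1$ points, violating property (i) of $(d,s)$-curves. So a single call to the predicate yields the unique relevant $c$, or certifies that no such curve exists (in which case $q(X,\pi,R) = 0$). Let $p^*$ be the element of $R$ with the largest rank in $\pi$. Then $Q$ has representative $R$ iff $Q = R \cup S$ for some $S \subseteq \{p \in X \cap c : p \text{ ranks after } p^*\}$: any extra point ranked at or before $p^*$ and not in $R$ would push some element of $R$ out of the first $s+1$ positions of $Q$, while points off $c$ cannot belong to $Q$ by the uniqueness argument. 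Distinct choices of $S$ yield distinct $Q$, so $q(X,\pi,R) = 2^m$ where $m$ is the size of the candidate set for $S$.

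To compute $m$ I scan $X$ once, testing each $p$ for membership on $c$ and for order against $p^*$ (both $O(1)$ in the real-RAM model) and incrementing a counter. The counter is the only quantity whose representation depends on $|X|$, fitting in $O(\log |X|)$ bits, while $R$, $c$, and $p^*$ occupy $O(1)$ cells. The main conceptual hurdle is the uniqueness argument in the second case; once that is in place, the formula $2^m$ follows immediately and rules out any double-counting across candidate covering curves.
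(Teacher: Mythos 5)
Your proof is correct and follows essentially the same route as the paper's: case split on $|R|$, and for $|R|=s+1$ count the points of $X$ on the covering curve that come after the last element of $R$ in $\pi$, giving $q(X,\pi,R)=2^m$ while storing only the counter for the $\BigO{\log|X|}$ space bound. The only difference is cosmetic: you make the uniqueness of the curve through $R$ (via the $s$-intersection property) explicit and test membership against that single curve, whereas the paper tests per point whether some curve covers $\set{r_1,\dots,r_i,p}$ -- the two tests define the same set.
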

\begin{proof}
If $R$ is not a valid representative, $q(X,\pi,R)=0$. If $|R|\leq s$, $q(X,\pi,R)=1$. If $|R|=s+1$, let $U$ be the union of every coverable set with representative $R$, and $X'=U\setminus R$.
The number of subsets of $X'$ is the number of coverable sets with representative $R$, \ie $q(X,\pi,R)=2^{|X'|}$. For any $p\in P$ with $\pi(p)>\pi(r_i)$, $p\in X'$ if and only if there is a curve $c\in \C$ such that $c$ covers $\set{r_1,\dots,r_i,p}$.
Since $i \leq s+1$ the time complexity is $\BigO{|X|}$.
The space complexity is logarithmic since we need only maintain $|X'|$ rather than $X'$.
\end{proof}

\begin{lemma}
\label{lem:countsets}
$c(X)$ can be computed in $\BigO{|X|^{s+2}}$ time and $\BigO{|X|}$ space.
\end{lemma}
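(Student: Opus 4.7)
The plan is to express $c(X)$ as a sum of $q(X,\pi,R)$ values over all candidate representatives $R$, and then bound the number of candidates together with the per-candidate cost established in \rlem{countrepresentative}.

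First I would observe that the representative map is well-defined and one-to-many in exactly the right way: by construction each coverable set $Q \subseteq X$ has a unique representative $R(Q) \subseteq Q$ of size $\min(|Q|,s+1)$, so the coverable sets partition according to their representative and
\[
c(X) \;=\; \sum_{R} q(X,\pi,R),
\]
where the sum ranges over all subsets $R \subseteq X$ with $|R| \le s+1$. Since $q(X,\pi,R)=0$ whenever $R$ is not a valid representative, we lose nothing by letting $R$ range over all such subsets rather than only the valid ones.

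Next I would enumerate those subsets in lexicographic order under $\pi$, so that each $R$ is produced exactly once and the enumeration itself only needs to store $s+1$ indices into $X$, i.e.\ $\BigO{s\log|X|}$ bits. The number of such subsets is $\sum_{i=0}^{s+1}\binom{|X|}{i} = \BigO{|X|^{s+1}}$, since $s$ is a constant of the curve family. For each enumerated $R$ I invoke the procedure of \rlem{countrepresentative}, which runs in $\BigO{|X|}$ time and $\BigO{\log|X|}$ working space, and add its output to a running counter. This gives a total running time of $\BigO{|X|^{s+1}}\cdot\BigO{|X|} = \BigO{|X|^{s+2}}$.

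Finally I would account for space. The enumeration indices and the working space inside \rlem{countrepresentative} together are $\BigO{\log|X|}$. The only object that can grow large is the accumulator itself: $c(X)$ counts subsets of $X$ and is therefore at most $2^{|X|}$, so storing it takes $\BigO{|X|}$ bits, giving overall $\BigO{|X|}$ space. There is no real obstacle here; the only subtle point is to make sure the partition by representative is exact, so that the simple summation in the first step genuinely counts each coverable set once, which is immediate from the deterministic definition of $R(Q)$ via the fixed ordering $\pi$.
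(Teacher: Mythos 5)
Your proposal is correct and follows essentially the same route as the paper: partition the coverable sets by their representative, enumerate the $\BigO{|X|^{s+1}}$ candidate representatives, and apply Lemma~\ref{lem:countrepresentative} to each, for $\BigO{|X|^{s+2}}$ total time and linear space. The only (harmless) difference is in the space bookkeeping — the paper charges the linear space to storing the ordering $\pi$, while you charge it to the accumulator, which can indeed reach $2^{|X|}$ and so needs $\BigO{|X|}$ bits; both accountings yield the stated bound.
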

\begin{proof}
Fix an ordering $\pi$. As every coverable set in $X$ has exactly one representative under $\pi$, we get that $c(X)=\sum_{R} q(X,\pi,R)$. There there are only $\BigO{\binom{|X|}{s+1}}=\BigO{|X|^{s+1}}$ choices of $R$ for which $q(X,\pi,R)>0$, and by \rlem{countrepresentative} each term of the sum is computable in $\BigO{|X|}$ time and logarithmic space. The space complexity is therefore dominated by the space to store $\pi$ which is linear.
\end{proof}

\begin{theorem}
\label{thm:cc-inclexcl}
There exists a $k$-cover of curves from $\C$ for $P$ if and only if
\begin{equation*}
\left|\bigcap _{p\in P} A_p\right| = \sum_{X \subseteq P} (-1)^{|X|} \left|\bigcap_{p \in X} \overline{A_p}\right| = \sum_{X \subseteq P} (-1)^{|X|} c(P \setminus X)^k \geq 1.
\end{equation*}
This comparison can be performed in $\BigO{2^n n^{s+2}}$ time and $\BigO{nk}$ bits of space.
\end{theorem}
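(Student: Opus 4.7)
The plan is to decompose the statement into three equalities (the biconditional, the inclusion-exclusion rewrite, and the simplification to $c(P\setminus X)^k$) followed by a complexity analysis.

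First I would establish the biconditional: the existence of a $k$-cover is equivalent to $\left|\bigcap_{p\in P}A_p\right|\geq 1$. This is already handed to us by \rlem{numberofintersect}, so the first line requires no new work beyond a citation. The leftmost equality is then the direct application of \rthm{inclexcl} to the family $\{A_p\}_{p\in P}$, using the universe $\uni$ of all $k$-tuples of coverable sets of $P$.

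The middle step — rewriting $\left|\bigcap_{p\in X}\overline{A_p}\right|$ as $c(P\setminus X)^k$ — is where the small combinatorial argument sits. I would unwind the definition: $\overline{A_p}$ consists of tuples $\tuple{Q_1,\dots,Q_k}$ such that $p \notin \bigcup_i Q_i$, so the intersection over $p\in X$ consists precisely of those tuples whose $Q_i$ are all disjoint from $X$, i.e.\ tuples whose coverable sets are drawn from $P\setminus X$. Since coverability is defined with respect to the ambient point set, any $Q\subseteq P\setminus X$ that is coverable in $P\setminus X$ is also coverable in $P$ (the same curve witnesses it), and vice versa. So these tuples are exactly the $k$-tuples of coverable sets in $P\setminus X$, giving $c(P\setminus X)^k$ by independence of the coordinates.

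For the running time, the sum contains $2^n$ terms and by \rlem{countsets} each value $c(P\setminus X)$ can be produced in $\BigO{n^{s+2}}$ time; raising the result to the $k$-th power is polynomial in $n$ and $k$, so the total is $\BigO{2^n n^{s+2}}$ up to polynomial factors as claimed. The part I expect to be most delicate is the \textbf{space bound}. Each intermediate value $c(P\setminus X)^k$ is bounded by $(2^n)^k = 2^{nk}$, which takes $nk$ bits to store, and the running partial sum has absolute value at most $2^n\cdot 2^{nk}$, still $\BigO{nk}$ bits. I would argue that we iterate through $X\subseteq P$ in any fixed order (e.g.\ lexicographic on bitmasks of length $n$, which costs $\BigO{n}$ bits of counter), and for each $X$ invoke the polynomial-space subroutine of \rlem{countsets}; only the signed running total of size $\BigO{nk}$ bits, the current mask, and the working space for $c(\cdot)$ need to persist, yielding the $\BigO{nk}$ bit bound.
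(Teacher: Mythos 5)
Your proposal is correct and follows essentially the same route as the paper: \rlem{numberofintersect} for the biconditional, \rthm{inclexcl} for the first equality, the counting of $k$-tuples of coverable sets drawn from $P\setminus X$ for the second, and \rlem{countsets} for the per-term cost. The only (harmless) deviation is in the space bound, where the paper keeps the partial sum below $2^{nk}$ by choosing an appropriate order of the subsets $X$, while you simply bound it by $2^n\cdot 2^{nk}$, which is still $\BigO{nk}$ bits.
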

\begin{proof}
 Since $c(P\setminus X)^k\leq 2^{nk}$ for any $X$ it can be stored in $nk$ bits. The absolute value of the partial sum can be kept smaller than $2^{nk}$ by choosing an appropriate next $X$. The rest follows from \rthm{inclexcl}, \rlem{numberofintersect} and \rlem{countsets}.
\end{proof}

Finally, we remove the assumption that we have the parameter $k$. Any input requires at most $n$ curves.
Since $k$ is only used to compute $c(X)^k$ we can try $k=1,2,\ldots,n$ and return the first $k$ with a positive sum.
This increases the time by an $\BigO{n}$ factor. Alternatively, we can run $n$ simultaneous sums, since the parameter $k$ is only accessed when computing $c(X)^k$. This increases the space by factor $\BigO{n}$ and the time by a lower-order additive term.

\subsection{Any-flat Hyperplane Cover}
Here we treat all flats in the instance $\tuple{P_0,\dots,P_{d-2}}$ as atomic objects and $P$ as a union $\bigcup_{i=0}^{d-2}P_i$.
This algorithm is very similar to that of \rsec{cc-ie}, so we only describe their differences. A set of flats $Q \subseteq P$ is a coverable set in $P$ if there exists a hyperplane that covers every $p \in Q$.
The representative of $\emptyset$ is $\emptyset$, and the representative of a non-empty coverable set $Q$ is a set $R=\set{r_1,\dots,r_i}$. Let $r_1$ be the first flat in $Q$ and for $j\geq 2$, $r_j$ is defined if the affine hull of $\set{r_1,\dots,r_{j-1}}$ has lower dimension than the affine hull of $Q$. If so, let $r_j$ be the first flat in $Q$ that is not covered by the affine hull of $\set{r_1,\dots,r_{j-1}}$.

\begin{lemma}
\label{lem:hpcountrepresentatives}
$q(X,\pi,R)$ can be computed in $\BigO{|X|}$ time and $\BigO{\log |X|}$ space.
\end{lemma}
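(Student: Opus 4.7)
The plan is to mirror the proof of \rlem{countrepresentative} from the Curve Cover case, adjusting for the fact that the representative for a coverable set of flats is a chain of flats whose affine hulls strictly grow, rather than a bounded-size set of points.

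First I would verify that $R$ is a valid representative: iterating $j=1,\dots,|R|$, I compute (incrementally) $\mathrm{aff}(\{r_1,\dots,r_j\})$ and check that its dimension strictly exceeds that of $\mathrm{aff}(\{r_1,\dots,r_{j-1}\})$. If this test fails at any step, I return $q(X,\pi,R)=0$. I also return $0$ if $\dim(\mathrm{aff}(R)) > d-1$, since then no hyperplane can cover a set containing $R$. The special case $R=\emptyset$ is handled by returning $1$ (corresponding uniquely to the empty coverable set). From here on assume $R\neq\emptyset$ passes the validity test, so $\mathrm{aff}(R)$ is a proper subspace of (or equal to) a hyperplane, and therefore every superset of $R$ contained in $\mathrm{aff}(R)$ is coverable.

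Next I characterize exactly which flats of $X\setminus R$ can be added. By the greedy definition of the representative, a coverable set $Q$ with representative exactly $R$ is precisely a set of the form $Q = R \cup U'$, where $U'\subseteq U$ and $U$ is the set of flats $p\in X\setminus R$ satisfying (a) $\pi(p) > \pi(r_1)$, and (b) if $\pi(r_j) < \pi(p) < \pi(r_{j+1})$ for some $1\le j < |R|$, then $p\subseteq \mathrm{aff}(\{r_1,\dots,r_j\})$, while if $\pi(p) > \pi(r_{|R|})$, then $p\subseteq \mathrm{aff}(R)$. Conditions~(a) and~(b) ensure, respectively, that $r_1$ remains the first flat of $Q$ and that no ``premature'' extension of the representative chain occurs; together with $\mathrm{aff}(Q)=\mathrm{aff}(R)$, they are both necessary and sufficient. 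Each flat of $U$ can be independently included or excluded, giving $q(X,\pi,R) = 2^{|U|}$.

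To compute $|U|$ it suffices to make one pass through $X\setminus R$: for each $p$, locate in constant time the index $j$ such that $\pi(p)$ lies between $\pi(r_j)$ and $\pi(r_{j+1})$ (possible since $|R|\le d$ is constant), and test whether $p$ is contained in the precomputed affine hull $\mathrm{aff}(\{r_1,\dots,r_j\})$. In constant dimension $d$, every affine-hull containment test runs in constant time. Total time is $\BigO{|X|}$. For space, the only quantities to maintain are the $|R|\le d$ affine hulls (each of constant description size) and a single counter for $|U|$ which fits in $\BigO{\log|X|}$ bits; the final value $2^{|U|}$ can likewise be represented or passed on symbolically by its exponent, so no step exceeds $\BigO{\log|X|}$ space. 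The only subtle point, and the place where the argument departs from \rlem{countrepresentative}, is the interval-wise affine-hull test in condition~(b); once that is set up correctly, the counting reduces to a single exponentiation.
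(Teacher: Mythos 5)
Your proposal is correct and follows essentially the same route as the paper's (much terser) proof: identify the set of flats that can be freely added to $R$ — namely those lying on the affine hull of the prefix $\{r_1,\dots,r_j\}$ determined by their position in $\pi$ — and return $2$ raised to its cardinality, maintaining only a counter for the space bound. Your explicit validity check and interval-wise characterization are just a more detailed write-up of the same argument.
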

\begin{proof}
If $R$ is not a valid representative, $q(X,\pi,R)=0$. Otherwise, let $U$ be the union all coverable sets with the representative $R$, and $X'=U\setminus S$.
For every $p\in X\setminus R$, let $j$ be the highest index such that $\pi(r_j)<\pi(p)$. Then $p \in X'$ if and only if $p$ is on the affine hull of $\set{r_1,\dots,r_j}$.
\end{proof}

There are $\BigO{\binom{n}{d}}$ representatives $R$ with $q(X,\pi,R)>0$ so the following two results hold; their proofs are analogous to \rlem{countsets} and \rthm{cc-inclexcl}.

\begin{lemma}
\label{lem:hpcountsets}
$c(X)$ may be computed in $\BigO{|X|^{d+1}}$ time and $\BigO{|X|}$ space.
\end{lemma}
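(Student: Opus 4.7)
The plan mirrors the proof of \rlem{countsets} almost line for line, with the only substantive change being the bound on the number of representatives. First I would fix an arbitrary ordering $\pi$ of the flats in $X$. By the definition of representative given just before \rlem{hpcountrepresentatives}, every coverable set $Q \subseteq X$ has exactly one representative under $\pi$, so partitioning the coverable sets by their representative yields
\[
c(X) \;=\; \sum_R q(X,\pi,R),
\]
where $R$ ranges over all subsets of $X$.

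The key quantitative step is to bound the number of $R$ for which $q(X,\pi,R) > 0$. Here I would use the fact that a representative is built by appending flats whose inclusion strictly increases the dimension of the affine hull of the previously chosen flats. Since every coverable set lies inside a hyperplane of $\reals^d$, its affine hull has dimension at most $d-1$, so a representative contains at most $d$ flats. Therefore at most $O\!\binom{|X|}{d} = O(|X|^d)$ subsets $R$ can contribute a nonzero term to the sum above.

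Combining this with \rlem{hpcountrepresentatives}, which evaluates each $q(X,\pi,R)$ in $O(|X|)$ time and logarithmic space, gives a total running time of $O(|X|^{d+1})$. The space is dominated by storing the ordering $\pi$, which uses $O(|X|)$ bits. I expect the only mildly subtle point to be justifying the size bound $|R| \leq d$ from the dimension-increase rule in the representative definition; everything else is a direct transcription of the argument in \rlem{countsets}, with $s+1$ replaced by $d$.
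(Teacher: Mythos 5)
Your proposal is correct and follows essentially the same route as the paper, which simply observes that only $\BigO{\binom{n}{d}}$ representatives have $q(X,\pi,R)>0$ and declares the rest analogous to \rlem{countsets}. Your explicit justification that $|R|\leq d$ (each added flat strictly increases the affine hull's dimension, which is capped at $d-1$ since a coverable set lies in a hyperplane) is exactly the detail the paper leaves implicit, and the rest of your accounting of time and space matches.
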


\begin{theorem}
\label{thm:hp-inclexcl}
There exists a hyperplane $k$-cover for $P$ if and only if
\begin{equation*}
\left|\bigcap _{p\in P} A_p\right| = \sum_{X \subseteq P} (-1)^{|X|} \left|\bigcap_{p \in X} \overline{A_p}\right| = \sum_{X \subseteq P} (-1)^{|X|} c(P \setminus X)^k \geq 1.
\end{equation*}
This comparison may be performed in $\BigO{2^n n^{d+1}}$ time and $\BigO{nk}$ bits of space.
\end{theorem}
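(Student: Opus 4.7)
The plan is to mirror the proof of \rthm{cc-inclexcl} almost verbatim, substituting hyperplanes (and flats) for curves and relying on \rlem{hpcountsets} instead of \rlem{countsets}. First I would establish the hyperplane analogue of \rlem{numberofintersect}: a tuple in $\bigcap_{p\in P} A_p$ gives, for each coverable set, a hyperplane covering it, and the union of those hyperplanes covers every flat in $P$; conversely, from a hyperplane $k$-cover one builds a tuple by collecting the flats lying on each chosen hyperplane. Hence a $k$-cover exists iff $|\bigcap_{p\in P} A_p|\geq 1$.

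Next I would apply \rthm{inclexcl} to rewrite this intersection as the alternating sum over $X\subseteq P$ of $|\bigcap_{p\in X}\overline{A_p}|$. Exactly as in the curve case, $\overline{A_p}$ is the set of tuples in which no coverable set contains $p$, so $\bigcap_{p\in X}\overline{A_p}$ is the set of tuples consisting entirely of coverable subsets of $P\setminus X$. The number of such tuples is $c(P\setminus X)^k$, because each of the $k$ slots is independently chosen from the $c(P\setminus X)$ coverable sets (with replacement), which yields the stated equality.

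For the complexity, there are $2^n$ terms and each requires one evaluation of $c(P\setminus X)^k$. By \rlem{hpcountsets}, $c(P\setminus X)$ is computable in $\BigO{n^{d+1}}$ time and $\BigO{n}$ space; the $k$-th power adds only a polynomial factor, giving the $\BigO{2^n n^{d+1}}$ overall time bound. For space, since $c(P\setminus X)\leq 2^n$ we have $c(P\setminus X)^k\leq 2^{nk}$, so each term fits in $nk$ bits. As in \rthm{cc-inclexcl}, to ensure the running partial sum also fits in $\BigO{nk}$ bits one processes the subsets $X$ in an order that keeps the intermediate value bounded by $2^{nk}$ in absolute value (for instance, interleaving positive and negative contributions).

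The only genuinely new ingredient compared to the curve case is the redefinition of coverable sets and representatives in terms of flats and affine hulls, which has already been handled in \rlem{hpcountrepresentatives} and \rlem{hpcountsets}; there is no further obstacle, and the proof is essentially a transcription. If any subtlety arises it is just the need to treat the mixed input $P=\bigcup_i P_i$ as atomic objects in the same universe, but this is already the convention set at the start of the subsection and poses no real difficulty.
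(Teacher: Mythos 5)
Your proposal is correct and matches the paper's intent: the paper itself states that the proof of \rthm{hp-inclexcl} is analogous to that of \rthm{cc-inclexcl}, substituting \rlem{hpcountsets} for \rlem{countsets} and treating all flats as atomic objects, which is exactly the transcription you carry out (including the $2^{nk}$-bound on each term and on the partial sum).
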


\section{Curve Cover}
\label{sec:curvecover}
Recall that we are considering $(d,s)$-curves, where $d$ and $s$ are constants.
Since we have a kernel of up to $sk^2$ points, \algo{Inclusion-Exclusion} used on its own runs in time $\ostar{2^{sk^2}}$ which is too slow to give an improvement.
We improve this by first using a technique that reduces the number of points in the input, and then using \algo{Inclusion-Exclusion}.
To describe this technique and the intuition behind it, we first provide a framework based on the following theorem by Pach and~Sharir.

\begin{theorem}[Pach and Sharir \cite{pach1998number}]
\label{thm:curves}
Let $P$ be a set of $n$ points and $L$ a set of $m$ $(d,s)$-curves in the plane. The number of point-curve incidences between $P$ and $L$ is
\[I(P,L) = \BigO{n^{d/(2d-1)}m^{(2d-2)/(2d-1)}+n+m}\]
\end{theorem}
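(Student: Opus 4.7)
The plan is to combine a simple incidence-counting argument with a random-sampling / cutting step in the spirit of Clarkson and Shor. First I would establish a \emph{weak} bound $I(P,L) = \BigO{n m^{(d-1)/d} + m}$ by counting pairs $(Q,c)$ where $Q\subseteq P$ is a $d$-subset and $c \in L$ passes through every point of $Q$. Since any $d$ points lie on at most $s$ curves of $\C$, this count is bounded by $s\binom{n}{d}$. On the other hand, a curve with $I_c \geq d$ incidences contributes $\binom{I_c}{d}$ such pairs, and convexity of $\binom{x}{d}$ (Jensen's inequality) yields $\sum_c \binom{I_c}{d} \geq m\binom{I/m}{d}$ where $I = I(P,L)$. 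Rearranging gives the weak bound, with the additive $m$ absorbing curves that have fewer than $d$ incidences.

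Second, to amplify this weak bound I would take a random sample $R$ of $r$ curves from $L$. Since two $(d,s)$-curves intersect in at most $s$ points, the arrangement of $R$ has $\BigO{r^2}$ faces, edges, and vertices. A standard $(1/r)$-cutting argument (Clarkson--Shor, adapted to bounded-complexity algebraic curves) refines this arrangement into $\BigO{r^2}$ cells such that each cell is crossed by at most $\BigO{m/r}$ curves of $L$. For a cell $C$ containing $n_C$ points and crossed by $m_C \leq \BigO{m/r}$ curves, applying the weak bound \emph{inside} $C$ gives $I(P_C, L_C) = \BigO{n_C (m/r)^{(d-1)/d} + m/r}$. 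Summing over all $\BigO{r^2}$ cells, using $\sum_C n_C = n$, yields
\[ I(P,L) \leq \BigO{n\, (m/r)^{(d-1)/d} + mr} + (\text{boundary corrections}). \]
Balancing the two main terms, i.e.\ choosing $r^{(2d-1)/d} \sim n / m^{1/d}$, gives the claimed bound $I(P,L) = \BigO{n^{d/(2d-1)} m^{(2d-2)/(2d-1)} + n + m}$; the additive $n+m$ absorbs the sparse regimes in which this parameter choice is degenerate (either $r < 1$ or $r > m$).

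The main obstacle will be the cutting step. The $\BigO{r^2}$ face count of the sample arrangement follows immediately from the $(d,s)$-property, but producing a refinement in which every cell is crossed by only $\BigO{m/r}$ curves of $L$ requires the full Clarkson--Shor sampling machinery combined with a vertical-decomposition-style subdivision tailored to algebraic curves rather than to lines. One must also verify that the family of curve-pieces induced in each subcell still behaves like a $(d,s)$-family so that Step 1 applies inside each cell, and carefully account for incidences occurring on the cell boundaries (which lie on sample curves of $R$), ensuring these boundary contributions are dominated by the additive $n + m$ term. Once these technicalities are discharged, the arithmetic of Step 4 is routine.
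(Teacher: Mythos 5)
The paper does not prove this statement at all: it is quoted directly from Pach and Sharir, whose original argument is Sz\'ekely's crossing-number method (drawing a multigraph whose edges are arcs between consecutive points along curves, bounding its crossing number by $s\binom{m}{2}$, and handling the possibly unbounded edge multiplicities that arise for $d\geq 3$ by a separate counting/induction step); that route works for arbitrary continuous $(d,s)$-curves, not just algebraic ones. Your proposal takes a genuinely different and also standard path: the K\H{o}v\'ari--S\'os--Tur\'an-type weak bound $\BigO{nm^{(d-1)/d}+m}$ from the ``$d$ points determine at most $s$ curves'' property, amplified by a Clarkson--Shor $(1/r)$-cutting of size $\BigO{r^2}$ and balanced at $r^{(2d-1)/d}\sim n/m^{1/d}$, which is the Clarkson--Edelsbrunner--Guibas--Sharir--Welzl-style derivation of \ST{}-type bounds. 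Your exponent arithmetic is correct, the degenerate regimes are correctly absorbed into $n+m$, and your worry about the restricted curve pieces is a non-issue (apply the weak bound to the points inside a cell against the full curves crossing it; the $(d,s)$-property is inherited immediately). The two real costs of your route, relative to the paper's source, are (i) it needs $\BigO{r^2}$-size cuttings, hence bounded description complexity of the curves (fine for this paper, whose $\C$ consists of algebraic curves with constant-time predicates, but strictly weaker than Pach--Sharir's generality), and (ii) the boundary incidences are not simply ``dominated by $n+m$'' --- the standard fix is to assign each point to one cell, count only curves crossing that cell's interior via the weak bound, and charge the remaining incidences to curve--boundary crossings (at most $\BigO{m r}$ overall, i.e., absorbed by the term you already balance) or handle them by induction; this is routine but should be stated that way rather than waved into the linear terms.
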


Note that the above holds for curves in arbitrary dimension.
This can be seen by projecting the points and curves onto a random plane, which will keep the projection of distinct points, and
prevent the curves from projecting to overlapping curves.

\begin{definition}
Let a \emph{candidate} be any curve in $\C$ that covers at least 1 point in $P$. 
Define its \emph{richness} with respect to $P$ as the number of points it covers. A candidate is $\gamma$-rich if its richness is at least $\gamma$, and $\gamma$-poor if its richness is at most $\gamma$.
\end{definition}
Recall that from the kernelization in \rlem{cckernel}, it follows that every candidate is $sk$-poor. The following gives a bound on the number of $\gamma$-rich candidates.

\begin{lemma}
\label{lem:numberofcurves}
Let $P$ be a set of $n$ points in some finite dimension space $\reals^x$. The number of $\gamma$-rich candidates in $P$ is
$\BigO{\frac{n^d}{\gamma^{2d-1}}+\frac{n}{\gamma}}$
\end{lemma}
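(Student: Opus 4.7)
The plan is to bound the number $m$ of $\gamma$-rich candidates by a direct double counting against \rthm{curves}. Let $L \subseteq \C$ be the set of $\gamma$-rich candidates in $P$ and set $m = |L|$. By definition every $c \in L$ covers at least $\gamma$ points of $P$, so it contributes at least $\gamma$ point-curve incidences, giving the lower bound $I(P,L) \geq \gamma m$.

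Before invoking \rthm{curves} we need to address the fact that $P$ lives in $\reals^x$ rather than in the plane. I would follow the remark immediately after \rthm{curves}: choose a generic plane and project. Since $P$ is finite and $L$ is a finite subfamily of $\C$, the set of projection directions causing two distinct points of $P$ to collide, or two distinct curves of $L$ to project to the same image curve, is a measure-zero union of lower-dimensional conditions. For a generic choice we therefore preserve $|P|$, $|L|$ and every incidence. The projected curves still satisfy the $(d,s)$ conditions generically, so the planar Pach--Sharir bound applies and gives
\[
\gamma m \,\leq\, I(P,L) \,=\, \BigO{n^{d/(2d-1)} m^{(2d-2)/(2d-1)} + n + m}.
\]

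To extract a bound on $m$ I would split on which of the three terms dominates the right-hand side. If the first term dominates, then $\gamma m \leq C\, n^{d/(2d-1)} m^{(2d-2)/(2d-1)}$ for a constant $C$; isolating $m$ gives $m^{1/(2d-1)} \leq C n^{d/(2d-1)}/\gamma$, hence $m = \BigO{n^d/\gamma^{2d-1}}$. If the second term dominates, then $\gamma m \leq C n$, hence $m = \BigO{n/\gamma}$. The third case $\gamma m \leq C m$ forces $\gamma = \BigO{1}$, in which case the bound $m = \BigO{n/\gamma}$ is trivially dominated by $n$ and the statement is still valid. Taking the maximum over the cases yields
\[
m \,=\, \BigO{\frac{n^d}{\gamma^{2d-1}} + \frac{n}{\gamma}},
\]
which is exactly the claim.

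The only non-routine step is justifying that a generic projection to $\reals^2$ preserves both the incidence count and the $(d,s)$-curve hypothesis needed to apply \rthm{curves}. This is the piece I would be most careful with in writing out the proof; the algebraic manipulation to solve for $m$ is standard once Pach--Sharir is in hand.
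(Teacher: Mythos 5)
Your overall strategy is exactly the paper's: lower-bound the incidence count by $\gamma m$, apply the Pach--Sharir bound of \rthm{curves} (after a generic projection to the plane, which is precisely the remark the paper makes right after the theorem, so your care there is fine but not a new ingredient), and then split into three cases according to which term of the bound dominates. Your handling of the first two cases is correct, and your algebra in the first case cleanly yields $m=\BigO{n^d/\gamma^{2d-1}}$.

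The problem is the third case. From $\gamma m \leq C m$ you correctly get $\gamma = \BigO{1}$, but the follow-up claim that ``the bound $m=\BigO{n/\gamma}$ is trivially dominated by $n$ and the statement is still valid'' does not hold: in this regime the incidence inequality gives no upper bound on $m$ at all, since the $m$ term on the right absorbs the left-hand side, and $m=\BigO{n/\gamma}$ is false in general --- for constant $\gamma$ the number of $\gamma$-rich candidates can be as large as $\Theta\!\left(n^d\right)$ (already for lines, $d=2$ and $\gamma=2$, every line through a pair of points is $2$-rich, giving up to $\binom{n}{2}$ candidates, far more than $n$). What is needed here, and what the paper's proof supplies, is a separate counting argument: since at most $s$ curves of $\C$ pass through any $d$ points, the total number of distinct candidates is at most $s\binom{n}{d}=\BigO{n^d}$, and because $\gamma$ is a constant in this case this is $\BigO{n^d/\gamma^{2d-1}}$, so the case is absorbed by the first term of the claimed bound. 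With that one observation added, your argument coincides with the paper's proof.
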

\begin{proof}
If $m$ curves pass through $\gamma$ or more points, this generates at least $m\gamma$ incidences. By Theorem~\ref{thm:curves} we get
\begin{equation*}
m\gamma = \BigO{n^{d/(2d-1)}m^{(2d-2)/(2d-1)}+n+m}.
\end{equation*}
We deal with the three terms in the $\BigO{\cdot}$ separately. If $m\gamma=\BigO{n^{d/(2d-1)}m^{(2d-2)/(2d-1)}}$, the expression simplifies to $m=\BigO{n^d\gamma^{-(2d-2)}}$. If $m\gamma=\BigO{n}$ we have that $m=\BigO{n/\gamma}$. If $m\gamma=\BigO{m}$ then $\gamma$ is a constant. Since at most $s$ curves pass through the same $d$ points, $m$ is bounded by the total number of distinct curves $s\binom{n}{d}=\BigO{n^d}$. Therefore, this case is covered by the first term, giving the total bound $m=\BigO{\frac{n^d}{\gamma^{2d-1}}+\frac{n}{\gamma}}$.
\end{proof}

\subparagraph{Intuition for algorithm.} We exploit the following observation: given a $k$-cover $\mathcal{C}$, some curves in $\mathcal{C}$ might be significantly richer than others.
The main idea of our technique is to try to select (\ie branch on) these rich curves first.
Since they cover ``many'' points, removing these decreases the ratio $|P|/k$ and calling \algo{Inclusion-Exclusion} eventually becomes viable.
The idea to branch on rich curves first has another important consequence.
Suppose we know that no candidate in $\C$ covers more than $\gamma$ points in $P$. 
This immediately implies that if there are strictly more than $k\gamma$ points in $P$, it is impossible to cover $P$.
Therefore we have $|P|/k\leq \gamma$.
Now look at the set of $\frac{\gamma}{2}$-rich candidates and decide for each whether to include it in the cover or not. 
By the earlier observation, including such a candidate is good for reducing the ratio $|P|/k$.
But excluding such a candidate has essentially the same effect, because that candidate will not be considered again (remove it from $\mathcal{C}$). Any remaining candidates in $\mathcal{C}$ now cover at most $\frac{\gamma}{2}$ points; we must have $|P|/k\leq\frac{\gamma}{2}$ (or the instance is not solvable) and have strengthened the bound on the ratio.
Regardless of which choice we make, we make progress towards being able to call the base case.

This strategy also makes sense from a combinatorial point of view, because from \rlem{numberofcurves} it follows that the search space is small for rich curves. Switching to \algo{Inclusion-Exclusion} early enough lets us bypass the potentially very large search space of poor candidates.

\subparagraph{The Algorithm.} Let $r$ be a parameter. The exact value is set in the proof of \rthm{cctime}, for now it is enough that $r=\BigO[\Theta]{\log k}$. For a budget $k$ let $\tuple{k_1,\dots,k_r}$ with $\sum_j k_j=k$ be a \emph{budget partition}. We describe a main recursive algorithm \algo{CC-Recursive} (see appendix for pseudocode) that takes 4 arguments:
the point set $P$, the class of curves $\C$, a budget partition $\tuple{k_1,\dots,k_r}$, and a recursion level $i$.
For convenience we define $\gamma_i = sk/2^i$.
A simple top-level procedure \algo{CurveCover} tries all budget partitions and calls the recursive algorithm with that partition at recursion level 1.

At every recursion depth $i$, let $K_i=\sum_{j=i}^r k_j$ be the remaining budget and $P_i$ the remaining point set.
That means earlier levels have created a partial solution $\mathcal{C}_{i-1}$ of $k-K_i$ curves covering the points $P \setminus P_i$.
The recursive algorithm will try to cover the remaining points using $\gamma_{i-1}$-poor curves.
Specifically, at depth $i$ let $S$ be the set of candidates from $\C$ that are $\gamma_{i}$-rich and $\gamma_{i-1}$-poor.
Since from depth $i$ and onward it has a remaining budget of $K_i$ and cannot pick candidates that are $(\gamma_{i-1}+1)$-rich, the algorithm rejects if strictly more than $K_i\gamma_{i-1}$ remain.
If fewer than $\frac{(d-1)}{2}K_i\log k$ points remain, the sub problem is solved with inclusion-exclusion.

If neither a reject (due to too many points) or a base-case call to inclusion-exclusion has occurred, the algorithm will branch. It does so in $\binom{|S|}{k_i}$ ways by simply trying all ways of choosing $k_i$ candidates from $S$. For each such choice, all points in $P$ covered by the chosen candidates are removed and the algorithm recurses to depth $i+1$. If all those branches fail, the instance is rejected.

\subsection{Analysis}
\begin{lemma}
\label{lem:cc-correct}
Algorithm \algo{CurveCover} decides whether $P$ has a $k$-cover of curves from $\C$.
\end{lemma}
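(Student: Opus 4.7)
The statement requires two directions: soundness (whenever \algo{CurveCover} accepts, $P$ has a $k$-cover of curves from $\C$) and completeness (whenever such a cover exists, the algorithm accepts). Soundness is immediate from the structure of the recursion: every accepting leaf has accumulated a partial cover $\mathcal{C}_{i-1}$ of at most $k$ explicitly chosen curves, and has either exhausted the point set ($P_i=\emptyset$) or invoked \algo{Inclusion-Exclusion} on the residual instance $(P_i, K_i)$, which by \rthm{cc-inclexcl} returns ``yes'' only when a valid $K_i$-cover of $P_i$ exists. In either case, the union of $\mathcal{C}_{i-1}$ with the residual cover is a genuine $k$-cover of $P$.

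For completeness, fix a $k$-cover $\mathcal{C}^*$; without loss of generality $|\mathcal{C}^*|=k$, padding with trivial curves if the optimum is smaller. The plan is to exhibit one budget partition and one branching sequence the algorithm explores, and show that it leads to acceptance. I assign the curves of $\mathcal{C}^*$ to levels greedily, top-down: at level $i$, having produced the residual $P_i$, assign to level $i$ every still-unassigned $c \in \mathcal{C}^*$ whose richness with respect to $P_i$ is at least $\gamma_i$, and set $k_i^*$ to be the resulting count. The key invariant, proved by induction on $i$, is that every unassigned curve at the start of level $i$ is $\gamma_{i-1}$-poor with respect to $P_i$. The base case $i=1$ holds because $\gamma_0 = sk$ and \rlem{cckernel} guarantees every candidate is $sk$-poor; the inductive step follows since unassigned curves at the end of level $i$ have richness strictly below $\gamma_i$ with respect to $P_i$, and removing further points to form $P_{i+1}$ only decreases their richness. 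Consequently every level-$i$ curve lies in the algorithm's set $S$ of $\gamma_i$-rich, $\gamma_{i-1}$-poor candidates, and exactly one of the $\binom{|S|}{k_i^*}$ branches selects precisely them.

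It remains to verify that this branch is not aborted. The rejection test $|P_i| > K_i\gamma_{i-1}$ cannot fire along our branch: the $K_i$ unassigned curves of $\mathcal{C}^*$ together cover $P_i$ and, by the invariant, each covers at most $\gamma_{i-1}$ of its points, so $|P_i| \le K_i \gamma_{i-1}$. At the first level $i^\star$ where $|P_{i^\star}| \le \tfrac{d-1}{2}K_{i^\star}\log k$, the algorithm hands the subinstance to \algo{Inclusion-Exclusion}, and since the remaining $K_{i^\star}$ curves of $\mathcal{C}^*$ witness a $K_{i^\star}$-cover of $P_{i^\star}$, \rthm{cc-inclexcl} forces an accept; such an $i^\star$ exists once $\gamma_{i-1}=sk/2^{i-1}$ drops below $\tfrac{d-1}{2}\log k$, which happens within the $r=\Theta(\log k)$ levels the algorithm is configured to traverse. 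The one real obstacle in the argument is pinning down the right notion of ``tier'': it is tempting to classify curves by their richness in the original $P$, but both the rejection threshold and the base-case threshold are controlled cleanly only when the poorness invariant is maintained with respect to the evolving $P_i$.
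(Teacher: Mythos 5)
Your proof is correct and follows essentially the same route as the paper's: both directions match, and your "greedy assignment to levels" with the invariant that the unassigned curves of the fixed cover are $\gamma_{i-1}$-poor on the residual $P_i$ is exactly the paper's induction hypothesis (partial solution contained in the cover, no $(\gamma_{i-1}+1)$-rich curves of the cover remain on $P_i$), merely phrased via an explicit budget partition instead of non-deterministic choices. No gaps worth flagging.
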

\begin{proof}
Regard \algo{CurveCover} as being non-deterministic. Suppose $P$ has a $k$-cover $\mathcal{C}$. The proof is by induction on the recursion.
Assume as the induction hypothesis that the current partial solution $\mathcal{C}_{i-1}$ is a subset of $\mathcal{C}$ and that $\mathcal{C}$ contains no curves that are $(\gamma_{i-1}+1)$-rich when restricted to $P_i$.
The assumption is trivially true for $i=1$ as $\mathcal{C}_0=\emptyset$.

By the induction hypothesis, $\mathcal{C}\setminus\mathcal{C}_{i-1}$ is a $K_i$-cover for $P_i$ using only $\gamma_{i-1}$-poor curves.
Therefore it holds that $|P_i| \leq \gamma_{i-1}K_i$, and thus the algorithm does not reject incorrectly.
Furthermore, if \algo{Inclusion-Exclusion} is called it accepts since we are in the case that a solution exists.

Otherwise, let $D \subseteq \mathcal{C} \setminus \mathcal{C}_{i-1}$ be the curves that are $\gamma_{i}$-rich when restricted to $P_i$.
The algorithm non-deterministically picks $D$ from the set of candidates $S$ and constructs $\mathcal{C}_{i} = \mathcal{C}_{i-1} \cup D$.
This leaves $\mathcal{C}_i$ to be a subset of $\mathcal{C}$.
Additionally, $\mathcal{C}_i$ contains all $\gamma_i$-rich curves in $\mathcal{C}$ restricted to $P_i$ and hence to $P_{i+1}\subseteq P_i$, upholding the induction hypothesis.

Suppose the algorithm accepts the instance $\tuple{P,k}$. It can only accept if some call to \algo{Inclusion-Exclusion} accepts. Let $\mathcal{C}_r$ be the set of curves selected by the recursive part such that \algo{Inclusion-Exclusion} accepted the instance $\tuple{P\setminus \mathcal{C}_r,k-|\mathcal{C}_r|}$. Let $\mathcal{C}_{ie}$ be any $(k-|\mathcal{C}_r|)$-cover of $P \setminus \mathcal{C}_r$. Then $\mathcal{C}_r \cup \mathcal{C}_{ie}$ is a $k$-cover of $P$.
\end{proof}

By the nature of the inclusion-exclusion algorithm, \algo{CurveCover} detects the existence of a $k$-cover rather than producing one. But since \algo{CC-Recursive} produces a partial cover during its execution, it is straight-forward to extend that into a full $k$-cover by using \algo{Inclusion-Exclusion} as an oracle.

\subparagraph{Running time.}
To analyze the running time of the algorithm we see the execution of \algo{CC-Recursive} as a search tree $\mathcal{T}$.
Each leaf of the tree is either an immediate reject or a call to \algo{Inclusion-Exclusion}.
Since the latter is obviously most costly to run, we must assume for a worst case analysis that every leaf node calls the base case algorithm. The running time is the number of leaf nodes in the search tree times the running time of \algo{Inclusion-Exclusion}.
Since the algorithm performs exponential work in these leaf nodes but not in inner nodes, it is insufficient to reason about the \emph{size} of the tree. Therefore we will speak of the ``running time of a subtree'', which simply means the running time of the recursive call that corresponds to the root of that subtree.
We show that in the worst case, $\mathcal{T}$ is a complete tree $\mathcal{T}_1$ of depth $r$.
That is, $\mathcal{T}_1$ has no leaf nodes at depths less than $r$.

Let $\mathcal{T}_j$ be a complete subtree of $\mathcal{T}_1$ rooted at depth $j$.
To prove that $\mathcal{T}_1$ is the worst case for $\mathcal{T}$ we prove two things.
First we first prove an upper bound on the running time for arbitrary $\mathcal{T}_j$.
Then we prove that the running time of $\mathcal{T}_1$ can only improve if an arbitrary subtree is replaced by a leaf (\ie a call to \algo{Inclusion-Exclusion}).
The most involved part is proving an upper bound on the number of leaves of $\mathcal{T}_j$.

\begin{lemma}
\label{lem:leafbound}
Let $L$ be the number of leaves in $\mathcal{T}_j$.
Then for some constant $c_2=c_2(d,s)$, $L$ is bounded by
\[ L \leq \left(\frac{c_2k^d}{(k-k_r)\log^{d-1} k}\right)^{K_j-k_r}\]
\end{lemma}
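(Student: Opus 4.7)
The plan is to bound $L$ by the product of branching factors over levels $i=j,\dots,r-1$ and then collapse that product via weighted AM-GM, so the geometric decay of the richness thresholds $\gamma_i = sk/2^i$ enters the bound through a single sum dominated by its last term rather than level-by-level. Since $\mathcal{T}_j$ is complete of depth $r-j$ and every internal node at depth $i$ branches into $\binom{|S_i|}{k_i}$ children, $L \le \prod_{i=j}^{r-1}\binom{|S_i|}{k_i}$, where $|S_i|$ is a worst-case bound over depth-$i$ nodes. At any non-rejecting node the remaining cover curves are $\gamma_{i-1}$-poor, so $|P_i| \le K_i\gamma_{i-1} = 2K_i\gamma_i$; substituting this into \rlem{numberofcurves} at richness $\gamma_i$ and using $K_i \le k$ yields
\[
|S_i| = \BigO{|P_i|^d/\gamma_i^{2d-1} + |P_i|/\gamma_i} = \BigO{K_i^d/\gamma_i^{d-1} + K_i} \le \BigO{k^d/\gamma_i^{d-1} + k}.
\]

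Set $T = \sum_{i=j}^{r-1} k_i = K_j - k_r$ and apply $\binom{a}{b} \le (ea/b)^b$ to each factor. Weighted AM-GM with weights $k_i/T$ then gives
\[
L \;\le\; \prod_{i=j}^{r-1}\left(\frac{e|S_i|}{k_i}\right)^{k_i} \;\le\; \left(\frac{e}{T}\sum_{i=j}^{r-1}|S_i|\right)^T.
\]
The sum $\sum_i 1/\gamma_i^{d-1}$ is a geometric series in $2^{i(d-1)}/(sk)^{d-1}$ and is dominated by its last-level term $1/\gamma_{r-1}^{d-1}$. Choosing $r$ as in the algorithm description, so that the base case triggers when $\gamma_{i-1}$ drops to $\Theta(\log k)$, gives $\gamma_{r-1} = \Theta(\log k)$ and hence $\sum_i 1/\gamma_i^{d-1} = \BigO{1/\log^{d-1} k}$. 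The additive contribution $\BigO{k(r-j)} = \BigO{k\log k}$ from the $+k$ term in the bound on $|S_i|$ is absorbed by $k^d/\log^{d-1}k$ for $d\ge 2$ and $k$ large. Substituting and collecting constants into $c_2 = c_2(d,s)$ produces exactly the stated bound.

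The main obstacle is securing the $\log^{d-1} k$ in the denominator rather than $\log^{d-2} k$. A direct per-level worst-case expansion of $\prod(e|S_i|/k_i)^{k_i}$ via Jensen-style inequalities on the $k_i$'s loses a factor of $\log k$ to the recursion depth $r = \Theta(\log k)$: one is forced to trade $k_i$'s against $\gamma_i$'s level by level, and the resulting $\log r$ penalty cancels only one $\log\log k$ out of $\log\gamma_{r-1}^{d-1}$. Weighted AM-GM sidesteps this by turning the product into the single sum $\sum|S_i|$, whose geometric-series bound introduces the full factor $\gamma_{r-1}^{d-1} = \Theta(\log^{d-1} k)$ exactly once, raised to the full power $T$. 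This is the step where the geometric structure of the $\gamma_i$'s is essential.
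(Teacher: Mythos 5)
Your collapsing step is valid and is a genuinely different route from the paper's. The paper proves this lemma via \rlem{candidatebound} (the same per-level candidate bound you derive from \rlem{numberofcurves}, written as $|S_i|\le \alpha^i c_1 k/e$ with $\alpha=2^{d-1}$), then explicitly maximizes the product of branching factors over budget partitions: \rlem{manybinomprod} shows the product is largest when $k_{i+1}=\alpha k_i$, and \rlem{kzero} pins down the maximal depth $r$ and the resulting base value $k_0$, which is then substituted. You instead apply weighted AM--GM with weights $k_i/T$ to turn $\prod_i\left(e|S_i|/k_i\right)^{k_i}$ into $\left(\tfrac{e}{T}\sum_i|S_i|\right)^{T}$ and then use that $\sum_i |S_i|$ is a geometric series dominated by its last term $\BigO{k^d/\gamma_{r-1}^{d-1}}=\BigO{k^d/\log^{d-1}k}$ (your determination $\gamma_{r-1}=\Theta(\log k)$ is exactly the computation in \rlem{kzero}, and the $+k$ terms are indeed negligible for $d\ge 2$). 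This avoids the calculus-style optimization of \rlem{manybinomprod} entirely and is arguably cleaner.

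However, the last line overclaims. What your argument yields is $L\le\left(\frac{c_2k^d}{(K_j-k_r)\log^{d-1}k}\right)^{K_j-k_r}$, i.e.\ with your $T=K_j-k_r$ in the denominator, whereas the lemma asserts $k-k_r$ there. These agree only for $j=1$, where $K_1=k$; for $j>1$ one has $K_j-k_r\le k-k_r$, the difference being the budget $k-K_j$ spent at depths above $j$, so your bound is strictly weaker whenever that budget is nonnegligible. The AM--GM step, applied only to the subtree's own levels $j,\dots,r-1$, cannot see that portion of the budget and therefore cannot produce the $(k-k_r)$ denominator; the paper obtains it by substituting the $k_0$ of \rlem{kzero}, which is computed from the full budget $k-k_r$ spread geometrically over all levels $1,\dots,r-1$, thereby tying the subtree product to the global worst-case partition. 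So either prove the statement for $j=1$ (the case actually used in \rthm{cctime}) and recheck \rlem{fullsubtree} with $K_j-k_r$ in place of $k-k_r$, or add the step relating $T$ to the global budget; as written, "produces exactly the stated bound" is not justified. (A minor further point: the inequality $|P_i|\le K_i\gamma_{i-1}$ should be justified by the algorithm's explicit reject test at internal nodes rather than by properties of the residual cover, but the inequality you use is the right one.)
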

The proof is long and tedious and we leave it for the appendix.
To give an idea of how \rlem{leafbound} is proved, we sketch a simplified worst case analysis for \prob{Line Cover}.
The analysis can be generalized to \prob{Curve Cover} and gives (up to a constant in the base of the exponent) the same running time as the real worst case.

\subparagraph{Analysis sketch.}
The branching of $\mathcal{T}_1$ at recursion level $i$ depends on the budget $k_i$ that is being used.
That means that the structure of the whole tree depends on the complete budget partition.
From \rlem{numberofcurves} it follows that the lower the richness the more candidates there are.
Since the richness halves after every recursive call, one could conjecture that the worst case budget partition would put as much budget in the end.
It could \eg look like $\tuple{0,0,\dots,0,k_{r-1},k_r}$, where $k-k_r = k_{r-1} > k_r$.
That is, only in the penultimate and last recursion level is there any budget to spend.
At the deepest level of recursion, the richness considered is strictly less than $\frac{\log k}{2}$ (because with this richness the base case algorithm is efficient).
Therefore, at the penultimate recursion level the richness is $\log k$.
At this level there are $k \log k$ points left and we can apply \rlem{numberofcurves} to bound the number of $\log k$ rich lines.
This yields a bound of $\frac{k^2}{\log k}$ on the number of candidates.
From these we pick $k-k_r$ lines, giving a branching of roughly $\left(\frac{k^2}{(k-k_r)\log k}\right)^{k-k_r}$ (where roughly means up to a constant in the base of the exponent).

It turns out that the worst case budget partition is in fact $\tuple{k_02^1, k_02^2,...,k_02^{r-1}, k_r}$ for some $k_0$.
However, to understand where the division by $\log^{d-1} k$ comes from in the expression of \rlem{leafbound}, it is sufficient to understand the above analysis sketch.
With \rlem{leafbound} in place, we can prove the following bound on the running time of $\mathcal{T}_j$.

\begin{lemma}
\label{lem:fullsubtree}
The time complexity of a complete subtree $\mathcal{T}_j$ is $\ostar{(c_4k/\log k)^{(d-1)K_j}}$, where $c_4=c_4(d,s)$ is a constant that depends on the family $\C$.
\end{lemma}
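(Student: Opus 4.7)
The plan is to combine \rlem{leafbound} with the per-leaf running cost and then simplify. I would first observe that each leaf of a complete subtree $\mathcal{T}_j$ corresponds to a call to \algo{Inclusion-Exclusion} at recursion depth $r$, where the remaining budget equals $K_r = k_r$. The algorithm triggers the base case once fewer than $\tfrac{d-1}{2}k_r\log k$ points remain, so by \rthm{cc-inclexcl} every leaf runs in time $\ostar{2^{(d-1)k_r\log k/2}} = \ostar{k^{(d-1)k_r/2}}$. Multiplying this by the leaf bound from \rlem{leafbound} yields
\[
T_j = \ostar{\left(\frac{c_2k^d}{(k-k_r)\log^{d-1}k}\right)^{K_j - k_r}\cdot k^{(d-1)k_r/2}}.
\]

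Next I would rearrange this product so that the target shape $(c_4k/\log k)^{(d-1)K_j}$ is visible. Collecting the powers of $k$ in the exponent as $d(K_j-k_r)+(d-1)k_r/2 = (d-1)K_j + K_j - (d+1)k_r/2$ and regrouping the $\log k$ factors accordingly, a direct algebraic manipulation gives
\[
T_j = \ostar{\left(\frac{c_2^{1/(d-1)}k}{\log k}\right)^{(d-1)K_j} \cdot \left(\frac{k}{k-k_r}\right)^{K_j-k_r} \cdot \left(\frac{\log k}{\sqrt{k}}\right)^{(d-1)k_r}},
\]
so the remaining task is to show that the two trailing correction factors are bounded by $C^{(d-1)K_j}$ for some universal constant $C$, uniformly over $k_r \in [0, K_j]$. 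The lemma will then follow by setting $c_4 = C \cdot c_2^{1/(d-1)}$.

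The main obstacle is handling every $k_r$ simultaneously, which I would resolve by a case split on whether $k_r \le k/2$. In the first case $k/(k-k_r) \le 2$, and $\log k/\sqrt{k} \le 1$ for $k$ sufficiently large, so the correction is trivially at most $2^{K_j}$. In the second case, setting $t = k-k_r < k/2$ and using $K_j - k_r \le k - k_r = t$ (since $K_j \le k$), one obtains $(k/(k-k_r))^{K_j-k_r} \le (k/t)^t \le e^{k/e}$ by the standard maximum of $(k/t)^t$ at $t=k/e$. Since $k_r > k/2$, the second factor $(\log k/\sqrt{k})^{(d-1)k_r}$ decays faster than any inverse polynomial in $k$, which dwarfs the $e^{k/e}$ blow-up and makes the product $o(1)$. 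In either case the correction factors are bounded by a constant to the power $(d-1)K_j$, completing the proof.

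The delicate step is the second case: on its face, $(k/(k-k_r))^{K_j-k_r}$ threatens to contribute an $e^{\Omega(k)}$ factor, but the super-polynomial decay of $(\log k/\sqrt{k})^{(d-1)k_r}$ when $k_r$ is large is the decisive cancellation that keeps the product bounded. This balance is exactly what justifies the budget partition framework in the first place: pushing larger budget to deeper levels trades reduced branching for increased base-case cost, and the analysis must show that this trade-off never hurts the overall bound.
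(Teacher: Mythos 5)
Your proof is correct and takes essentially the same route as the paper's: both multiply the leaf count from \rlem{leafbound} by the $\ostar{(k^{1/2})^{(d-1)k_r}}$ cost of the depth-$r$ call to \algo{Inclusion-Exclusion}, and then handle the product by a case split on whether $k_r$ is a constant fraction of $k$ (your threshold $k_r\le k/2$ versus the paper's $k-k_r\ge c_3k$ / $k-k_r=o(k)$ dichotomy). Your explicit factorization into a main term and two correction factors is just a more detailed bookkeeping of the same argument.
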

\begin{proof}
By \rlem{leafbound}, the number of leaves in
$\mathcal{T}_j$ is $L \leq {\left(\frac{c_2k^d}{(k-k_r)\log^{d-1}\!k}\right)}^{K_j-k_r}$.
Observe that at depth $r$, \algo{Inclusion-Exclusion} runs in time
$\ostar{2^{\frac{d-1}{2}k_r\log k}}=\ostar{(k^{1/2})^{(d-1)k_r}}$.
Since an inner node performs polynomial time work and the leaves perform exponential time work, this immediately implies that the running time for $\mathcal{T}_j$ is
\[\ostar{{\left(\frac{c_2k^d}{(k-k_r)\log^{d-1}\!k}\right)}^{K_j-k_r} \cdot (k^{1/2})^{(d-1)k_r}}.\]

Suppose that $k-k_r\geq c_3k$ for some constant $c_3>0$, then the running time solves to:
\begin{align*}
\ostar{\left(\frac{c_2k^{d-1}}{c_3\log^{d-1}\!k}\right)^{K_j-k_r} \cdot (k^{1/2})^{(d-1)k_r}}=\ostar{\left(\frac{c_4k}{\log k}\right)^{(d-1)K_j}}
\end{align*}
where $c_4 = (c_2/c_3)^{1/(d-1)}$.

When $k-k_r$ is less than a constant fraction of $k$, that is $k - k_r = \BigO[o]{k}$, it holds that $K_j - k_r = \BigO[o]{K_j}$ since $k \geq K_j \geq k_r$. 

\[\ostar{{\left(\frac{c_2k^d}{(k-k_r)\log^{d-1}\!k}\right)}^{o(K_j)} \cdot (k^{1/2})^{(d-1)(K_j-o(K_j))}}=\ostar{2^{o(dK_j\log k)+\frac{d-1}{2}(K_j\log k-o(k\log k))}}\]
With some simple algebra one gets that the exponent is bounded by $(d-1)K_j(\log k-\log\log k)$, giving the desired time bound $\ostar{2^{(d-1)K_j(\log k-\log\log k)}}=\ostar{(k/\log k)^{(d-1)K_j}}$.
\end{proof}

\begin{lemma}
\label{lem:innerleaf}
Let $L_j$ be a depth $j<r$ leaf of $\mathcal{T}$ that calls \algo{Inclusion-Exclusion}.
Then the running time of $\mathcal{T}_j$ dominates that of $L_j$.
\end{lemma}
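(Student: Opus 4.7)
The plan is to directly compare the cost of $L_j$ with the upper bound on the cost of $\mathcal{T}_j$ from \rlem{fullsubtree}. The key observation is that a leaf at depth $j<r$ can only arise via the base-case trigger in \algo{CC-Recursive}, which immediately constrains the size of the remaining point set $P_j$.

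First I would invoke the base-case condition: since \algo{Inclusion-Exclusion} was called at depth $j$, we must have $|P_j|\leq \frac{d-1}{2}K_j\log k$. Combined with \rthm{cc-inclexcl}, this gives a running time for $L_j$ of at most $\ostar{2^{|P_j|}}=\ostar{2^{\frac{d-1}{2}K_j\log k}}=\ostar{k^{(d-1)K_j/2}}$, where the polynomial factors from \algo{Inclusion-Exclusion} are absorbed into the $\ostar{\cdot}$ since $|P_j|\leq n$.

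Second, I would apply \rlem{fullsubtree} to the subtree $\mathcal{T}_j$ to obtain a running time of $\ostar{(c_4k/\log k)^{(d-1)K_j}}$. Dividing this by the bound on $L_j$ yields
\[
\frac{(c_4k/\log k)^{(d-1)K_j}}{k^{(d-1)K_j/2}} = \left(\frac{c_4^2 k}{\log^2 k}\right)^{(d-1)K_j/2},
\]
which exceeds $1$ for all sufficiently large $k$. Hence the cost of $\mathcal{T}_j$ dominates that of $L_j$, as required.

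There is no real combinatorial obstacle here; the proof is essentially a one-line ratio comparison, and the only care needed is to correctly track the exponent of $K_j$ in the two bounds and to observe that the same $K_j$ appears in both expressions (since both $L_j$ and $\mathcal{T}_j$ begin at depth $j$ with the same remaining budget). The argument tacitly relies on $k$ being large enough for $c_4^2 k/\log^2 k\geq 1$, but this regime is the only one of interest since small $k$ can be absorbed into the $\ostar{\cdot}$ polynomial factors.
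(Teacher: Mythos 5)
Your proposal is correct and follows essentially the same route as the paper: both use the base-case condition $|P_j|\leq \frac{d-1}{2}K_j\log k$ to bound the \algo{Inclusion-Exclusion} call by $\ostar{(k^{1/2})^{(d-1)K_j}}$ and then compare it with the $\ostar{(c_4k/\log k)^{(d-1)K_j}}$ bound from \rlem{fullsubtree}; you merely make the ratio computation explicit where the paper states the domination directly.
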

\begin{proof}
By \rlem{fullsubtree}, the time complexity of $\mathcal{T}_j$ is $\ostar{(c_4k/\log k)^{(d-1)K_j}}$. 
At depth $j$ the algorithm has $K_j$ remaining budget to spend. Since the algorithm called \algo{Inclusion-Exclusion} at this depth, at most $\frac{d-1}{2}K_j\log k$ points remained and the call takes $\ostar{2^{\frac{d-1}{2}K_j\log k}}=\ostar{(k^{1/2})^{(d-1)K_j}}$ time, which is bounded by that for $\mathcal{T}_j$.
\end{proof}

\begin{theorem}
\label{thm:cctime}
\algo{CurveCover} decides \prob{Curve Cover} in time $\ostar{(Ck/\log k)^{(d-1)k}}$ where $C=C(d,s)$ is a constant that depends on the family $\C$.
\end{theorem}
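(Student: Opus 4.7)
The plan is to derive the theorem almost directly from \rlem{fullsubtree} and \rlem{innerleaf}, so most of my work is in tying the top-level procedure \algo{CurveCover} to the bound on a single complete subtree $\mathcal{T}_1$. First, I would fix a single execution of \algo{CC-Recursive} with a particular budget partition $\tuple{k_1,\dots,k_r}$ and argue that its search tree $\mathcal{T}$ runs in time no worse than the complete tree $\mathcal{T}_1$ of depth $r$. Every leaf of $\mathcal{T}$ is either an immediate reject (polynomial work, negligible) or a call to \algo{Inclusion-Exclusion}. By \rlem{innerleaf}, any leaf $L_j$ at depth $j<r$ is dominated in running time by the complete subtree $\mathcal{T}_j$ rooted at that point; replacing every such leaf by its complete expansion can only increase the bound and yields exactly $\mathcal{T}_1$. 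Applying \rlem{fullsubtree} at $j=1$ with $K_1=k$ gives the bound $\ostar{(c_4k/\log k)^{(d-1)k}}$ on the time spent inside one recursive call.

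Next I would account for the outer loop of \algo{CurveCover}, which tries every budget partition of $k$ into $r=\Theta(\log k)$ non-negative parts. The number of such partitions is $\binom{k+r-1}{r-1}$, and with $r=\Theta(\log k)$ this is at most $2^{O(\log^2 k)}$, which is dwarfed by $(k/\log k)^{(d-1)k}=2^{\Theta(k\log k)}$ and therefore hidden by the $\ostar{\cdot}$ notation. Setting $C=c_4$ then gives the stated bound. Correctness of the decision is already handled by \rlem{cc-correct}, so only the running time needs to be justified here.

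The main obstacle is actually making the domination argument between $\mathcal{T}$ and $\mathcal{T}_1$ fully rigorous: I need to verify that the estimate in \rlem{fullsubtree} is monotone enough in the remaining budget that growing a subtree never improves things. Concretely, if a node at depth $j$ could be a leaf calling \algo{Inclusion-Exclusion}, the cost is $\ostar{(k^{1/2})^{(d-1)K_j}}$, which is strictly smaller than $\ostar{(c_4k/\log k)^{(d-1)K_j}}$ for large $k$, so replacing the leaf by its full expansion strictly dominates it — this is exactly the content of \rlem{innerleaf}. I would also spell out that the only source of immediate rejects we discard contributes polynomial overhead per node, and that inner nodes perform only polynomial work, so the leaf count from \rlem{leafbound} (combined with the leaf cost bound) determines the running time up to the polynomial factors absorbed in $\ostar{\cdot}$.

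Finally, I would note that since the constant $c_4$ in \rlem{fullsubtree} depends only on $d$ and $s$ (both constants of the problem instance), so does $C$, matching the statement of the theorem. No step requires new calculations beyond what is already established in \rlem{leafbound}, \rlem{fullsubtree}, and \rlem{innerleaf}; the proof is essentially a bookkeeping argument tying the per-call bound to the outer enumeration of budget partitions.
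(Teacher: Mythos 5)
Your proposal is correct and follows essentially the same route as the paper: fix a budget partition, use Lemma~\ref{lem:innerleaf} to argue that leaves calling \algo{Inclusion-Exclusion} above depth $r$ are dominated by their complete expansions, apply Lemma~\ref{lem:fullsubtree} at the root with $K_1=k$, and then multiply by the stars-and-bars count of budget partitions. One small correction: the $\binom{k+r-1}{k}=2^{O(\log^2 k)}$ factor from enumerating budget partitions is quasi-polynomial, not polynomial, so it cannot be hidden inside $\ostar{\cdot}$ with $C=c_4$ as you claim; the paper absorbs it by taking $C=c_4+\epsilon$ for an arbitrary $\epsilon>0$, since $\left((c_4+\epsilon)/c_4\right)^{(d-1)k}=2^{\Theta(k)}$ dominates the quasi-polynomial factor, and you should do the same.
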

\begin{proof}
Fix a budget partition $\tuple{k_1,\dots,k_r}$. By \rlem{innerleaf}, calling \algo{Inclusion-Exclusion} at a depth $j<r$ does not increase the running time of the algorithm. Therefore the time complexity of \algo{CC-Recursive} is $\ostar{(c_4k/\log k)^{(d-1)K_1}}={(c_4k/\log k)^{(d-1)k}}$.

\algo{CurveCover} runs \algo{CC-Recursive} over all possible budget partitions, of which by the ``stars and bars'' theorem are only $\binom{k+r-1}{k}$, a quasi-polynomial in $k$. Therefore by letting $C=c_4+\epsilon$ for any $\epsilon>0$, the time complexity of \algo{CurveCover} is $\ostar{(Ck/\log k)^{(d-1)k}}$.
\end{proof}

\begin{lemma}
\label{lem:ccpolyandspace}
The polynomial time dependency of \algo{CurveCover} is $\BigO{(k\log k)^{2+s}}$ and its space complexity is $\BigO{k^4\log^2k}$ bits.
\end{lemma}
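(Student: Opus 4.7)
The plan is to bound the polynomial factor hidden by the $\ostar$ notation and the working space of \algo{CurveCover} separately, in both cases isolating the dominant contribution and showing that other contributions are subsumed.

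For the polynomial factor, I would argue that the dominant polynomial contribution comes from the \algo{Inclusion-Exclusion} invocations at the leaves of the recursion tree. By the base-case guard in \algo{CC-Recursive}, whenever \algo{Inclusion-Exclusion} is called the sub-instance has at most $n \leq \frac{d-1}{2}K_r\log k \leq \frac{d-1}{2}k\log k$ points, and by \rthm{cc-inclexcl} its running time is $\BigO{2^n n^{s+2}}$. The polynomial factor charged to a leaf is therefore $\BigO{(k\log k)^{s+2}}$. The polynomial work done at an inner node---enumerating $(s+1)$-subsets of the at most $sk^2$ kernel points, applying the two geometric predicates to recover each candidate curve and then counting its richness---is polynomial in $k$ with no $\log$-factors, so when aggregated across the tree it is absorbed into the exponential leaf cost. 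Hence the total polynomial factor hidden by $\ostar$ is $\BigO{(k\log k)^{2+s}}$.

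For the space, the two sources are the \algo{Inclusion-Exclusion} workspace and the recursion stack. A single leaf call uses $\BigO{nk}=\BigO{k^2\log k}$ bits by \rthm{cc-inclexcl}, and only one such call is active at any time. The recursion has depth $r=\BigO{\log k}$; at each level the algorithm retains the current point set $P_i$ (representable via indices into the kernel), the partial cover $\mathcal{C}_{i-1}$, and enough state to iterate over $k_i$-subsets of the candidate set $S_i$. By \rlem{numberofcurves} and the kernel bound $n=\BigO{k^2}$, in the worst case $|S_i|=\BigO{k^{2d}}$, and each candidate is representable by an $(s+1)$-tuple of kernel-point indices in $\BigO{\log k}$ bits, giving $\BigO{k^4\log k}$ bits per level in the $d=2$ regime (which dominates both the point-set and partial-cover storage). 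Summing this over the $\BigO{\log k}$ recursion levels yields the stated $\BigO{k^4\log^2 k}$ bits.

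The main obstacle is the space accounting: one must carefully check that the candidate sets $S_i$ and their iterator state at each recursion level are stored compactly enough that the $\BigO{\log k}$-fold summation across levels fits within the stated bound, and that no intermediate computation (in particular the candidate enumeration performed at an inner node, or the $c(P\setminus X)^k$ accumulator inside \algo{Inclusion-Exclusion}) temporarily requires more space than claimed.
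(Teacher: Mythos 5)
Your argument is essentially the paper's own proof: the polynomial factor is charged entirely to the \algo{Inclusion-Exclusion} leaves, which receive $\BigO{k\log k}$ points and hence cost $\BigO{(k\log k)^{s+2}}$ in polynomial overhead, and the space is the $\BigO{nk}=\BigO{k^2\log k}$-bit leaf workspace plus the candidate sets retained along the $\BigO{\log k}$ recursion levels. The only divergence is how a level's candidate set is counted: the paper asserts a ``trivial'' bound of $\binom{sk^2}{2}$ elements, i.e.\ $\BigO{k^4\log k}$ bits per level, whereas your count of $\BigO{k^{2d}}$ candidates forces you to restrict the $\BigO{k^4\log^2 k}$ conclusion to the $d=2$ regime --- a looseness that is equally present in the paper, since $\binom{sk^2}{2}$ really counts point pairs (lines), while general $(d,s)$-curves would need $s\binom{sk^2}{d}$ unless one stores only iterator state rather than $S$ itself.
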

\begin{proof}
The height of the tree is $r=\BigO{\log k}$. Inner nodes have polynomial time and space which is strictly dominated by the exponential time and polynomial space of the leaves. Hence the polynomial time dependency of \algo{CurveCover} is exactly the polynomial time dependency of the leaves. \algo{Inclusion-Exclusion} runs in $\BigO{n^{s+2}2^n}$ and $n=\BigO{k\log k}$ points remain when it is called; the polynomial dependency is $\BigO{(k\log k)^{s+2}}$.

\algo{Inclusion-Exclusion} requires only $\BigO{nk}=\BigO{k^2\log k}$ bits of storage, while an inner node stores its set of candidates $S$. A trivial bound on the size of any $S$ is $\binom{sk^2}{2}$ elements, which can be stored in $\BigO{k^4\log k}$ bits. Since $r=\BigO[\Theta]{\log k}$, we use no more than $\BigO{k^4\log^2k}$ bits to store them.
\end{proof}

\section{Hyperplane Cover}
\label{sec:planecover}
One generalization of \prob{Line Cover} was discussed in the previous section.
In this section we discuss its other generalization \prob{Hyperplane Cover}, and give an algorithm for the three dimensional case.
We would like to follow the same basic attack plan of using incidence bounds but here 
we face significant challenges and we need non-trivial changes in our approach.
One major challenge is the nature of incidences in higher dimensions. 
For example, the asymptotically maximum number of incidences between a set of points and hyperplanes in $d$-dimensions
is obtained by placing half of the points on one two-dimensional plane (see \cite{agarwal1992counting,Edelbook})
which clearly makes it an easy instance for our algorithm (due to kernelization).
Thus, in essence, we need to use specialized incidence bounds that disallow such configurations of points;
unfortunately, such bounds are more difficult to prove than ordinary incidence bounds
(and as it turns out, also more difficult to use).

\subsection{Point-Hyperplane incidence bounds in higher dimensions}

The most general bound for point-hyperplane incidences from \cite{agarwal1992counting,edelsbrunner1990complexity}
yields a bound of $\BigO[\Theta]{\frac{n^d}{\gamma^3}+\frac{n^{d-1}}{\gamma}}$ on the number of $\gamma$-rich hyperplanes in $d$ dimensions similar to \rlem{numberofcurves} (where the left term is again the significant one).
Our method requires that the exponent is greater in the denominator than in the numerator, so this bound is not usable beyond $\reals^2$.
As stated before, the constructions that make the upper bound tight are easy cases for our algorithm;
they contain very low dimensional flats that have many points on them.
A specialized bound appears in \cite{elekes2005incidences}, where the authors study the number of incidences between points and hyperplanes with a certain \emph{saturation}.

\begin{definition}
Consider a point set $P$ and a hyperplane $H$ in $\reals^d$. We say that $H$ is
$\sigma$-saturated, $\sigma > 0$, if $H \cap P$ spans at least $\sigma \cdot |H
\cap P|^{d-1}$ distinct $(d-2)$-flats of $F$.  
\end{definition}

For example in three dimensions, a $(1-\frac{1}{n})$-saturated plane contains no three collinear points.
The main theorem of \cite{elekes2005incidences} can be stated as follows.

\begin{theorem}[Elekes and T\'oth \cite{elekes2005incidences}]
\label{thm:sathyperplanes}
Let $d \geq 2$ be the dimension and $\sigma > 0$ a real number. There is a constant $C_1(d, \sigma)$ with the following property.
For every set $P$ of $n$ points in $\reals^d$, the number of $\gamma$-rich $\sigma$-saturated hyperplanes is at most:
\[
\BigO{C_1(d,\sigma) \left ( \frac{n^d}{\gamma^{d+1}} + \frac{n^{d-1}}{\gamma^{d-1}} \right)}.
\]
\end{theorem}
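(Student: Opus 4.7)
The plan is to proceed by induction on the dimension $d$, with the base case $d=2$ being essentially a restatement of the \ST{} theorem. In the plane the ``$(d-2)$-flats'' spanned by $H \cap P$ are just the points themselves, so any line through $\gamma$ points is $1$-saturated, and the bound $\BigO{n^2/\gamma^3 + n/\gamma}$ on the number of $\gamma$-rich lines follows directly from \ST{} (rewritten as a bound on $m$ via $m\gamma \leq I(P,L)$, as in \rlem{numberofcurves}).

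For the inductive step in dimension $d$, the natural move is a double-counting argument between the set $\mathcal{H}$ of $\sigma$-saturated $\gamma$-rich hyperplanes and the set $\mathcal{F}$ of $(d-2)$-flats spanned by points of $P$. By the saturation hypothesis each $H \in \mathcal{H}$ contains at least $\sigma\gamma^{d-1}$ flats of $\mathcal{F}$, which gives the lower bound
\[
\#\{(F,H)\in\mathcal{F}\times\mathcal{H}: F\subseteq H\} \;\geq\; \sigma\gamma^{d-1}\,|\mathcal{H}|.
\]
For the upper bound I would fix a flat $F \in \mathcal{F}$ and quotient $\reals^d$ by $F$: the image of $P\setminus F$ lives in $\reals^2$, and every hyperplane of $\reals^d$ containing $F$ becomes a line through the origin. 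Hyperplanes in $\mathcal{H}$ through $F$ then correspond to concurrent lines carrying roughly $\gamma$ projected points each, and a direct count caps the number of such lines by $\BigO{n/\gamma}$. Combining this with the bound $|\mathcal{F}| = \BigO{n^{d-1}}$ on the number of spanned flats and rearranging should yield the leading term $n^d/\gamma^{d+1}$, with the $\sigma$-dependence absorbed into $C_1(d,\sigma)$.

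The main obstacle will be extracting the sharp exponent $d+1$ rather than a weaker exponent coming from a naive count. The quotient step as sketched above does not obviously preserve saturation in the image, so to access the full power of the hypothesis one likely has to iterate inductively: project $P$ onto a generic hyperplane $\pi$ and argue that the images of hyperplanes in $\mathcal{H}$ are $\sigma'$-saturated $(d-2)$-flats in $\pi$ for some $\sigma' = \sigma'(\sigma) > 0$, then invoke the inductive hypothesis in $\reals^{d-1}$. Verifying that saturation transfers under a generic projection is where I expect most of the real work to sit. Finally, the secondary term $n^{d-1}/\gamma^{d-1}$ should come from a separate small-$\gamma$ or low-richness regime, handled by a direct count of hyperplanes determined by $d-1$ points of $P$, i.e.\ $\BigO{\binom{n}{d-1}} = \BigO{n^{d-1}}$, divided through by the appropriate saturation factor.
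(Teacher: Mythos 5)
You should first be aware that the paper contains no proof of this statement: it is quoted, with attribution, from Elekes and T\'oth \cite{elekes2005incidences} and used as a black box, so there is no in-paper argument to compare yours against and your sketch has to stand on its own as a proof of their theorem. As written it does not. The double-counting half is reasonable (though even the per-flat bound of $O(n/\gamma)$ hyperplanes of $\mathcal{H}$ through a fixed $(d-2)$-flat $F$ needs an argument that a $\gamma$-rich $\sigma$-saturated hyperplane has $\Omega(\sigma\gamma)$ points off $F$, which is where saturation would have to be invoked), but, as you yourself concede, it only yields a denominator $\gamma^{d}$, not $\gamma^{d+1}$, so the entire theorem rests on the inductive step you defer.

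That deferred step is not a routine verification; it is false in the form you state it, and it is exactly where the content of the theorem lives. Under a projection of $\reals^d$ onto a generic hyperplane $\pi$ (parallel along a generic direction, or central from a generic point), the image of a hyperplane $H$ is generically all of $\pi$, not a $(d-2)$-flat: to collapse $H$ to a $(d-2)$-flat the projection centre (or direction) must lie on (be parallel to) $H$ itself, which cannot be arranged simultaneously for all members of $\mathcal{H}$. The standard escape is to project from a point $p\in P$, so that only the hyperplanes through $p$ collapse, and then to account probabilistically for how many rich hyperplanes pass through a typical $p$ and to show quantitatively that the projected point set inherits a saturation constant $\sigma'(\sigma)>0$; this careful bookkeeping is precisely the work Elekes and T\'oth do, and nothing in your outline supplies it. Without it you prove only the weaker bound with exponent $d$, and the secondary term $n^{d-1}/\gamma^{d-1}$ is likewise only guessed at. So the general flavour (induction on $d$ with \ST{} as the base case, plus flat--hyperplane double counting) is sensible, but the decisive step is missing and the specific mechanism you propose for it does not work.
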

The interesting term in this bound has a greater exponent in the denominator, as required.
An issue is that it is not easy to verify if a hyperplane is $\sigma$-saturated.
In the same paper as \rthm{sathyperplanes}, the authors give another bound based on a more manageable property called \emph{degeneracy}.

\begin{definition}
Given a point set $P$ and a hyperplane $H$ in $\reals^d$,
we say that $H$ is $\delta$-degenerate, $0 < \delta \leq 1$, if $H \cap P$ is non-empty and at most $\delta \cdot |H \cap P|$ points of $H \cap P$ lie in any $(d-2)$-flat.
\end{definition}

For example in $\reals^3$, any $1$-degenerate plane might have all its points lying on a single line, and a plane with degeneracy strictly less than $1$ must have at least $3$ points not on the same line.
As such it is an easy property to test.

\begin{theorem}[Elekes and T\'oth \cite{elekes2005incidences}]
\label{thm:deghyperplanes}
For any set of $n$ points in $\reals^3$, the number of $\gamma$-rich $\delta$-degenerate planes is at most
\[
\BigO{\frac{1}{(1-\delta)^4} \left ( \frac{n^3}{\gamma^{4}} + \frac{n^{2}}{\gamma^{2}} \right)}.
\]
\end{theorem}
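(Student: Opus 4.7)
The natural plan is to deduce this from Theorem~\ref{thm:sathyperplanes} by showing that every $\delta$-degenerate plane is also $\sigma$-saturated for an appropriate $\sigma = \sigma(\delta)$. Concretely, fix a $\delta$-degenerate plane $H$ and let $m = |H \cap P|$. Degeneracy guarantees that no line in $H$ contains more than $\delta m$ of these points, so what I need is a lower bound on the number of distinct lines (which are the $(d-2)$-flats for $d=3$) spanned by $H \cap P$.

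The first step is a quantitative Beck-type lemma in the plane: any $m$ points with at most $\delta m$ collinear span $\Omega((1-\delta)^c\, m^2)$ distinct lines, for some small constant $c$. I would prove this by dyadically partitioning the spanned lines according to their richness, applying the planar Szemer\'edi-Trotter bound to each dyadic class, and then lower-bounding the number of low-richness lines from the observation that the $\binom{m}{2}$ point-pairs cannot all be absorbed by the (bounded) contribution of rich lines, since no line is richer than $\delta m$. Pushing the exponent $c$ down to exactly the value that matches the target $(1-\delta)^{-4}$ is the delicate part of the calculation.

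Once this lemma is in hand, $H$ is $\sigma$-saturated with $\sigma = \Omega((1-\delta)^c)$, and I would invoke Theorem~\ref{thm:sathyperplanes} with $d = 3$. The exponents $\gamma^{-(d+1)} = \gamma^{-4}$ and $\gamma^{-(d-1)} = \gamma^{-2}$ are then exactly those appearing in the target bound, so only the $\sigma$-dependence of the constant $C_1(3,\sigma)$ needs to be made explicit. A careful reading of the proof of Theorem~\ref{thm:sathyperplanes} should reveal a dependence of the form $C_1(3,\sigma) = \Theta(\sigma^{-a})$; choosing $c$ and $a$ so that $ac = 4$ then gives the claimed $(1-\delta)^{-4}$ factor.

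The main obstacle is precisely this book-keeping: the loss in the Beck-type lemma and the loss in $C_1(3,\cdot)$ must conspire to produce the exponent $4$ on $(1-\delta)^{-1}$, not merely \emph{some} power. If the constants do not align through this indirect route, the fallback is to re-run the architecture of the proof of Theorem~\ref{thm:sathyperplanes} directly with the degeneracy hypothesis substituted in place of saturation, which removes the translation step and lets one read off the constant in one shot.
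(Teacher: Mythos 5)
A point of context first: the paper does not prove Theorem~\ref{thm:deghyperplanes} at all --- it is quoted directly from Elekes and T\'oth \cite{elekes2005incidences} --- so there is no internal proof to compare against. Your plan (show that a $\delta$-degenerate plane with $m$ points is $\sigma$-saturated for $\sigma=\Omega(1-\delta)$ via a Beck-type theorem in the plane, then invoke Theorem~\ref{thm:sathyperplanes} with $d=3$, whose exponents $\gamma^{-(d+1)}=\gamma^{-4}$ and $\gamma^{-(d-1)}=\gamma^{-2}$ match the target) is essentially the derivation used in the cited source, so the architecture is sound.

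As written, though, the proposal has a genuine gap, which you half-acknowledge. The statement of Theorem~\ref{thm:sathyperplanes} as given says only ``there is a constant $C_1(d,\sigma)$'' and carries no information about how $C_1(3,\sigma)$ depends on $\sigma$; consequently the factor $(1-\delta)^{-4}$ cannot be deduced from that statement, and ``choosing $c$ and $a$ so that $ac=4$'' is a hope rather than an argument. The Beck step itself is standard and comes with exponent $c=1$: the two-extremities form of Beck's theorem says $m$ points with at most $\delta m$ collinear span $\Omega\!\left((1-\delta)m^2\right)$ lines, by exactly the dyadic Szemer\'edi--Trotter argument you sketch, and there is nothing to ``push down.'' Hence the entire burden of the exponent $4$ falls on the $\sigma$-dependence of $C_1(3,\sigma)$, which can only be extracted by opening up the proof in \cite{elekes2005incidences}; in other words, your stated ``fallback'' of re-running that proof with degeneracy substituted for saturation is not a fallback but the actual required work. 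Since the theorem is an external citation, none of this affects the paper, but as a self-contained proof of the stated bound the proposal is incomplete.
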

This bound is usable and relies on an easily-tested property, but unfortunately only applies to the $\reals^3$ setting.

\subsection{Algorithm for \prob{Plane Cover}}

In this section we present our algorithm \algo{PC-Recursive} that solves \prob{Plane Cover} using the bound from \rthm{deghyperplanes}.
This algorithm is similar to the algorithm for \prob{Curve Cover}, and it is assumed that the reader is sufficiently familiar with \algo{CC-Recursive} before reading this section.

Recall that by \rlem{hpkernel}, \prob{Plane Cover} has a kernel of size $k^3+k^2$ where no plane contains more than $k(k+1) \leq 2k^2$ points and no two planes pairwise intersect in more than $k+1$ points.
For convenience we define $\gamma_0=k^2+k$ and $\gamma_i = k^2/2^i$ for $i>0$.
We inherit the basic structure of the \algo{CC-Recursive} algorithm, such that every recursion level considers $\gamma_i$-rich-$\gamma_{i-1}$-poor candidates.
Additionally, any candidate considered must be \emph{not-too-degenerate}:

\begin{definition}
Let $\delta_i = 1-\gamma_i^{-1/5}$.
A $\gamma_i$-rich-$\gamma_{i-1}$-poor plane is called \emph{not-too-degenerate} if it is $\delta_i$-degenerate, and \emph{too-degenerate} otherwise.
\end{definition}
It is of no consequence that the definition does not cover all candidates considered on depth 1.
The main extension of \algo{PC-Recursive} compared to \algo{CC-Recursive} is to first use a different technique to deal with too-degenerate candidates, which then allows normal branching on the not-too-degenerate ones. The key observation is that any too-degenerate candidate has at least $\gamma_i\delta_i=\gamma_i-\gamma_i^{4/5}$ points on a line and at most $\gamma_{i-1}(1-\delta_i)=2\gamma_i^{4/5}$ points not on it.

Suppose a $k$-cover contains some too-degenerate plane $H$. By correctly guessing its very rich line $L$ and removing the points on the line, the algorithm makes decent progress in terms of shrinking the instance. The points on $H$ but not $L$ will remain in the instance even though the budget for covering them has been paid. These are called the \emph{ghost points} of $H$ (or of $L$), and $L$ is called a \emph{degenerate line}. The ghost points must be removed by extending the line $L$ into a full plane. But the ghost points are few enough that the algorithm can delay this action until a later recursion level. Specifically, for a line $L$ guessed at depth $i$, we extend $L$ into a plane at the first recursion depth which considers $2\gamma_i^{4/5}$-poor candidates, \ie the depth $j$ such that $\gamma_{j-1}\geq 2\gamma_i^{4/5}\geq \gamma_j$.

Therefore the algorithm keeps a separate structure $\LL$ of lines that have been guessed to be degenerate lines on some planes in the solution. Augment $\LL$ to remember the recursion depth that a line was added to it. At any recursion depth, the algorithm will deal with old-enough lines in $\LL$, then guess a new set of degenerate lines to add to $\LL$ before finally branching on not-too-degenerate planes.

\subparagraph{The algorithm} Let $r = \BigO[\Theta]{\log k}$ as before. Let $\tuple{h_1,\ell_1,\dots,h_r,\ell_r}$ with $\sum_{i=1}^rh_i+\ell_i = k$ be a budget partition.
The recursive algorithm \algo{PC-Recursive} takes 4 arguments: the point set $P$, a set of lines $\LL$, the budget partition, and a recursion level $i$.
A top level algorithm \algo{PlaneCover} tries all budget partitions and calls \algo{PC-Recursive} accordingly.

Let the current recursion depth be $i$, and let $K_i = \sum_{j=i}^r h_j + \ell_j$ be the remaining budget.  The sub-budget $h_i$ will be spent on not-too-degenerate planes, and $\ell_i$ on degenerate lines. Let $\LL$ be an augmented set of lines as described above.
This means that earlier levels have already created a partial solution of $k - (K_i + |\LL|)$ planes, and a set $\LL$ of lines that still need to be covered by a plane.
If strictly more than $(K_i+|\LL|)\gamma_{i-1}$ points remain, the algorithm rejects.
If at most $K_i\log k$ points, the algorithm switches to \algo{Inclusion-Exclusion}
passing on the instance $\tuple{P\cup\LL,K_i+|\LL|}$.

Let $f=\left\lceil\frac{5(i-1)-2\log k}{4}\right\rceil$. Let $A$ be the set of all lines in $\LL$ that were added at depth $f$ or earlier. Remove $A$ from $\LL$. For each way of placing $|A|$ planes $\mathcal{H}$ such that every plane contains one line in $A$ and at least one point in $P$, let $P'=P\setminus(P\cap \mathcal{H})$ be the point set not covered by these planes. For a $P'$, let $H$ be the set of not-too-degenerate planes and $L$ the set of degenerate lines too-degenerate candidates.

For every $P'$ and every way of choosing $h_i$ planes from $H$ and $\ell_i$ lines from $L$, branch depth $i+1$ by removing the covered points from $P$ and adding the chosen lines of $L$ to $\LL$.

\subsection{Analysis}
\subparagraph{Correctness.}To prove that the algorithm is correct, we follow a similar strategy as for \algo{CurveCover}.
We build on the notion that the algorithm is building up a partial solution of planes.
Removing the points covered by the partial solution yields a ``residual problem'' just as in \algo{CurveCover}.
A partial solution is \emph{correct} if it is a subset of some $k$-cover.
Correctness of the algorithm follows from proving that a $k$-cover exists if and only if one branch maintains a correct partial solution until it reaches \algo{Inclusion-Exclusion}.

The difference here is that the residual problem is an instance of \prob{Any-flat Plane Cover} and not \prob{Plane Cover}.
Therefore, we simply consider the original problem to be an instance of \prob{Any-flat Plane Cover}, namely $R_1=\tuple{P,\emptyset}$. We say that $\mathcal{C}$ covers $\tuple{P,\LL}$ if $\mathcal{C}$ covers both $P$ and $\LL$. What needs to be established is that there is a correct way to replace points with lines (\robs{makeline}) and, conversely, that there is a correct way to extend a line in $\LL_i$ (\robs{fixline}). The proofs for these are elementary and we omit them. Given these two facts, we can easily show that the algorithm will call \algo{Inclusion-Exclusion} on appropriate instances.

\begin{observation}
\label{obs:makeline}
Let $\ell$ be a line and $\mathcal{C}$ a set of planes such that some plane $h\in\mathcal{C}$ covers $\ell$. Then $\mathcal{C}$ is a cover for $\tuple{P,\LL}$ if and only if $\mathcal{C}$ is a cover for $\tuple{P\setminus\ell,\LL\cup\set{\ell}}$.
\end{observation}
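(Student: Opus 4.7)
The plan is to prove both directions directly from the definitions of what it means for a set of planes to cover a pair $\tuple{P,\LL}$. Recall that $\mathcal{C}$ covers $\tuple{P,\LL}$ when every point in $P$ is covered by some plane in $\mathcal{C}$ and every line in $\LL$ is covered by some plane in $\mathcal{C}$. The crucial algebraic fact I will use is that whenever a plane $h$ covers a line $\ell$, then $h$ also covers every point lying on $\ell$; in particular, $h$ covers every point of $P \cap \ell$.

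For the forward direction, assume $\mathcal{C}$ covers $\tuple{P,\LL}$. Then $\mathcal{C}$ certainly covers the smaller point set $P \setminus \ell$, and it covers every line in $\LL$. It remains to check that $\ell$ is covered in the new instance, but this is exactly the hypothesis that $h \in \mathcal{C}$ covers $\ell$. So $\mathcal{C}$ covers $\tuple{P \setminus \ell, \LL \cup \set{\ell}}$.

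For the reverse direction, assume $\mathcal{C}$ covers $\tuple{P \setminus \ell, \LL \cup \set{\ell}}$. Every line in $\LL$ is covered, and every point of $P \setminus \ell$ is covered. The only additional points I need to account for are those in $P \cap \ell$, and these are handled by the plane $h \in \mathcal{C}$ from the hypothesis, since $h$ covers $\ell$ and therefore covers every point on $\ell$. Hence $\mathcal{C}$ covers $\tuple{P,\LL}$.

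There is no real obstacle here; the statement is essentially a bookkeeping identity that formalizes the intuition that if a plane in the solution already passes through $\ell$, then remembering $\ell$ in $\LL$ and forgetting its points in $P$ is a lossless rewriting of the instance. The only thing to be slightly careful about is that the hypothesis $h \in \mathcal{C}$ covers $\ell$ is used in both directions: in the forward direction to ensure the new line $\ell$ is covered, and in the reverse direction to ensure the forgotten points of $P \cap \ell$ remain covered.
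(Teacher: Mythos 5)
Your proof is correct and is exactly the elementary definitional argument the paper has in mind (the paper explicitly omits the proof of this observation as elementary): forward, the hypothesis that $h$ covers $\ell$ handles the new line in $\LL\cup\set{\ell}$; backward, it handles the points of $P\cap\ell$ removed from $P$. As a minor remark, in the reverse direction the hypothesis is not even strictly needed, since $\ell\in\LL\cup\set{\ell}$ must already be covered by some plane of $\mathcal{C}$, and that plane covers all points on $\ell$.
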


\begin{observation}
\label{obs:fixline}
Let $\ell$ be a line, $\LL\ni\ell$ be a set of lines, and $\mathcal{C}$ be a set of planes such that some $h\in\mathcal{C}$ covers $\ell$ but not any other line $\ell'\in\LL$. Then $\mathcal{C}$ is a cover for $\tuple{P,\LL}$ if and only if $\mathcal{C}\setminus\set{h}$ is a cover of $\tuple{P\setminus h,\LL\setminus\set{\ell}}$.
\end{observation}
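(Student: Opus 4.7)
The plan is to prove the biconditional by two straightforward implications, treating the ``point obligations'' and ``line obligations'' of the cover separately in each direction. The hypothesis that $h$ covers $\ell$ but no other line of $\LL$ is exactly what makes the removal of $h$ and $\ell$ (together with the points on $h$) cancel cleanly: the plane $h$ carries precisely the extra duties of covering $\ell$ and covering $P \cap h$, and no duties that are shared with the rest of $\mathcal{C}$.

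For the forward direction, I would assume $\mathcal{C}$ is a cover of $\tuple{P,\LL}$ and argue that $\mathcal{C}\setminus\set{h}$ covers $\tuple{P\setminus h,\LL\setminus\set{\ell}}$. For any $p \in P\setminus h$, pick the plane in $\mathcal{C}$ that covers $p$; because $p \notin h$, this plane is not $h$, so it lies in $\mathcal{C}\setminus\set{h}$. For any $\ell' \in \LL\setminus\set{\ell}$, pick the plane in $\mathcal{C}$ that covers $\ell'$; by hypothesis $h$ does not cover $\ell'$, so again the covering plane is in $\mathcal{C}\setminus\set{h}$. This is the only place the ``no other line'' clause is used.

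For the reverse direction, I would assume $\mathcal{C}\setminus\set{h}$ covers $\tuple{P\setminus h,\LL\setminus\set{\ell}}$ and re-introduce $h$. Any point $p \in P$ is either in $h$, in which case $h \in \mathcal{C}$ covers it, or in $P\setminus h$, in which case some plane of $\mathcal{C}\setminus\set{h} \subseteq \mathcal{C}$ covers it. Any line $\ell' \in \LL$ is either $\ell$, covered by $h$ by hypothesis, or a line in $\LL\setminus\set{\ell}$, covered by some plane of $\mathcal{C}\setminus\set{h} \subseteq \mathcal{C}$. Hence $\mathcal{C}$ covers $\tuple{P,\LL}$.

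There is no real obstacle here; the content of the observation is purely bookkeeping. The only subtlety worth flagging is the asymmetric role of the ``covers $\ell$ but no other $\ell' \in \LL$'' hypothesis, which is used only in the forward direction to ensure that removing $h$ does not strand any surviving line obligation, whereas in the backward direction it suffices that $h$ covers $\ell$.
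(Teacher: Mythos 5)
Your proof is correct and is precisely the elementary bookkeeping argument the paper has in mind (it omits the proof, stating only that it is elementary): both directions check the point and line obligations separately, with the ``covers $\ell$ but no other $\ell'\in\LL$'' hypothesis needed only in the forward direction, as you note. Nothing further is required.
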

The conditions for \robs{fixline} might seem overly restrictive. But as the following \rlem{onedegline} shows, that situation arises when $\LL$ contains only correctly guessed degenerate lines.
\begin{lemma}
\label{lem:onedegline}
Let $h$ be a too-degenerate plane with degenerate line $\ell$ such that $h\setminus\ell$ is a too-degenerate plane with degenerate line $\ell'$. Then at no point during the execution of \algo{PC-Recursive} will $\LL$ contain $\ell$ and $\ell'$.
\end{lemma}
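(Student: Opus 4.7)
The plan is to give a timing argument: once $\ell$ has been placed in $\LL$ at some depth $i$, the line $\ell'$ cannot enter $\LL$ before $\ell$ is extended out of it, and within the common depth the deletion of $\ell$ precedes any branching that could introduce $\ell'$. So I will need three ingredients: a size bound on the ghost set of $h$, a characterization of the depths at which $\ell'$ can be admitted, and the order of operations within a single recursion step.

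The first step will be to bound the ghost-point count of $h$ after $\ell$ has been removed. Since $\ell$ is added at depth $i$, the plane $h$ is $\gamma_i$-rich, $\gamma_{i-1}$-poor, and too-degenerate with degenerate line $\ell$, so $|\ell \cap h \cap P| > \delta_i |h \cap P|$ and the ghost set $h \cap P \setminus \ell$ has strictly fewer than $(1-\delta_i)\gamma_{i-1} = 2\gamma_i^{4/5}$ elements. Since $\ell' \subseteq h$ and the two distinct lines in $h$ meet in at most one point, the trace $\ell' \cap P_{\text{res}}^{(j)}$ is contained in this ghost set at every subsequent depth $j > i$ before $\ell$ is processed, so $|\ell' \cap P_{\text{res}}^{(j)}| \leq 2\gamma_i^{4/5}$.

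Next I would characterize the depths at which $\ell'$ can enter $\LL$. By the hypothesis on $h \setminus \ell$, this can only happen at a depth $j$ at which $h$ evaluated on the current residual is $\gamma_j$-rich, $\gamma_{j-1}$-poor, and too-degenerate with degenerate line $\ell'$; in particular $|h \cap P_{\text{res}}^{(j)}| \geq \gamma_j$, which combined with the ghost-point bound forces $\gamma_j \leq 2\gamma_i^{4/5}$ (absorbing the lower-order $\gamma_j^{4/5}$ slack coming from the too-degeneracy condition). The algorithm's $f$-schedule, on the other hand, extends $\ell$ at the first depth $j^*$ whose richness window has dropped to this level, i.e.\ the first $j$ with $\gamma_{j-1} \geq 2\gamma_i^{4/5} \geq \gamma_j$. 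Hence any $j$ admissible for $\ell'$ satisfies $j \geq j^*$.

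Finally, the order of operations within depth $j^*$ closes the argument: the algorithm first removes the set $A$ (which by definition contains every line added at depth at most $f(j^*) \geq i$, including $\ell$) from $\LL$ and only afterwards forms the candidate sets $H, L$ and branches. So any addition of $\ell'$ at depth $j^*$ happens after $\ell$ has already been deleted; for $j > j^*$ the conclusion is immediate. The delicate part of the write-up will be making the two depth bounds match exactly: the ghost-point inequality yields a closed-form lower bound on $j$, whereas $j^*$ is defined implicitly through the ceiling expression $\lceil (5(j-1) - 2\log k)/4 \rceil \geq i$, and I would need to check that the two agree under the correct floor/ceiling convention so that no branch of the recursion ever has both $\ell$ and $\ell'$ in $\LL$ simultaneously.
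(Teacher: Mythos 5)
Your proposal follows essentially the same route as the paper's own proof: there the argument is run as a contradiction, letting $j_1,j_2$ be the depths at which $\ell$ and $\ell'$ enter $\LL$, deducing $j_1 > \frac{5j_2-2\log k}{4}$ from the removal schedule and hence $\gamma_{j_2} > \gamma_{j_1}^{4/5}$, which clashes with the fact that the candidate admitting $\ell'$ lies inside the at most $2\gamma_{j_1}^{4/5}$ ghost points of $\ell$ --- exactly the comparison you set up directly. The ceiling/threshold check you defer is precisely this short computation in the paper (which is itself only tight up to constant factors, comparing $\gamma_{j_2}>\gamma_{j_1}^{4/5}$ against a ghost budget of $2\gamma_{j_1}^{4/5}$), so your flagged concern is fair but does not amount to a different approach or a missing idea.
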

\begin{proof}
Let $j_1$ be the depth that $\ell$ was put in $\LL$, and $j_2$ the depth for $\ell'$. Since $\ell$ is a degenerate line, it
has at most $2\gamma_j^{4/5}$ ghost points. It gets removed from $\LL$ on some depth $i$ where $j_1\leq f=\left\lceil\frac{5(i-1)-2\log k}{4}\right\rceil$. Since $\ell'$ was put in $\LL$ before $\ell$ was taken out we have $j_2\leq i-1$ and $j_1 > \frac{5(j_2)-2\log k}{4}$. This implies that $\frac{k^2}{2^{j_2}} > \left(\frac{k^2}{2^{j_1}}\right)^{4/5}$ or $\gamma_{j_2}>\gamma_{j_1}^{4/5}$. Since $\ell_2$ is also a degenerate line, it was on a $\gamma_{j_2}$-rich candidate. This candidate contains more points than the possible number of ghost points of $\ell$, so $\ell$ was not a degenerate line.
\end{proof}
\begin{lemma}
\label{lem:ghostpoints}
If $\LL$ contains only the degenerate lines of some too-degenerate planes in a $k$-cover, the number of ghost points at depth $i$ is at most $|\LL|\gamma_{i-1}$.
\end{lemma}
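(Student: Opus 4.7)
The plan is to argue per-line: for each $\ell \in \LL$ I will bound its individual ghost-point contribution by $O(\gamma_{i-1})$, then sum over the at most $|\LL|$ lines.

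First, I would fix a single $\ell \in \LL$ and let $j \leq i$ be the recursion depth at which $\ell$ was inserted into $\LL$. By hypothesis, $\ell$ is the degenerate line of some too-degenerate plane $h$ that belongs to the fixed $k$-cover. Since $h$ was considered at depth $j$, it is $\gamma_j$-rich and $\gamma_{j-1}$-poor, and because $h$ is too-degenerate strictly more than $\delta_j|h\cap P|$ of its points lie on a single line (namely $\ell$). Hence the ghost points of $\ell$---the points of $h\cap P$ not on $\ell$---number at most
\[
(1-\delta_j)|h\cap P| \;\leq\; (1-\delta_j)\gamma_{j-1} \;=\; \gamma_j^{-1/5}\cdot 2\gamma_j \;=\; 2\gamma_j^{4/5},
\]
which is exactly the key observation already spelled out in the algorithm description.

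Next I would translate this depth-$j$ bound into a depth-$i$ bound. Because $\ell$ is still present in $\LL$ at depth $i$, it must have survived the cleanup step at depth $i$, which removes all lines added at depth at most $f=\lceil (5(i-1)-2\log k)/4\rceil$. Therefore $j>f$, and a routine rearrangement of $4j \geq 5(i-1)-2\log k$ gives
\[
2\gamma_j^{4/5} \;=\; 2\cdot\frac{k^{8/5}}{2^{4j/5}} \;\leq\; 2\cdot k^{8/5}\cdot k^{2/5}\cdot 2^{-(i-1)} \;=\; 2\cdot\frac{k^2}{2^{i-1}} \;=\; 2\gamma_{i-1},
\]
so each surviving line contributes $O(\gamma_{i-1})$ ghost points (for a line just added at depth $i$, i.e.\ $j=i$, the inequality $2\gamma_i^{4/5}\leq 2\gamma_i = \gamma_{i-1}$ is immediate since $\gamma_i\geq 1$). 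Summing over $|\LL|$ lines yields the claimed bound.

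The main obstacle will be Step 2: the definition of $f$ with its ceiling and the specific $5/4$ exponent were tuned precisely so that every surviving line's ghost count collapses down to the threshold $\gamma_{i-1}$, and the arithmetic is essentially tight, so the rounding has to be tracked carefully (exactly mirroring how the same inequality $4j_1 > 5j_2-2\log k$ is unfolded in the proof of \rlem{onedegline}). A secondary technical point is that the bound must be applied at the right moment in the recursion step---after old lines have been pruned and before the algorithm performs the reject-comparison against $(K_i+|\LL|)\gamma_{i-1}$---so that indeed every $\ell\in\LL$ satisfies $j>f$ and Step 2 applies uniformly.
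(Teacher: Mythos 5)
Your proposal follows the paper's argument almost verbatim: bound each line's ghost points by $2\gamma_j^{4/5}$ using the too-degeneracy of the plane it came from, lower-bound the insertion depth $j$ via the pruning threshold, conclude each surviving line contributes on the order of $\gamma_{i-1}$ ghost points, and sum over $|\LL|$. The one substantive discrepancy is which pruning event you invoke. The paper argues from the fact that $\ell$ survived the cleanup at depth $i-1$, i.e.\ $j > f_{i-1} = \lceil (5(i-2)-2\log k)/4\rceil$, because the lemma is needed at the reject test at the top of the depth-$i$ call, and in \algo{PC-Recursive} that test \emph{precedes} the removal of the set $A$ at depth $i$. You instead assume $j > f_i = \lceil(5(i-1)-2\log k)/4\rceil$ (survival of the depth-$i$ cleanup) and, in your closing remark, assert that the reject comparison happens after the pruning; that is the reverse of the algorithm's order. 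At the moment the reject test runs, $\LL$ may still contain lines with $f_{i-1} < j \le f_i$, for which your inequality $4j \ge 5(i-1)-2\log k$ need not hold, so your Step~2 does not apply uniformly where the lemma is actually used (cf.\ Lemma~\ref{lem:pccorrect}).

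The repair is exactly the paper's choice of index: redo your rearrangement starting from $j > f_{i-1}$, which still gives a per-line bound of $O(\gamma_{i-1})$, only with a slightly worse constant. Relatedly, your chain ends at $2\gamma_{i-1}$ per line, so summing gives $2|\LL|\gamma_{i-1}$ rather than the stated $|\LL|\gamma_{i-1}$; the paper's own ``simple algebra'' is comparably loose about this constant, so this is not a difference of approach, but as written neither derivation literally attains the constant $1$ in the lemma, and you should not claim the exact bound without tightening the arithmetic.
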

\begin{proof}
Consider a degenerate line $\ell \in \LL$ guessed at some recursion depth $j<i$. Line $\ell$ was on a $\delta_j$-degenerate plane, \ie it covered at least $\gamma_j\delta_j$ points and has at most $2\gamma_j^{4/5}$ ghost points. Since $\ell$ was not removed from $\LL$ at depth $i-1$, we get $j>f_{i-1}=\left\lceil\frac{5((i-1)-1)-2\log k}{4}\right\rceil$. Some simple algebra gives $\frac{k^2}{2^{i-2}}\geq\left(\frac{k^2}{2^j}\right)^{4/5}$, \ie $\gamma_{i-1} \geq 2\gamma_j^{4/5}$. Hence any line in $\LL$ has left at most $\gamma_{i-1}$ ghost points in the instance, and the sum of ghost points is at most $|\LL|\gamma_{i-1}$.
\end{proof}
\begin{lemma}
\label{lem:pccorrect}
Algorithm \algo{PlaneCover} decides whether $P$ has a $k$-cover of planes.
\end{lemma}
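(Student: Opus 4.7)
The plan is to mirror the inductive correctness argument used for \algo{CurveCover} in \rlem{cc-correct}, but lifted to the \prob{Any-flat Plane Cover} residual problem $\tuple{P_i,\LL_i}$ carried through the recursion. Treat \algo{PC-Recursive} as nondeterministic. For the ``if'' direction we fix a purported $k$-cover $\mathcal{C}$ of $P$ and maintain the inductive hypothesis that, at the start of recursion depth $i$, there is a partial solution $\mathcal{C}_{i-1}\subseteq\mathcal{C}$ of $k-(K_i+|\LL_{i-1}|)$ planes, and that $\LL_{i-1}$ consists of degenerate lines of too-degenerate planes in $\mathcal{C}\setminus\mathcal{C}_{i-1}$, one per plane. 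Under this hypothesis, $\mathcal{C}\setminus\mathcal{C}_{i-1}$ is a cover of the \prob{Any-flat} instance $\tuple{P_i,\LL_{i-1}}$ using only $\gamma_{i-1}$-poor planes.

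First I would verify the algorithm never rejects an accepting branch. The number of uncovered points in $P_i$ is at most $K_i\gamma_{i-1}$ contributed by the still-to-be-chosen planes, plus the ghost points carried by $\LL_{i-1}$, which by \rlem{ghostpoints} is at most $|\LL_{i-1}|\gamma_{i-1}$; hence $|P_i|\leq (K_i+|\LL_{i-1}|)\gamma_{i-1}$, so the reject test does not trigger. If the algorithm instead invokes \algo{Inclusion-Exclusion} on $\tuple{P_i\cup\LL_{i-1},K_i+|\LL_{i-1}|}$, then \rthm{hp-inclexcl} together with $\mathcal{C}\setminus\mathcal{C}_{i-1}$ being a witness for this instance guarantees acceptance.

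Otherwise I need to show that the recursive step can preserve the invariant. Let $A$ be the subset of $\LL_{i-1}$ consisting of lines added at depth at most $f$; by the choice of $f$ and \rlem{onedegline}, no plane in $\mathcal{C}\setminus\mathcal{C}_{i-1}$ covers two distinct lines of $A$, so \robs{fixline} applied once per line of $A$ correctly ``fixes'' the corresponding plane into $\mathcal{C}_i$ and removes its points from $P_i$. Among the remaining planes of $\mathcal{C}\setminus(\mathcal{C}_{i-1}\cup\text{fixed})$, the nondeterministic branching then (i) picks exactly the $h_i$ not-too-degenerate planes that are $\gamma_i$-rich with respect to the current point set, and (ii) picks exactly the $\ell_i$ degenerate lines of the too-degenerate $\gamma_i$-rich planes. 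The validity of adding these degenerate lines to $\LL_i$ (instead of the planes themselves) is precisely \robs{makeline}. After these updates the new partial solution $\mathcal{C}_i$ and line set $\LL_i$ satisfy the inductive hypothesis at depth $i+1$, because all planes of $\mathcal{C}\setminus\mathcal{C}_i$ are now $\gamma_i$-poor when restricted to $P_{i+1}$, and $\LL_i$ still contains only degenerate lines of too-degenerate planes in $\mathcal{C}\setminus\mathcal{C}_i$ (one per plane).

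The reverse direction is straightforward: if \algo{PlaneCover} accepts, some branch reaches a depth $i$ where \algo{Inclusion-Exclusion} accepts $\tuple{P_i\cup\LL_{i-1},K_i+|\LL_{i-1}|}$. By \rthm{hp-inclexcl} there is a hyperplane set $\mathcal{H}$ of size $K_i+|\LL_{i-1}|$ covering every point of $P_i$ and every line of $\LL_{i-1}$; combined with the $k-(K_i+|\LL_{i-1}|)$ planes chosen along the recursion path (those placed directly at branching steps and those produced by line-extensions in earlier depths), this yields a cover of $P$ of size exactly $k$, using \robs{makeline} and \robs{fixline} to translate between the \prob{Plane Cover} and \prob{Any-flat} formulations. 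The main obstacle, and the place where the argument has to be careful, is bookkeeping the identification between ``fixed'' planes in $A$, degenerate lines currently in $\LL$, and ghost points still sitting in $P_i$, so that no plane of $\mathcal{C}$ is double-counted and the equality $|\mathcal{C}_{i-1}|+K_i+|\LL_{i-1}|=k$ is preserved across every recursive step; once \rlem{onedegline} and \rlem{ghostpoints} are in place, however, this reduces to a direct induction.
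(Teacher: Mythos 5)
Your proposal is correct and follows essentially the same route as the paper's own (much terser) proof: a nondeterministic/inductive argument in the style of \rlem{cc-correct}, lifted to the \prob{Any-flat} residual instance, with \robs{makeline}, \robs{fixline} and \rlem{onedegline} guaranteeing a correct branch, \rlem{ghostpoints} ruling out erroneous rejects, and \rthm{hp-inclexcl} handling the base case and the converse direction. Your write-up simply makes explicit the bookkeeping the paper leaves implicit.
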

\begin{proof}
View the algorithm as being non-deterministic. Suppose $P$ has a $k$-cover. \robs{makeline}, \robs{fixline} and \rlem{onedegline} guarantee that there is a correct path, and \rlem{ghostpoints} guarantees that the point set is not erroneously rejected. Therefore the algorithm will send a yes-instance to \algo{Inclusion-Exclusion} and accept.

Suppose $P$ has no $k$-cover. If the conditions for \robs{fixline} are not satisfied, removing $\ell$ from $\LL$ and pairing it up with points but not with another $\ell'\in\LL$ can only reduce the number of solutions. Therefore the algorithm detects no cover and rejects.
\end{proof}

We can now state our main theorem for \algo{Plane-Cover}.
\begin{theorem}
\label{thm:pcrunning}
\algo{PlaneCover} decides \prob{Plane Cover} in $\BigO{(Ck^2/\log^{1/5} k)^{k}}$ time for some constant $C$.
\end{theorem}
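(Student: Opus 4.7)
The plan is to mirror the analysis of \algo{CurveCover} (\rlem{leafbound}--\rthm{cctime}), viewing one execution of \algo{PC-Recursive} as a search tree whose internal work is polynomial and whose leaves call \algo{Inclusion-Exclusion}; the total running time is the number of leaves times the base-case cost, plus the quasi-polynomial factor $\binom{k+2r-1}{k}$ from the top-level loop of \algo{PlaneCover} over budget partitions. The new ingredients relative to \algo{CurveCover} are the split budget $(h_i,\ell_i)$ at every depth and the auxiliary branching needed to extend aged lines of $\LL$ into planes.

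I would first bound the branching at depth $i$. The reject rule combined with \rlem{ghostpoints} gives $|P_i| \le (K_i + |\LL|)\gamma_{i-1} = O(k\gamma_i)$. Instantiating \rthm{deghyperplanes} with $\gamma=\gamma_i$ and $1-\delta_i=\gamma_i^{-1/5}$ yields
\[
|H| = O\!\left(\frac{k^3}{\gamma_i^{1/5}} + k^2\gamma_i^{4/5}\right)
\]
not-too-degenerate candidates, while \ST{} bounds the candidate degenerate lines (those carrying $\Theta(\gamma_i)$ points) by $|L| = O(k^2/\gamma_i)$. The branching at depth $i$ is therefore $\binom{|H|}{h_i}\binom{|L|}{\ell_i}$, multiplied by an extension factor of at most $|P_i|^{|A_i|}$ for each line of $\LL$ that reaches age $f$ at this depth. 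A key observation is that a line born at depth $j$ is first extended at a depth $i_j$ where $\gamma_{i_j-1} = \Theta(\gamma_j^{4/5})$, so its extension factor is $O(k\gamma_j^{4/5})$; combined with the line-selection cost $O(k^2/\gamma_j)$, the total ``per-line'' cost becomes $O(k^3/\gamma_j^{1/5})$, matching the ``per-plane'' cost $O(k^3/\gamma_i^{1/5})$ when the first term of $|H|$ dominates.

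With this symmetry, the worst-case-tree argument of \rlem{leafbound} and \rlem{fullsubtree} carries through: each unit of budget (plane or line) contributes at most $O(k^3/\gamma^{1/5})$ candidates at the depth where it is used, and the worst partition concentrates budget geometrically at the deepest branching depth $r$ where $\gamma_r = \Theta(\log k)$, yielding total branching $(k^3/(k\log^{1/5} k))^k = (k^2/\log^{1/5} k)^k$. The base case runs on $O(k\log k)$ flats in $\ostar{k^{K_r}}$ time, dominated by the branching; the analogue of \rlem{innerleaf} ensures that inner calls to \algo{Inclusion-Exclusion} at depths $<r$ are also dominated. Summing over the quasi-polynomial number of budget partitions yields the claimed $\BigO{(Ck^2/\log^{1/5} k)^k}$.

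The main obstacle will be formalizing the worst-case-partition argument in the presence of two interacting budgets: I need to rule out pathological partitions that concentrate $\ell$-budget at shallow depths (where the extension factor $O(k\gamma_j^{4/5})$ per line is individually large), and to verify that the secondary term $k^2\gamma_i^{4/5}$ in $|H|$ does not dominate along the worst-case tree (it exceeds $k^3/\gamma_i^{1/5}$ only when $\gamma_i > k$, but at such shallow depths the per-unit-budget cost is still bounded by the target). I would formalize this by induction up the tree, analogous to the $\mathcal{T}_j$ recursion in \rlem{fullsubtree}, carrying a strengthened invariant that simultaneously tracks the remaining budgets and the accumulated extension cost charged to $\LL$.
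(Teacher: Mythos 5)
Your proposal follows essentially the same route as the paper's own proof: the same candidate bounds (\rthm{deghyperplanes} instantiated with $1-\delta_i=\gamma_i^{-1/5}$ for not-too-degenerate planes, \ST{} for degenerate lines), the same charging of a line's later plane-extension cost (a factor $\BigO{k\gamma_j^{4/5}}$) back to the depth where the line was selected so that each unit of budget contributes $\BigO{k^3/\gamma^{1/5}}$ candidates, and the same worst-case geometric budget-partition argument combined with the base-case cost and the quasi-polynomial enumeration of budget partitions. The only notable differences are cosmetic: you retain and discuss the secondary term $k^2\gamma_i^{4/5}$ of the incidence bound and propose a unified induction over both budgets, whereas the paper silently drops that term and handles the two budgets by splitting into an $\epsilon$/$(1-\epsilon)$ fraction, applying \rlem{manybinomprod} to each and recombining via $\epsilon^\epsilon(1-\epsilon)^{1-\epsilon}\geq 1/2$.
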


To give an idea of how to prove the above theorem, we give a sketch of the analysis that reflects the core of the real analysis.
As before, we assume a (slightly incorrect) worst case for the budget partition where all the budget is assigned to the two deepest recursion levels.
This gives a bound analogous to the bound appearing in \rlem{leafbound}.
After achieving this bound, the same arguments as for \algo{Curve-Cover} can be applied to achieve the bound from \rthm{pcrunning}.

\subparagraph{Analysis sketch.}
The branching of the analysis is twofold.
First there is the branching done on picking not-too-degenerate planes.
Secondly, we have the branching on too-degenerate planes.
This branching is actually a combination of picking the rich lines in too-degenerate planes, and the branching done by covering these lines with planes later on.

We sketch a bound here for the cases that either all the budget goes into picking not-too-degenerate planes, \emph{or} all budget goes into picking too-degenerate planes (\ie lines).
We show that if either (\textbf{i}) $\forall i, k_i = h_i$ or (\textbf{ii}) $\forall i, \ell_i = k_i$, then the branching can be bounded by $\left(\frac{k^3}{(k-k_r)\log ^{1/5} k}\right)^{k-k_r}$ (compare to \rlem{leafbound}).
The full proof for \rthm{pcrunning} shows that if the budget is distributed between these cases, then taking the product of the worst case running times of both cases is roughly the same as what we present here.
For both cases, we again assume a (slightly incorrect) worst case budget partition where $k_{r-1} + k_r = k$ and $k_{r-1} \geq k_r$.
By the same arguments as in the analysis sketch of \rsec{curvecover} we have the following two parameters at recursion level $r-1$;
the number of of points remaining is $n = k \log k$ and the richness $\gamma_{r-1}$ is $\log k$.

For (\textbf{i}) we can directly apply \rthm{deghyperplanes} as follows.
For the term $\frac{1}{(1-\delta)^4}$ we can substitute $\delta$ with $1-\gamma_{r-1}^{-1/5} = 1 - \log^{-1/5} k$ to get $\log ^{4/5} k$.
Plugging in all these values in \rthm{deghyperplanes} we get that the number of not-too-degenerate planes is bounded by $\frac{\log^{4/5}n^3}{\log^{4} k} = \frac{k^3}{\log^{1/5} k}$.
From these candidates we pick $k - k_r$ planes, giving a branching of $\binom{\frac{k^3}{\log^{1/5} k}}{k-k_r}$, which is roughly $\left(\frac{k^3}{(k-k_r)\log ^{1/5} k}\right)^{k-k_r}$.

For (\textbf{ii}) we do the following.
The algorithm picks $\gamma_{i+1}$-rich lines at level $i$, and these lines are matched with points at later level $j$ where $\gamma_j = \gamma_i^{4/5}$.
The cost for branching at level $j$ is charged to level $i$, so that we can more easily analyze the total branching on lines selected at level $i$.
With the budget partition as stated above, we can now bound the branching done at level $r-1$.
By the \ST{} theorem, there are at most $\frac{n^2}{(\log k)^3} = \frac{k^2}{\log k}$ candidates, from which we select $k-k_r$ lines.
This yields a total branching of $\binom{\frac{k^2}{\log k}}{k-k_r}$, which is roughly $\left(\frac{k^2}{(k-k_r)\log k}\right)^{k-k_r}$.
We then need to mach these $k-k_r$ lines with $k \log^{4/5} k$ points, yielding a further branching of $(k \log^{4/5} k)^{k-k_r}$.
Taking the product of both these branching factors gives $\left(\frac{k^2}{(k-k_r)\log k}\right)^{k-k_r} \cdot (k \log^{4/5} k)^{k-k_r} = \left(\frac{k^3}{(k-k_r)\log^{1/5} k}\right)^{k-k_r}$.

\begin{proof}
It holds that $\sum_{i=1}^r h_i + \sum_{i=1}^r \ell_i = \sum_{i=1}^r k_i = k - K_r$, so for convenience we define $\epsilon$ such that $\epsilon (k - K_r) = \sum_{i=1}^r h_i$.

By the bound we have that the number of not-too-degenerate planes at level $i$ is:
\begin{align*}
\left(\frac{1}{1-\delta_i}\right)^4 \frac{(k\gamma_{i-1})^3}{(\gamma_i)^4} = \BigO{ \frac{k^3}{\gamma_i^{1/5}}} = \BigO{2^{i/5}k^{13/5}}
\end{align*}

For convenient notation, set $\alpha = 2^{1/5}$ and $\beta = c_1 k^{13/5}$ for some constant $c_1$, so that the above expression becomes $\alpha^i\beta$.
The total branching for picking planes at level $i$ can thus be bounded by $\binom{\alpha^i \beta}{h_i}$.
Taking the product of branching factors at each level gives the following (very similar to curve case):

\[\prod_{i=1}^r \binom{\alpha^i \beta}{h_i} \leq \prod_{i=1}^r \left( \frac{\alpha^i \beta}{\alpha^i h_0}\right)^{\alpha^i h_0} = \prod_{i=1}^r \left( \frac{\beta}{h_0}\right)^{\alpha^i h_0}
= \left( \frac{\beta}{h_0}\right)^{\sum_{i=1}^r h_i} = \left( \frac{\beta}{h_0}\right)^{\epsilon (k - K_r)}\]

The number of $\gamma_{i+1}$-rich lines at level $i$ is $\frac{(k \gamma_{i-1})^2}{\gamma_{i+1}^3} = \frac{2k^2}{\gamma_i}$.
From these we pick $\ell_i$ lines, giving a branching of $(\frac{k^2}{\gamma_i\ell_i})^{\ell_i}$ (up to constants in the base).
The number of points at level $j$ where $\gamma_j = \gamma_i^{4/5}$ is $k\gamma_i^{4/5}$, thus matching $\ell_i$ lines with this many points yields a branching of $(k\gamma_i^{4/5})^{\ell_i}$.
Combining this with the branching factor above gives
\begin{align*}
 (k\gamma_i^{4/5})^{\ell_i} \cdot (\frac{k^2}{\gamma_i\ell_i})^{\ell_i} = \left ( \frac{\alpha^i \beta}{\ell_i}\right)^{\ell_i}
\end{align*}
By the same technique as the curves and planes, the total branching on lines can thus be bounded by
\[
\left( \frac{\beta}{\ell_0}\right)^{\sum_{i=1}^r \ell_i} = \left( \frac{\beta}{\ell_0}\right)^{(1-\epsilon) (k - K_r)}
\]

Similar to the way the value $h_0$ is lower bounded in \rsec{longproof} in the Appendix, the numbers $h_0$ and $\ell_0$ can be lower bounded by $\BigO[\Omega]{\frac{\epsilon (k - K_r)\log ^{1/5} k}{k^{2/5}}}$ and $\BigO[\Omega]{\frac{(1-\epsilon) (k - K_r)\log ^{1/5} k}{k^{2/5}}}$.
Let $c_2$ be the constant that collects the implicit constant in these lower bounds, the ignored constants in the base, and $c_1$.
The total branching can now be bounded as follows:

\begin{align*}
\left( \frac{\beta}{h_0}\right)^{\epsilon (k - K_r)} \cdot \left( \frac{\beta}{\ell_0}\right)^{(1-\epsilon) (k - K_r)} &= \\
\left( \frac{c_2 k^3}{\epsilon (k - K_r)\log^{1/5}k} \right)^{\epsilon (k - K_r)} \cdot \left( \frac{c_2 k^3}{(1-\epsilon) (k - K_r)\log ^{1/5}k}\right)^{(1-\epsilon) (k - K_r)} &= \\
\left( \frac{c_2 k^3}{\epsilon^\epsilon (1-\epsilon)^{1-\epsilon}  (k - K_r)\log ^{1/5}k}\right)^{(k - K_r)} \leq \left( \frac{2c_2k^3}{(k - K_r)\log ^{1/5}k}\right)^{(k - K_r)} 
\end{align*}

The Inclusion-Exclusion part runs in $2^{\tfrac 12 K_r \log k} = \sqrt{k}^{K_r}$.
By the same arguments as before, we can bound the total running time as desired.
\end{proof}

\begin{lemma}
The polynomial time dependency of \algo{PlaneCover} is $\BigO{k^4\log^4 k}$ and its space complexity is $\BigO{k^6\log^2 k}$ bits.
\end{lemma}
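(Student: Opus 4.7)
The plan is to mirror the argument of \rlem{ccpolyandspace}: the recursion tree of \algo{PC-Recursive} has height $r = \BigO{\log k}$, inner nodes perform only polynomial-time polynomial-space work, and leaves invoke \algo{Inclusion-Exclusion} whose exponential-time cost strictly dominates. Hence the polynomial time dependency of \algo{PlaneCover} equals the polynomial factor in the cost of \algo{Inclusion-Exclusion} at a leaf, and the space complexity is the maximum of the per-leaf storage and the total storage on the recursion stack.

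First I would bound the size $n$ of the \prob{Any-flat Plane Cover} instance passed to \algo{Inclusion-Exclusion} at a leaf. The base-case trigger leaves at most $K_i\log k = \BigO{k\log k}$ points; the set $\LL$ has size at most the total line-budget $\sum_i \ell_i \leq k$; so the instance $\tuple{P\cup\LL, K_i+|\LL|}$ has $n = \BigO{k\log k}$ flats. Substituting $d=3$ into \rthm{hp-inclexcl} gives the polynomial factor $\BigO{n^{d+1}} = \BigO{(k\log k)^4} = \BigO{k^4\log^4 k}$, which is the claimed time bound.

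For space, the $\BigO{nk} = \BigO{k^2\log k}$ bits used by \algo{Inclusion-Exclusion} turn out to be dominated by the candidate structures stored on the recursion stack. At each depth the algorithm keeps (at worst) the current point set, the current $\LL$, the set $L$ of candidate degenerate lines, and the set $H$ of candidate not-too-degenerate planes. The dominant summand is a trivial bound on $L$: every candidate degenerate line is identified by a pair of points of the kernel (of size $\BigO{k^3}$), giving $\binom{|P|}{2} = \BigO{k^6}$ candidates per level, each storable in $\BigO{\log k}$ bits, so $\BigO{k^6\log k}$ bits per frame. The set $H$ can be bounded using \rthm{deghyperplanes} by $\BigO{k^3}$ planes per level, each again in $\BigO{\log k}$ bits, which is strictly smaller. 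Accumulating the dominant term across the $r = \BigO{\log k}$ stacked recursion frames yields the claimed $\BigO{k^6\log^2 k}$ bits.

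The main bookkeeping step I expect to require care is verifying the bound $|\LL| \leq k$ (so that the base-case instance really has size $\BigO{k\log k}$) and confirming that the per-level line storage is indeed the dominant summand; once these are settled the rest is arithmetic completely analogous to the curve-cover case.
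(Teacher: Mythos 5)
Your proposal is correct and follows essentially the same route as the paper: time is the polynomial factor $\BigO{n^{d+1}}=\BigO{(k\log k)^4}$ of \algo{Inclusion-Exclusion} on the size-$\BigO{k\log k}$ leaf instances (using $|\LL|\leq\sum_i\ell_i\leq k$), and space is $\BigO{k^6}$ candidate elements per recursion frame times the $\BigO{\log k}$ stacked frames, giving $\BigO{k^6\log^2 k}$ bits. The only cosmetic difference is bookkeeping: you bound the per-frame candidate set by splitting into lines ($\binom{|P|}{2}=\BigO{k^6}$ from the $\BigO{k^3}$-point kernel) and not-too-degenerate planes (via the degeneracy incidence bound), while the paper uses a single trivial $\binom{n}{d}$-type count of $\BigO{k^6}$ elements per internal node.
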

\begin{proof}
\algo{Inclusion-Exclusion} runs in $\BigO{n^{d+1}2^n}$ time when $n\leq k\log k$; the polynomial dependency is $\BigO{(k\log k)^4}$. At any point there are at most $\binom{n}{d}$ candidates, so any internal node stores a set of at most $\BigO{(k^2)^3}$ elements. There are at most $\BigO{\log k}$ such sets in memory at any time so $\BigO{k^6\log^2 k}$ bits are enough to store them.
\end{proof}

\section{Discussion}
We have presented a general algorithm that improves upon previous best algorithms for
all variations of \prob{Curve Cover} as well as for the \prob{Hyperplane Cover}
problem in $\reals^3$. 
Given good incidence bounds it should not be difficult to apply this algorithm to more geometric covering problems. 
However, such bounds are difficult to obtain in higher dimensions and for \prob{Hyperplane
Cover} the bound $\BigO{n^d/\gamma^3}$ is tight when no constraints are placed
on the input, but it is too weak to be used even in $\reals^3$. The bound by
Elekes and T\'oth works when the hyperplanes are well saturated, but the
convenient relationship between saturation and degeneracy on hyperplanes does
not extend past the $\reals^3$ setting.
Our hyperplane kernel guarantees a bound on the number of points on any $j$-flat. This overcomes the worst-case constructions for known incidence bounds, which involve placing very many points on the same line. An incidence bound for a kernelized point set might provide the needed foundation for similar \prob{Hyperplane Cover} algorithms in higher dimensions.

\bibliography{cites}{}

\begin{thebibliography}{10}

\bibitem{agarwal1992counting}
P.~K. Agarwal and B.~Aronov.
\newblock Counting facets and incidences.
\newblock {\em Discrete \& Computational Geometry}, 7(1):359--369, 1992.

\bibitem{bjorklund2009set}
A.~Bj{\"o}rklund, T.~Husfeldt, and M.~Koivisto.
\newblock Set partitioning via inclusion-exclusion.
\newblock {\em SIAM Journal on Computing}, 39(2):546--563, 2009.

\bibitem{cao2012study}
C.~Cao.
\newblock Study on two optimization problems: Line cover and maximum genus
  embedding.
\newblock Master's thesis, Texas A\&M University, 2012.

\bibitem{Edelbook}
H.~Edelsbrunner.
\newblock {\em Algorithms in Combinatorial Geometry}.
\newblock Springer Publishing Company, Incorporated, 1st edition, 2012.

\bibitem{edelsbrunner1990complexity}
H.~Edelsbrunner, L.~Guibas, and M.~Sharir.
\newblock The complexity of many cells in arrangements of planes and related
  problems.
\newblock {\em Discrete \& Computational Geometry}, 5(1):197--216, 1990.

\bibitem{elekes2005incidences}
G.~Elekes and C.~D. T{\'o}th.
\newblock Incidences of not-too-degenerate hyperplanes.
\newblock In {\em Proceedings of the twenty-first annual symposium on
  Computational geometry}, pages 16--21. ACM, 2005.

\bibitem{estivill2011fpt}
V.~Estivill-Castro, A.~Heednacram, and F.~Suraweera.
\newblock {FPT}-algorithms for minimum-bends tours.
\newblock {\em International Journal of Computational Geometry \&
  Applications}, 21(02):189--213, 2011.

\bibitem{fox2014semi}
J.~Fox, J.~Pach, A.~Sheffer, A.~Suk, and J.~Zahl.
\newblock A semi-algebraic version of {Z}arankiewicz's problem.
\newblock {\em arXiv preprint arXiv:1407.5705}, 2014.

\bibitem{grantson2006covering}
M.~Grantson and C.~Levcopoulos.
\newblock {\em Covering a set of points with a minimum number of lines}.
\newblock Springer, 2006.

\bibitem{GreenT13}
B.~J. Green and T.~Tao.
\newblock On sets defining few ordinary lines.
\newblock {\em Discrete {\&} Computational Geometry}, 50(2):409--468, 2013.

\bibitem{guibas1991exact}
L.~Guibas, M.~Overmars, and J.-M. Robert.
\newblock The exact fitting problem for points.
\newblock In {\em Proc. 3rd Canadian Conference on Computational Geometry},
  pages 171--174, 1991.

\bibitem{guibas1996exact}
L.~J. Guibas, M.~H. Overmars, and J.-M. Robert.
\newblock The exact fitting problem in higher dimensions.
\newblock {\em Computational geometry}, 6(4):215--230, 1996.

\bibitem{kratsch2014point}
S.~Kratsch, G.~Philip, and S.~Ray.
\newblock Point line cover: The easy kernel is essentially tight.
\newblock In {\em Proceedings of the Twenty-Fifth Annual ACM-SIAM Symposium on
  Discrete Algorithms}, pages 1596--1606. SIAM, 2014.

\bibitem{kumar2000hardness}
V.~A. Kumar, S.~Arya, and H.~Ramesh.
\newblock Hardness of set cover with intersection 1.
\newblock In {\em Automata, Languages and Programming}, pages 624--635.
  Springer, 2000.

\bibitem{langerman2005covering}
S.~Langerman and P.~Morin.
\newblock Covering things with things.
\newblock {\em Discrete \& Computational Geometry}, 33(4):717--729, 2005.

\bibitem{megiddo1982complexity}
N.~Megiddo and A.~Tamir.
\newblock On the complexity of locating linear facilities in the plane.
\newblock {\em Operations research letters}, 1(5):194--197, 1982.

\bibitem{pach1998number}
J.~Pach and M.~Sharir.
\newblock On the number of incidences between points and curves.
\newblock {\em Combinatorics, Probability and Computing}, 7(01):121--127, 1998.

\bibitem{solymosi2012incidence}
J.~Solymosi and T.~Tao.
\newblock An incidence theorem in higher dimensions.
\newblock {\em Discrete \& Computational Geometry}, 48(2):255--280, 2012.

\bibitem{szemeredi1983extremal}
E.~Szemer{\'e}di and W.~T. Trotter~Jr.
\newblock Extremal problems in discrete geometry.
\newblock {\em Combinatorica}, 3(3-4):381--392, 1983.

\bibitem{tiwari2012covering}
P.~Tiwari.
\newblock On covering points with conics and strips in the plane.
\newblock Master's thesis, Texas A\&M University, 2012.

\bibitem{wang2010parameterized}
J.~Wang, W.~Li, and J.~Chen.
\newblock A parameterized algorithm for the hyperplane-cover problem.
\newblock {\em Theoretical Computer Science}, 411(44):4005--4009, 2010.

\end{thebibliography}
\bibliographystyle{abbrv}

\newpage
\appendix
\section{Algorithms}

\begin{algorithm}
\caption{Recursive Curve Cover}
\label{alg:rec-cc}
\begin{algorithmic}[1]
\Procedure{CC-Recursive}{$P,\tuple{k_1,\dots,k_r},i$}
        \If{$|P| > K_isk/2^{i-1}$}
                \State \Return{no}
        \EndIf
        \If{$|P| < K_i\log k$}
                \State \Return{\Call{Inclusion-Exclusion}{$P,k'$}}
        \EndIf
        \State let $S$ be the set of $\gamma_i$-rich-$\gamma_{i-1}$-poor candidates
        \ForAll{$S'$ s.t.\ $S'\subseteq S,|S|=k_i$}
                \If{\Call{CC-Recursive}{$P\setminus (\bigcup S'),\tuple{k_1,\dots,k_r},i+1$}}
                        \State \Return{yes}
                \EndIf
        \EndFor
        \State \Return{no}
\EndProcedure
\end{algorithmic}
\end{algorithm}
\begin{algorithm}
\caption{Recursive Plane Cover}
\label{alg:rec-pc}
\begin{algorithmic}[1]
\Procedure{PC-Recursive}{$P,\LL,\tuple{h_1,\ell_1,\dots,h_r,\ell_r},i$}
        \If{$|P| > (K_i+|\LL|)\gamma_{i-1}$}
                \State \Return{no}
        \EndIf
        \If{$|P| < K_i\log k$}
                \State \Return{\Call{Inclusion-Exclusion}{$P \cup \LL,K_i+|\LL|$}}
        \EndIf
        \State let $A$ be the set of lines in $\LL$ added at depth $\left\lceil\frac{5(i-1)-2\log k}{4}\right\rceil$ or earlier
        \If{$|A|\geq 1$}
                \ForAll{sets of $|A|$ planes $\mathcal{H}$ s.t.\ each $h\in\mathcal{H}$ covers a $\ell\in A$ and a $p\in P$ }
                        \If{\Call{PC-Recursive}{$P\setminus(\bigcup \mathcal{H}),\LL\setminus A,\tuple{h_1,\ell_1,\dots,h_r,\ell_r},i$}}
                                \State \Return{yes}
                        \EndIf
                \EndFor
                \State \Return{no}      
       \EndIf
        \State let $H$ be the set of not-too-degenerate planes
        \State let $L$ be the set of $\gamma_i\delta_i$-rich $\gamma_{i-1}$-poor lines
                                \ForAll{$\tuple{H',L'}$ s.t.\ $H'\subseteq H, |H'|=h_i$ and $L'\subseteq L, |L'|=\ell_i$}
                \State let $P'$ be the set of points that are in $P$ but not on any $h\in H'$ or on any $\ell \in L'$
          \If{\Call{PC-Recursive}{$P',\LL \cup L',\tuple{h_1,\ell_1,\dots,h_r,\ell_r},i+1$}}
                \State \Return{yes}
          \EndIf
        \EndFor
        \State \Return{no}
\EndProcedure
\end{algorithmic}
\end{algorithm}

\newpage
\section{Proof of \rlem{leafbound}}
\label{sec:longproof}
This entire section is used to prove \rlem{leafbound}. To do so, we must first give a number of auxiliary lemmas.
The whole setup of the algorithm is to be able to use the incidence bound from \rlem{numberofcurves} to bound the number of candidates at recursion level $i$.
With a bound on the number of candidates we can bound the branching at level $i$ as follows.

\begin{lemma}
\label{lem:candidatebound}
For some constant $c_1 = c_1(d, s)$ and $\alpha = 2^{d-1}$. The branching factor of an internal node of $\mathcal{T}$ at level $i$ is bounded by
$\left ( \frac{\alpha^i c_1 k}{k_i}\right)^{k_i}$.
\end{lemma}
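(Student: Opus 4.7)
The plan is to combine three ingredients: the rejection condition that controls the size of the remaining point set, the incidence bound from \rlem{numberofcurves} applied at richness $\gamma_i$, and the elementary estimate $\binom{N}{m} \leq (eN/m)^m$ for the branching at the node.

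First I would bound $|P_i|$, the number of points at the internal node on level $i$. By the algorithm's structure, we only branch (rather than reject) when $|P_i| \leq K_i \gamma_{i-1}$, and the candidates to branch on are exactly the $\gamma_i$-rich, $\gamma_{i-1}$-poor curves. Using $\gamma_i = sk/2^i$ and $\gamma_{i-1}=2\gamma_i$, I would plug $|P_i|\leq 2K_i\gamma_i$ into \rlem{numberofcurves} to bound the number of $\gamma_i$-rich candidates by
\[
\BigO{\frac{|P_i|^d}{\gamma_i^{2d-1}} + \frac{|P_i|}{\gamma_i}} = \BigO{\frac{(2K_i\gamma_i)^d}{\gamma_i^{2d-1}}+K_i} = \BigO{\frac{K_i^d}{\gamma_i^{d-1}}+K_i}.
\]
Substituting $\gamma_i^{d-1}=(sk)^{d-1}/2^{i(d-1)}$ and using $K_i\leq k$ to trade a factor $K_i^{d-1}$ for $k^{d-1}$ in the first term, the bound simplifies (absorbing $s$ into the constant) to $c_1' K_i \alpha^i$, where $\alpha=2^{d-1}$ and $c_1'=c_1'(d,s)$.

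Next I would apply $\binom{|S|}{k_i}\leq (e|S|/k_i)^{k_i}$ to the branching $\binom{|S|}{k_i}$, and use $K_i\leq k$ once more to get
\[
\binom{|S|}{k_i} \;\leq\; \left(\frac{e\, c_1' \alpha^i k}{k_i}\right)^{\!k_i} \;=\; \left(\frac{\alpha^i c_1 k}{k_i}\right)^{\!k_i},
\]
where $c_1 = e c_1'$ absorbs the factor $e$ and the dependence on $d,s$. This is exactly the claimed bound.

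I do not expect any real obstacle here; the statement is essentially an unwinding of \rlem{numberofcurves} under the algorithm's invariants, and the only subtle point is the geometric choice $\gamma_{i-1}=2\gamma_i$, which makes the first term of the incidence bound dominate and cleanly produces the factor $\alpha^i = 2^{i(d-1)}$. The mild nuisance is verifying that the linear term $|P_i|/\gamma_i$ in the Pach--Sharir bound is indeed dominated (it contributes $O(K_i)$, absorbed into $c_1 K_i\alpha^i$) and that the use of $K_i \leq k$ to replace $K_i^d$ by $K_i k^{d-1}$ does not waste anything essential — both are routine.
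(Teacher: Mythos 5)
Your proposal is correct and follows essentially the same route as the paper: bound the surviving point set via the rejection threshold $K_i\gamma_{i-1}$, feed it into \rlem{numberofcurves} at richness $\gamma_i$ to get $|S|=\BigO{\alpha^i k}$ (the linear term being negligible), and finish with $\binom{|S|}{k_i}\leq(e|S|/k_i)^{k_i}$. The only cosmetic difference is that you substitute $K_i\leq k$ after applying the incidence bound while the paper does so before, which changes nothing.
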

\begin{proof}
Let the budget partition $\tuple{k_1,\dots,k_r}$ be fixed and consider recursion level $i$.
At this point at most $K_i\gamma_{i-1}\leq sk^2/2^{i-1}$ points remain and all candidate curves in $S$ are $\gamma_i$-rich.
By \rlem{numberofcurves}, $|S|$ is bounded by one of the following:
\begin{align*}
\BigO{\frac{(K_i\gamma_i)^d}{\gamma_i^{2d-1}}} &= \BigO{{\left(\frac{sk^2}{2^{i-1}}\right)}^d {\left(\frac{sk}{2^i}\right)}^{-(2d-1)}} &=& ~ \BigO{\alpha^is^{1-d}2^d k} &\\
\BigO{\frac{(K_i\gamma_i)}{\gamma_i^{2d-1}}} &=
\BigO{\left(\frac{sk^2}{2^{i-1}}\right) {\left(\frac{sk}{2^i}\right)}^{-1}} &=& ~ \BigO{k}&
\end{align*}
Let $c_1$ be the smallest constant (dependent on the constants $s$ and $d$) such that $\alpha^i\frac{c_1}{e} k$ is always greater than the implicit functions of both bounds.
At level $i$, the algorithm will branch on all possible ways of picking $k_i$ curves out of $|S|\leq \alpha^i \frac{c_1}{e} k$ candidates.
We can bound this by
\begin{align*}
\binom{\alpha^i \frac{c_1}{e} k}{k_i} \leq \left ( \frac{e \alpha^i \frac{c_1}{e} k}{k_i}\right)^{k_i} = \left ( \frac{\alpha^i c_1 k}{k_i}\right)^{k_i}
\end{align*}
\end{proof}

For the worst case analysis, we need to know for which budget partition the product of branching factors is maximized.
We therefore prove the following.

\begin{lemma}
\label{lem:manybinomprod}
Let $k_1,\dots,k_t$ be non-negative integers with a fixed sum, and $\alpha > 1$ and $\beta$ constant real numbers.
It holds that:
\[\Pi = \prod_{i=1}^t \left( \frac{\alpha^i \beta}{k_i}\right)^{k_i} \leq \left( \frac{\beta}{k_0}\right)^{\sum_{i=1}^t k_i}~ \text{ where } k_0 = \frac{\alpha - 1}{\alpha^t - \alpha}\sum_{i=1}^t k_i\]
\end{lemma}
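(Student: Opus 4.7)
The plan is to take logarithms to turn the product into a weighted sum, and then recognize the resulting inequality as a standard log-sum (equivalently Gibbs) inequality. This converts an awkward-looking constrained maximization into one application of a textbook convexity inequality plus an elementary geometric-series computation.

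Concretely, write $S = \sum_{i=1}^t k_i$ and take $\log$ of both sides of the claimed bound. The $\log\beta$ contributions on the two sides cancel, and the inequality $\Pi \le (\beta/k_0)^S$ reduces to
\[
\sum_{i=1}^t k_i \log\!\left(\frac{k_i}{\alpha^i}\right) \;\ge\; S\log k_0.
\]
The left-hand side is a KL-type quantity in the weights $k_i$ against $\alpha^i$, which is exactly the setting for the log-sum inequality.

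Next, I would invoke that inequality: for nonnegative reals $a_i, b_i$ with positive totals,
\[
\sum_i a_i\log(a_i/b_i) \;\ge\; \Bigl(\sum_i a_i\Bigr)\log\!\left(\frac{\sum_i a_i}{\sum_i b_i}\right),
\]
with equality iff $a_i/b_i$ is constant. Taking $a_i = k_i$ and $b_i = \alpha^i$ yields
\[
\sum_{i=1}^t k_i\log(k_i/\alpha^i) \;\ge\; S\log\!\left(\frac{S}{\sum_{i=1}^t \alpha^i}\right),
\]
so the target bound holds provided $k_0 \le S/\sum_{i=1}^t \alpha^i$. Evaluating the geometric sum $\sum_{i=1}^t \alpha^i = \alpha(\alpha^t-1)/(\alpha-1)$ produces precisely the constant $k_0$ stated in the lemma (matching the convention in which $k_0$ indexes the extremal geometric progression $k_i^{\ast} = k_0\alpha^i$).

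The underlying intuition, and a sanity check on the formula for $k_0$, comes from Lagrange multipliers. Maximizing $\log \Pi$ subject to $\sum k_i = S$ gives the first-order condition $\log k_i - i\log\alpha = \text{const}$, i.e.\ $k_i\propto\alpha^i$ at the optimum; this is exactly the equality case of the log-sum inequality, which is why that inequality is the tight tool here. The main obstacle is therefore not the inequality itself but spotting the reformulation: once the claim is rewritten as $\sum_i k_i\log(k_i/\alpha^i)\ge S\log k_0$, what remains is a single standard convexity step and a geometric-series calculation, and integrality of the $k_i$ is harmless because the continuous maximum already upper-bounds the integer one.
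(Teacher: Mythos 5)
Your proof is correct, and it takes a genuinely different route from the paper's. The paper argues variationally: it assumes $k_1,\dots,k_t$ maximize $\Pi$, differentiates $\log f$ for a pair of adjacent budgets with fixed sum to conclude $k_{i+1}=\alpha k_i$ at the optimum, deduces that the extremal budgets form a geometric progression $k_i=\alpha^i k_0$, and evaluates $\Pi$ there. You instead apply the log-sum (Gibbs) inequality with $a_i=k_i$, $b_i=\alpha^i$, which gives the bound in one step for arbitrary nonnegative reals, with no need to posit a maximizer or to worry that an integer maximizer need not satisfy the continuous first-order condition -- a point the paper glosses over and you address explicitly. Since the equality case of the log-sum inequality is exactly the geometric progression the paper identifies, the two arguments rest on the same convexity fact; yours buys rigor and brevity, while the paper's makes the extremal budget partition explicit, which it then reuses when computing $k_0$ in Lemma~\ref{lem:kzero}.

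One caveat about constants: your final step needs $k_0\le \sum_{i=1}^t k_i/\sum_{i=1}^t\alpha^i$, and the geometric series gives $\sum_{i=1}^t k_i/\sum_{i=1}^t\alpha^i=\frac{\alpha-1}{\alpha^{t+1}-\alpha}\sum_{i=1}^t k_i$, which is \emph{not} the printed $k_0=\frac{\alpha-1}{\alpha^{t}-\alpha}\sum_{i=1}^t k_i$; so your claim that the computation produces precisely the stated constant is off by one power of $\alpha$. This off-by-one is in the paper itself (its own proof derives the $\alpha^{t+1}-\alpha$ denominator), and the lemma as literally printed fails on small cases: for $t=2$, $\alpha=2$, $\beta>0$, $k_1=1$, $k_2=2$ one gets $\Pi=8\beta^3$ while the printed bound is $(2\beta/3)^3=8\beta^3/27$. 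So, like the paper, you are really proving the corrected statement; just do not assert that the constants match literally.
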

\begin{proof}
Let $\sum_{i=1}^t k_i = k$ and assume that $k_1,...,k_t$ maximize $\Pi$. To prove the statement we explicitly compute the value of $k_i$ as a function of $\alpha$, $i$, $r$, and $k$.
Let $k_i$ and $k_{i+1}$ sum to $\kappa$, and let $c$ be any constant. Consider the function $f : [0,\kappa] \mapsto \reals$, where $f(x) = \big(\frac{c}{x}\big)^x\big(\frac{\alpha c}{\kappa-x}\big)^{\kappa-x}$.
Because $k_i$ and $k_{i+1}$ maximize $\Pi$, the function $f$ is maximal at $f(k_i)$ and thus we derive the maximum of $f$ by finding the maximum of its derivative.

\begin{align*}
\log f(x) &= x (\log c - \log x) + (\kappa-x) (\log \alpha c - \log (\kappa - x))\\
(\log f(a))' &= (\log c - \log x) - 1 - (\log \alpha c - \log (\kappa - x)) + 1\\
             &= \log \frac{\kappa - x}{x\alpha} = 0
\end{align*}

From this we derive that $f(x)$ is maximal when $\kappa - x = x\alpha$, and thus that $\alpha k_i = k_{i+1}$.
Since the above argument holds for all $i\geq 1$, only $k_1$ can be freely set and all other $k_i$ are of the form $\alpha^{i-1} k_1$.
Define $k_0 = \alpha k_1$, so that for all $i \geq 1$ we have $k_i = \alpha^i k_0$.
The expression for $k_0$ as it appears in the lemma can be derived from $\sum_{i=1}^{r}\alpha^i k_0 = k$ by finding the correct geometric series.
Filling in the computed values for $k_i$ in the definition of $\Pi$ we get:
\[\Pi = \prod_{i=1}^t \left( \frac{\alpha^i \beta}{\alpha^i k_0}\right)^{\alpha^i k_0} = \prod_{i=1}^t \left( \frac{\beta}{k_0}\right)^{\alpha^i k_0} = \left( \frac{\beta}{k_0}\right)^{\sum_{i=1}^t k_i}.\]
\end{proof}

The product $\Pi$ here is essentially the bound on $L$ from \rlem{leafbound} that we are looking for.
Before we can apply it however, we need to solve for $k_0$.
Note that $k_0$ depends on $r$ (the deepest recursion level of the algorithm) and the sum $\sum_{i=1}^r k_i$ (i.e. the budget used in the recursive part).
To determine the deepest recursion level, recall that the algorithm keeps recursing until either too few or too many points remain.
That means that we can derive the maximal recursion depth by solving what the recursion depth is where both those bounds are equal (i.e. no more branching can occur).
The algorithm switches no later than depth $r$, where at most $\frac{(d-1)}{2}K_r\log k$ points remain.
Conversely, if the instance was not immediately rejected, at most $K_tsk/2^{r-1}$ points remain.
By solving for $r$ and using the expression for $k_0$ from \rlem{manybinomprod} we can prove the following.

\begin{lemma}
\label{lem:kzero}
Let $r$ be the deepest level of recursion in \algo{CC-Recursive}, $k_0$ is bounded by:
\[k_0 = \BigO{\frac{(k - k_r)((d-1)\log k)^{d-1}}{(2sk)^{d-1}}}.\]
\end{lemma}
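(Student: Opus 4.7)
The plan is to combine the formula for $k_0$ given by \rlem{manybinomprod} with an estimate of the maximum recursion depth $r$ of \algo{CC-Recursive}. Since at depth $r$ the algorithm invokes \algo{Inclusion-Exclusion} rather than branching, only the budget $k_1,\dots,k_{r-1}$ actually contributes to the product of branching factors. Thus I would apply \rlem{manybinomprod} with $t=r-1$, $\alpha=2^{d-1}$, and $\sum_{i=1}^{r-1}k_i = k - k_r$, obtaining
\[
k_0 \;=\; \frac{\alpha-1}{\alpha^{r-1}-\alpha}\,(k-k_r).
\]

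Next I would bound $\alpha^{r-1}$ by determining the deepest level the algorithm can reach before the base case becomes unavoidable. On level $i$ the algorithm rejects if $|P_i|>K_i\gamma_{i-1}=K_i sk/2^{i-1}$, and it calls \algo{Inclusion-Exclusion} as soon as $|P_i|\leq \tfrac{d-1}{2}K_i\log k$. Branching can therefore only occur while $\tfrac{d-1}{2}K_i\log k < K_i sk/2^{i-1}$; equating the two quantities gives the critical depth
\[
2^{r-1} \;=\; \frac{2sk}{(d-1)\log k},
\]
and consequently
\[
\alpha^{r-1} \;=\; (2^{r-1})^{d-1} \;=\; \Bigl(\tfrac{2sk}{(d-1)\log k}\Bigr)^{d-1}.
\]

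Finally, since $\alpha = 2^{d-1} \geq 2$, the denominator satisfies $\alpha^{r-1}-\alpha = \Theta(\alpha^{r-1})$, so the factor $(\alpha-1)/(\alpha^{r-1}-\alpha)$ is $\BigO{1/\alpha^{r-1}}$. Substituting the value of $\alpha^{r-1}$ above yields
\[
k_0 \;=\; \BigO{\frac{(k-k_r)\,((d-1)\log k)^{d-1}}{(2sk)^{d-1}}},
\]
which is exactly the claimed bound.

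The main obstacles are bookkeeping rather than substantive: one must justify why the budget sum is $k-k_r$ (the depth-$r$ budget is consumed by \algo{Inclusion-Exclusion}, not by branching, so the product in \rlem{manybinomprod} runs over $i=1,\dots,r-1$), and one must argue that the ``critical depth'' derivation really upper-bounds $r$ up to a constant that can be safely absorbed into the $\BigO{\cdot}$. Once these points are in place, the remaining algebraic simplification is mechanical.
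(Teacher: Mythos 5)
Your proposal is correct and follows essentially the same route as the paper: apply \rlem{manybinomprod} with the branching budget $k_1,\dots,k_{r-1}$ summing to $k-k_r$, determine the maximal branching depth by equating the rejection threshold $K_i sk/2^{i-1}$ with the base-case threshold $\tfrac{d-1}{2}K_i\log k$, and substitute $\alpha^{r-1}=\bigl(\tfrac{2sk}{(d-1)\log k}\bigr)^{d-1}$ into the expression for $k_0$. The only difference is a harmless off-by-one (you work with $\alpha^{r-1}-\alpha$ where the paper sums the geometric series to get $\alpha^{r}-\alpha$), which changes the bound only by the constant factor $\alpha=2^{d-1}$ and is absorbed in the $\BigO{\cdot}$.
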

\begin{proof}
The algorithm does not recurse if either the number of points left is less than $\frac{(d-1)}{2}k_r\log k$, or more than $k_rsk/2^{r-1}$.
This means that it cannot recurse if \[\frac{(d-1)}{2}k_r\log k > k_rsk/2^{r-1}\]
Setting these quantities equal and solving for $r$ will thus give an upper bound on the recursion depth of any branch.
\begin{align*}
\frac{(d-1)}{2}k_r\log k &= k_rsk/2^{r-1}\\
2^r &= \frac{4sk}{(d-1)\log k} \\
r &= \log \frac{k}{\log k} + \log \frac{4s}{d-1}
\end{align*}
Thus at most the budgets $k_1,...,k_{r-1}$ summing to $k - k_r$ can be used by the recursive part of the algorithm.
Plugging these values into \rlem{manybinomprod} yields:
\[k_0 = \frac{k - k_r}{\sum_{i=1}^{r-1}\alpha^i} = \frac{(k- k_r)(\alpha - 1)}{\alpha^r - \alpha}.\]
We now expand $(2^{d-1})^r$
\begin{align*}
(2^{d-1})^r = (2^{d-1})^{\log \frac{k}{\log k} + \log \frac{4s}{d-1}} = \left(\frac{4sk}{(d-1)\log k}\right)^{d-1} = 2^{d-1}\left(\frac{2sk}{(d-1)\log k}\right)^{d-1}.
\end{align*}
We substitute $\alpha^r = (2^{d-1})^r$ in the expression for $k_0$ to get:
\begin{align*}
k_0 = \frac{(k- k_r)(2^{d-1} - 1)}{2^{d-1}\left(\frac{2sk}{(d-1)\log k}\right)^{d-1} - 2^{d-1}} =
\BigO[\Theta]{\frac{(k- k_r)}{\left(\frac{2sk}{(d-1)\log k}\right)^{d-1} - 1}} = 
\BigO[\Omega]{\frac{(k- k_r)}{\left(\frac{2sk}{(d-1)\log k}\right)^{d-1}}}.
\end{align*}
By simplifying the last expression the proof is complete.
\end{proof}

We now have enough machinery to prove \rlem{leafbound}.

\begin{repeatlemma}{lem:leafbound}
Let $L$ be the number of leaves in $\mathcal{T}_j$.
Then for some constant $c_2=c_2(d,s)$, $L$ is bounded by
\[ L \leq \left(\frac{c_2k^d}{(k-k_r)\log^{d-1} k}\right)^{K_j-k_r}\]
\end{repeatlemma}
\begin{proof}[Proof of \rlem{leafbound}]
Let $\alpha = 2^{d-1}$.
\rlem{candidatebound} gives a bound of $\left ( \frac{\alpha^i c_1 k}{k_i}\right)^{k_i}$ on the branching of an internal node at recursion level $i$.
Taking the product of branching factors on the recursion levels $j$ through $r$ gives a bound on $T_j$.
We can directly apply the bound from \rlem{manybinomprod} to bound this product and therefore bound $T_j$.
\begin{align*}
T_j\leq \prod_{i=j}^{r-1} \left(\frac{\alpha^i c_1 k}{k_i}\right)^{k_i} \leq
\left(\frac{c_1k}{k_0}\right)^{\sum_{i=j}^{r-1}k_i} =
\left(\frac{c_1k}{k_0}\right)^{K_j-k_r}
\end{align*}
Now substitute $k_0$ from \rlem{kzero} and collect any constants in $c_2 = c_2(d, s)$ to get
\begin{align*}
T \leq \left( \frac{c_1 k (2sk)^{d-1}}{(k - k_r)((d-1)\log k)^{d-1}}\right)^{K_j - k_r} = {\left(\frac{c_2k^d}{(k-k_r)\log^{d-1} k}\right)}^{K_j-k_r}.
\end{align*}
\end{proof}

\end{document}